\newcommand{\Lbx}{\mathbf{Lb}^{\boldsymbol{*}}}
\newcommand{\EL}{\boldsymbol{!} \mathbf{MALC}^{\boldsymbol{*}}}
\newcommand{\ELM}{\boldsymbol{!} \mathbf{L}^{\boldsymbol{*}}}
\newcommand{\LLsM}{\boldsymbol{!}^{\mathbf{r}} \mathbf{L}^{\boldsymbol{*}}}
\newcommand{\LLs}{\boldsymbol{!}^{\mathbf{r}} \mathbf{MALC}^{\boldsymbol{*}}}
\newcommand{\PMod}{\langle\rangle}
\newcommand{\NMod}{[]^{-1}}
\newcommand{\BS}{\mathop{\backslash}}
\newcommand{\SL}{\mathop{/}}
\newcommand{\U}{\mathbf{1}}
\newcommand{\One}{\U}
\newcommand{\LU}{\mathbf{L}_{\U}}
\newcommand{\Ld}{\mathbf{L}^{\boldsymbol{*}}(/)}
\newcommand{\yd}{\to}
\newcommand{\sysA}{\boldsymbol{!}_{\mathbf{b}}^{\mathbf{2015}} \mathbf{MALC^{\boldsymbol{*}}b}\mathrm{(st)}}
\newcommand{\sysB}{\boldsymbol{!}_{\mathbf{b}}^{\mathbf{2018}} \mathbf{MALC^{\boldsymbol{*}}b}\mathrm{(st)}}
\newcommand{\sysAa}{\sysA'}
\newcommand{\sysBa}{\sysB'}
\newcommand{\sysBb}{\sysB''}
\newcommand{\sysBr}{\boldsymbol{!}_{\mathbf{b}}^{\mathbf{2018}} \mathbf{MALCb}\mathrm{(st)}}
\newcommand{\sysBar}{\sysBr'}
\newcommand{\sysAfl}{\boldsymbol{!}_{\mathbf{b}}^{\mathbf{2015}} \mathbf{MALC^{\boldsymbol{*}}b}}
\newcommand{\sysBfl}{\boldsymbol{!}_{\mathbf{b}}^{\mathbf{2018}} \mathbf{MALC^{\boldsymbol{*}}b}}
\newcommand{\sysBrfl}{\boldsymbol{!}_{\mathbf{b}}^{\mathbf{2018}} \mathbf{MALCb}}
\newcommand{\LsysAfl}{\boldsymbol{!}_{\mathbf{b}}^{\mathbf{2015}} \mathbf{L^{\boldsymbol{*}}b}}
\newcommand{\LsysBfl}{\boldsymbol{!}_{\mathbf{b}}^{\mathbf{2018}} \mathbf{L^{\boldsymbol{*}}b}}
\newcommand{\LsysBrfl}{\boldsymbol{!}_{\mathbf{b}}^{\mathbf{2018}} \mathbf{Lb}}
\newcommand{\LsysA}{\boldsymbol{!}_{\mathbf{b}}^{\mathbf{2015}} \mathbf{L^{\boldsymbol{*}}b}\mathrm{(st)}}
\newcommand{\LsysB}{\boldsymbol{!}_{\mathbf{b}}^{\mathbf{2018}} \mathbf{L^{\boldsymbol{*}}b}\mathrm{(st)}}
\newcommand{\LsysAa}{\LsysA'}
\newcommand{\LsysBa}{\LsysB'}
\newcommand{\LsysBr}{\boldsymbol{!}_{\mathbf{b}}^{\mathbf{2018}} \mathbf{Lb}\mathrm{(st)}}
\newcommand{\LsysBar}{\LsysBr'}
\newcommand{\MALCsb}{\mathbf{MALC}^{\boldsymbol{*}}\mathbf{b}}
\newcommand{\mconj}{\cdot}
\newcommand{\Lc}{\mathcal{L}}
\newcommand{\Ac}{\mathcal{A}}
\newcommand{\pf}{\mathfrak{p}}
\newcommand{\twp}{\overline{w}^{\pf}}
\newcommand{\typ}{\overline{y}^{\pf}}
\newcommand{\tzp}{\overline{z}^{\pf}}
\newcommand{\bap}{\mathbf{a}^{\pf}}
\newcommand{\bbp}{\mathbf{b}^{\pf}}
\newcommand{\bcp}{\mathbf{c}^{\pf}}
\newcommand{\bdp}{\mathbf{d}^{\pf}}
\newcommand{\bep}{\mathbf{e}^{\pf}}
\newcommand{\bfp}{\mathbf{f}^{\pf}}
\newcommand{\CONTR}{\mathrm{contr}}
\newcommand{\CUT}{\mathrm{cut}}
\newcommand{\SMALC}{\mathbf{SMALC}_\Sigma}
\newcommand{\MALC}{\mathbf{MALC}^{\boldsymbol{*}}}
\newcommand{\Var}{\mathrm{Var}}
\newcommand{\Gc}{\mathcal{G}}
\newcommand{\Af}{\mathfrak{A}}
\newcommand{\Bc}{\mathcal{B}}
\newcommand{\STS}{\mathcal{S}}
\newcommand{\Der}{\mathscr{D}}
\newcommand{\DerL}{\Der_{\mathrm{left}}}
\newcommand{\DerR}{\Der_{\mathrm{right}}}
\begin{document}

\title{The Multiplicative-Additive Lambek Calculus with Subexponential and Bracket
Modalities}

\titlerunning{MALC with Subexponential and Brackets}

\author{Max Kanovich          \and        Stepan Kuznetsov \and         Andre Scedrov
}


\institute{
           M. Kanovich \at
              University College London,\\\hspace*{1em}
              Gowers St., London, U.K.\\
              and National Research University Higher School of Economics,\\\hspace*{1em}
              11 Pokrovsky Blvd., Moscow, Russia\\
              \email{m.kanovich@ucl.ac.uk}
           \and
           S. Kuznetsov \at
              Steklov Mathematical Institute of the Russian Academy of Sciences,\\\hspace*{1em}
              8 Gubkina St., Moscow, Russia\\
              and National Research University Higher School of Economics,\\\hspace*{1em}
              11 Pokrovsky Blvd., Moscow, Russia\\
              \email{sk@mi-ras.ru}
           \and
           A. Scedrov \at
              University of Pennsylvania, \\\hspace*{1em}
              209 South 33rd St., Philadelphia, PA, U.S.A. \\
              and National Research University Higher School of Economics (until July 2020),\\\hspace*{1em}
              11 Pokrovsky Blvd., Moscow, Russia\\
              \email{scedrov@math.upenn.edu}
}

\date{Received: date / Accepted: date}

\maketitle

\begin{abstract}
We give a proof-theoretic and algorithmic complexity analysis for systems introduced
by Morrill to serve as the core of the CatLog categorial grammar parser. We consider two recent
versions of Morrill's calculi, and focus on their fragments including multiplicative (Lambek)
connectives, additive conjunction and disjunction, brackets and bracket modalities, and the
${!}$ subexponential modality. For both systems, we resolve issues connected with the cut rule
and provide necessary modifications, after which we prove admissibility of cut (cut elimination theorem).
We also prove algorithmic undecidability for both calculi, and show that categorial grammars based
on them can generate arbitrary recursively enumerable languages.


\keywords{Lambek calculus \and categorial grammars \and subexponential modalities \and bracket modalities \and undecidability \and cut elimination}
\subclass{03B47 \and 03F52 \and 03F05 \and 03D03 \and 03D25}
\end{abstract}

\section{Linguistic Introduction}

The Lambek calculus~\citep{Lambek58} was introduced for mathematical modelling of natural language syntax via {\em categorial grammars.}
The concept of categorial grammar goes back to ideas of~\citet{Ajdukiewicz} and~\citet{BarHillel}. 
The framework of categorial grammars aims to describe natural language by means of logical derivability 
(see \citet{Buszkowski2003,Carpenter,MorrillBook,MootRetore} {\em etc}). 
From the modern logical point of view, the calculus of Lambek grammars (the Lambek calculus) is a variant of Girard's linear logic~\citep{Girard} in its
non-commutative intuitionistic version~\citep{Abrusci}.
Nowadays Lambek-style categorial grammars form one framework in a family of closely related formalisms, including combinatory categorial grammars~\citep{Steedman},
categorial dependency grammars~\citep{DikovskyDekhtyar}, and others.

A categorial grammar assigns logical formulae to lexemes (words) of the language. These formulae are syntactic categories, or {\em types,} of these words.
In Lambek grammars, types are constructed using three binary connectives, namely two divisions, $\BS$ and $\SL$, and the product, $\cdot$.  

Following the usual introduction into categorial grammars, we start with the standard example: {\sl ``John loves Mary.''}  Here {\sl ``John''} and {\sl ``Mary''} receive syntactic type $N$ (noun); {\sl ``loves,''} as a transitive
verb, is of type $(N \BS S) \SL N$. Here $S$ is the syntactic category of grammatically valid sentences. Thus, a transitive verb is handled as something
that needs a noun phrase on the left and a noun phrase on the right to become a complete sentence. In the Lambek calculus, $A, A \BS B$ yields $B$, and so does
$B \SL A, A$ (the complete formulation of the Lambek calculus is presented in Section~\ref{S:MALC}). Thus, $N, (N \BS S) \SL N, N \to S$ is a theorem 
of the Lambek calculus, which validates {\sl ``John loves Mary''} as a correct sentence.

Lambek grammars are also capable of handling more sophisticated syntactic constructions, in particular, coordination ({\sl ``and,''} {\sl ``or''}) and
some cases of dependent clauses. These cases include examples like {\sl ``the girl whom John loves''} (parsed as $N$). Here the most interesting syntactic type is 
the one for {\sl ``whom'':} $(CN \BS CN) \SL (S \SL N)$. The type $CN$ stands for ``common noun,'' {\em i.e.,} a noun without article. {\sl ``Whom''} takes,
as its right argument, an incomplete sentence {\sl ``John loves,''} which lacks a noun phrase on the right to become a complete sentence
(like {\sl ``John loves Mary''}) and  is therefore of type $S \SL N$. The complete analysis of {\sl ``the girl whom John loves''} corresponds to the following
theorem of the Lambek calculus: $$N \SL CN, CN, (CN \BS CN) \SL (S \SL N), N, (N \BS S) \SL N \to N.$$
Coordination between two sentences ({\sl ``John loves Mary and Pete loves Ann''}) is handled by assigning $(S \BS S) \SL S$ to {\sl ``and.''} 

There are, however, serious limitations of the expressive power of Lambek grammars. Namely, the famous result of~\citet{PentusCF} states that
any language described by a Lambek grammar is necessarily context-free. 
On the other hand, context-freeness of real natural language syntax had been a disputed question in the linguistic community, see~\citet{PullumGazdar}. Finally,~\citet{Shieber} demonstrated a non-context-free construction in Swiss German. Though examples like Shieber's one may seem exotic, constructing context-free grammars for sophisticated natural phenomena, even if such grammars exist, is practically quite hard.
This discrepancy motivates extending and modifying the Lambek calculus in order to obtain more
powerful categorial grammar formalisms.

In this paper we consider some of these extensions. In the analysis of linguistic examples, we generally follow~\citet{MorrillBook} and
later papers by Morrill and his co-authors. 

The first extension handles the syntactic phenomenon called {\em medial extraction} by means of a subexponential modality allowing permutation.
To make it clear what medial extraction is, recall the {\sl ``the girl whom John loves''} example. In this example, the dependent clause 
{\sl ``John loves''} is a sentence which lacks a noun phrase. Let us call the place where this noun phrase is omitted a {\em gap} and denote it by 
$[]$. A sentence with a gap in the end ({\sl ``John loves $[]$,''} cf. {\sl ``John loves Mary''}) is of type $S \SL N$. Symmetrically, a gap in the
beginning yields type $N \BS S$, like for {\sl ``$[]$ loves Mary''} in {\sl ``the boy who loves Mary.''}  Here {\sl ``who''} receives type
$(CN \BS CN) \SL (N \BS S)$. Unfortunately, this does not cover dependent clauses in which the gap is located in the middle of the sentence, {\em i.e.,}
examples like {\sl ``the girl whom John met $[]$ yesterday.''} This dependent clause is neither of type $S \SL N$, nor of type $N \BS S$.

Medial extraction can be handled by adding a subexponential modality (cf.~\citet{KanKuzNigSce2018Dale}), denoted by ${!}$, which allows permutation.
In general, the Lambek calculus is non-commutative, thus, the order of the words in a sentence matters. For formulae of the form ${!}A$, however,
permutation is allowed, and they can be freely moved. Now the gap gets type ${!}N$ and can be relocated to an arbitrary place of the dependent clause;
the clause in whole receives type $S \SL {!}N$.

Another issue connected to dependent clauses is overgeneration ({\em i.e.,} wrong judgement of incorrect syntactic structures as valid ones), which arises
when dependent clauses and {\sl ``and''}-coordination appear together. An example is $^*${\sl ``the girl whom John loves Mary and Pete loves.''}  This is 
not a correct noun phrase (which is denoted by the asterisk put before it). Unfortunately, {\sl ``John loves Mary and Pete loves $[]$''} is still
of type $S \SL N$ (cf. {\sl ``John loves Mary and Pete loves Ann''} being of type $S$), which incorrectly validates our example as a noun phrase.
Another example is $^*${\sl ``the paper that John saw the person who wrote''} (again, we have {\sl ``John saw the person who wrote $[]$''} is of type
$S \SL N$).

These wrong derivations can be cut off using the mechanism of {\em brackets}~\citep{Morrill1992,MoortgatMultimodal}, which introduces controlled
non-associa\-ti\-vi\-ty. Brackets are instantiated by special bracket modalities (see Section~\ref{S:calculi} for details) and embrace certain parts of the sentence
into {\em islands.} Islands typically include {\sl and}-coordinated sentences, {\sl that}-clauses, gerund clauses, {\em etc.} Brackets (borders of islands)
cannot be penetrated by the permutation rules for ${!}N$. Thus, the dependent clause with brackets inside is no longer of type $S \SL {!}N$, and the whole wrong derivation
gets invalidated.

Finally, we consider
a more rare syntactic phenomenon called {\em parasitic extraction,} a typical example of which is given by the following noun phrase:
{\sl ``the paper that John signed without reading.''}  In this example we have {\em two} gaps: 
{\sl ``John signed $[]$ without reading $[]$,''} and both gaps should be filled with {\em the same} object of type $N$:
{\sl ``John signed \underline{the paper} without reading \underline{the paper}.''}
Of course, one can think of examples with three and more gaps, like {\sl ``the paper that the author of $[]$ signed $[]$ without reading $[]$,''} and so on.
 In a series of papers~\citep{Morrill2014,MorrillValentin,MorrillLACompLing,MorrillPhilosophy,Morrill2018JLM,Morrill2019}, Morrill, with his co-author Valent\'{\i}n, uses several different 
calculi for handling parasitic extraction. All these approaches, however, use a subexponential modality the {\em contraction rule,} 
which makes proof search problematic and
often yields algorithmic undecidability. Generally, contraction is a rule of the form
$$
\infer{\ldots, {!}A, \ldots \to C}{\ldots, {!}A, \ldots, {!}A, \ldots \to C}
$$ 
Morrill and his co-authors, however, suggest more sophisticated versions of contraction, which involve brackets. The general idea of their approaches
is as follows: in the situation of parasitic extraction, only one gap lies plainly in the dependent clause; other gaps, which are called
parasitic, reside in bracketed subislands of the clause. Moreover, they can get nested.  Thus, the contraction rule becomes highly non-standard.

Morrill's systems differ one from another in the rules for ${!}$. 
In this article, we give a logical analysis for two of these systems. One is presented in~\citet{MorrillValentin,MorrillLACompLing} and is closely related to the
one of~\citet{MorrillPhilosophy}. The other one is from~\citet{Morrill2019,Morrill2018JLM}. For both systems, we discuss issues connected to cut elimination, and then
prove cut elimination for modified versions of these systems. Next, we provide a generic method of encoding semi-Thue systems in extensions of the Lambek calculus with
subexponential modalities, and use this method to prove undecidability of the derivability problems for Morrill's systems. We also show that categorial grammars 
based on these systems generate all recursively enumerable languages. Finally, using methods of~\citet{BuszkoZML}, we strengthen these algorithmic results by
restricting ourselves to smallest reasonable fragments, which includes only one division, subexponential, brackets, and bracket modalities.

This journal article extends our conference papers in the 21st International Symposium on Fundamentals of Computation Theory, FCT 2017, held
in Bordeaux in September 2017~\citep{KanKuzSceFCT},
and  in the 24th Conference on Formal Grammar, FG 2019, held in Riga in August 2019~\citep{KanKuzSceFG19}. However, here we provide a significant refinement of the results presented in the FCT~'17 and FG~'19 papers. First, here we consider the system
with additive connectives. This makes cut elimination results stronger. 
 Second, besides undecidability, we also show that categorial grammars based on each of the calculi in question generate
{\em all} recursively enumerable languages, not just one $\Sigma_1^0$-hard one (Section~\ref{S:grammar}). Third, using a variant of Buszkowski's 
translation~\citep{BuszkoZML}, we establish undecidability even for the one-division fragments of the calculi in question (Section~\ref{S:Buszko}).

In comparison with a series of our papers on the Lambek calculus and non-commutative linear logic with
subexponential modalities~\citep{KanKuzSceLFCS,KanKuzSceJLC,KanKuzSceFG,KanKuzNigSce2018Dale,KanKuzNigSce2018IJCAR},
the principal difference of this paper 
is the presence of brackets and bracket modalities. Contraction rules used by Morrill in the bracketed calculi
essentially interact with brackets and become disfunctional in the bracket-free fragment. 
Undecidability results, on their turn,
rely on contraction. Thus, they should be proved for calculi with brackets independently from the bracket-free case.

On the other side, two papers on the bracketed Lambek calculus~\citep{KanKuzMorSceFSCD,MorKuzKanSce2018FG}
 do not deal with the subexponential modality ($!$), and feature effective
algorithms instead of undecidability results.

\section{The Multiplicative-Additive Lambek Calculus with Exponential/Relevant Modality}\label{S:MALC}

We start with more traditional calculi without brackets and bracket modalities, namely, the multiplicative-additive Lambek calculus
extended with a (sub)expo\-nen\-tial modality. 

Formulae of the calculi we are going to define in this section are constructed from a countable set $\Var$ of variables and 
the unit constant $\U$ using five binary connectives: $\cdot$ (product, or multiplicative conjunction), $\BS$ (left division), $\SL$ (right division),
$\wedge$ (additive conjunction), and $\vee$ (additive disjuncton), and one unary connective, ${!}$ (exponential).
Sequents are expressions of the form $\Pi \to A$, where $A$ is a formula, and $\Pi$ is a finite linearly ordered sequence of formulae.
Notice that these calculi are in general non-commutative, $\Pi$ is a sequence, not a set or multiset.

The first calculus we consider is $\LLs$,
the multiplicative-additive Lambek calculus extended
with a relevant subexponential modality ($\mathbf{r}$ stands for ``relevant,'' see below). The axioms of $\LLs$ are sequents of the
form $A \to A$ and $\Lambda \to \U$, and the rules of inference are as follows:
$$
\infer[\BS R]{\Pi \to A\BS B}{A, \Pi \to B}
\qquad
\infer[\BS L]{\Delta_1, \Pi, A \BS B, \Delta_2 \to C}{\Pi \to A & \Delta_1, B, \Delta_2 \to C}
$$
$$
\infer[\SL R]{\Pi \to B \SL A}{\Pi, A \to B}
\qquad
\infer[\SL L]{\Delta_1, B \SL A, \Pi, \Delta_2 \to C}{\Pi \to A & \Delta_1, B, \Delta_2 \to C}
$$
$$
\infer[\cdot R]{\Gamma, \Delta \to A \cdot B}{\Gamma \to A & \Delta \to B}
\qquad
\infer[\cdot L]{\Delta_1, A \cdot B, \Delta_2 \to C}{\Delta_1, A, B, \Delta_2 \to C}
\qquad
\infer[\U L]{\Delta_1, \U, \Delta_2 \to C}{\Delta_1, \Delta_2 \to C}
$$
$$
\infer[\wedge R]{\Pi \to A_1 \wedge A_2}{\Pi \to A_1 & \Pi \to A_2}
\qquad
\infer[\wedge L_i\mbox{, $i = 1,2$}]{\Delta_1, A_1 \wedge A_2, \Delta_2 \to C}{\Delta_1, A_i, \Delta_2 \to C}
$$
$$
\infer[\vee R_i\mbox{, $i = 1,2$}]{\Pi \to A_1 \vee A_2}{\Pi \to A_i}
\qquad
\infer[\vee L]{\Delta_1, A_1 \vee A_2, \Delta_2 \to C}{\Delta_1, A_1, \Delta_2 \to C & \Delta_1, A_2, \Delta_2 \to C}
$$
$$
\infer[{!} R]{{!}A_1, \dots, {!}A_n \to {!}B}{{!}A_1, \dots, {!}A_n \to B}
\qquad
\infer[{!} L]{\Delta_1, {!}A, \Delta_2 \to C}{\Delta_1, A, \Delta_2 \to C}
$$
$$
\infer[{!}P_1]{\Delta_1, \Phi, {!}A, \Delta_2 \to C}{\Delta_1, {!}A, \Phi, \Delta_2 \to C}\qquad
\infer[{!}P_2]{\Delta_1, {!}A, \Phi, \Delta_2 \to C}{\Delta_1, \Phi, {!}A, \Delta_2 \to C}
$$ $$
\infer[{!}C]{\Delta_1, {!}A, \Delta_2 \to C}{\Delta_1, {!}A, {!}A, \Delta_2 \to C}
$$
$$
\infer[\CUT]{\Delta_1, \Pi, \Delta_2 \to C}{\Pi \to A & \Delta_1, A, \Delta_2 \to C}
$$

Notice that, from the proof-theoretic point of view, it is better to use, instead of $(\CONTR)$, the following non-local contraction rules~\citep{KanKuzNigSce2018Dale}:
$$
\infer[{!}NC_1]{\Delta_1, \Phi, {!}A, \Delta_2 \to C}{\Delta_1, {!}A, \Phi, {!}A, \Delta_2 \to C}\qquad
\infer[{!}NC_2]{\Delta_1, {!}A, \Phi, \Delta_2 \to C}{\Delta_1, {!}A, \Phi, {!}A, \Delta_2 \to C}
$$
In the presence of ${!}P_{1,2}$, however, ${!}C$ has the same power as ${!}NC_{1,2}$.

The ${!}$ modality here is called ``relevant,'' since it allows contraction and permutation, but not
weakening, like in relevant logic.

The second system without brackets is  $\EL$, the multiplicative-addi\-ti\-ve Lambek calculus extended with a full-power exponential modality.
It is obtained from $\LLs$ by adding the lacking structural rule for ${!}$, namely weakening:
$$
\infer[{!}W]{\Gamma, {!}A, \Delta \to C}{\Gamma, \Delta \to C}
$$

Both $\LLs$ and $\EL$ are particular cases of $\SMALC$, the multi\-plicative-additive Lambek calculus extended with an arbitrary
 family of subexponentials $\Sigma$, considered by~\cite{KanKuzNigSce2018Dale}. In that paper it is shown that these calculi
 enjoy cut elimination and that the derivability problems for these calculi are undecidable.
  
 Cut elimination yields the subformula property (each formula occurring in the cut-free derivation  is a subformula of the
 goal sequent) and thus conservativity of elementary fragments. Namely, if one wants to derive only sequents that include
 formulae with a restricted set of connectives, it is sufficient just to restrict the set of rules of the calculus to this set
 of connectives. 
For convenience, we use a shorter notation, $\LLsM$ and $\ELM$, for the fragments without additive connectives
($\vee$ and $\wedge$) of $\LLs$ and $\EL$ respectively.

Let us formally define the notion of categorial grammar based on a non-commutative intuitionistic-style sequent calculus $\Lc$ without brackets, like the systems
$\LLs$ and $\EL$ defined above. 

\begin{definition}
An $\Lc$-grammar is a triple $\Gc = \langle \Sigma, \rhd, H \rangle$, where $\Sigma$ is a finite alphabet,
$H$ is a formula, and $\rhd$ is a finite binary correspondence between letters of $\Sigma$ and formulae (called {\em lexicon}).
A word $w = a_1 \ldots a_n$ over $\Sigma$ is accepted by $\Gc$ if there exist formulae $A_1, \ldots, A_n$
such that $a_i \rhd A_i$ ($i = 1, \ldots, n$) and the sequent $A_1, \ldots, A_n \to H$ is derivable in $\Lc$.
The language generated, or recognised, by $\Gc$ consists of all words recognised by $\Gc$.
\end{definition}

The system with weakening, $\EL$, has not so much to do with linguistic applications, but is interesting from the logical point of view.
In particular, we use it as an intermediate calculus in our undecidability proofs (Sections~\ref{S:undec} and~\ref{S:Buszko}). 

The system with a relevant modality, $\LLs$, supports analysis of many cases of extraction from dependent clauses, including parasitic extraction.
For example, {\sl ``the paper that John signed without reading''} is analysed as follows. First, we define the necessary fragment of the lexicon:

\vspace*{-10pt}
{\small \begin{align*}
\mbox{\sl the} & {} \rhd N \SL CN   && & \mbox{\sl John} & {} \rhd N \\
\mbox{\sl paper} & {} \rhd CN  &&  & \mbox{\sl signed, reading} & {} \rhd (N \BS S) \SL N \\
\mbox{\sl that} & {} \rhd (CN \BS CN) \SL (S \SL {!}N) && & \mbox{\sl without} & {} \rhd ((N \BS S) \BS (N \BS S)) \SL (N \BS S)
\end{align*}}
Here $N$ stands for ``noun phrase,'' $CN$ states for ``common noun'' (without an article), and $S$ stands for ``sentence.''
Next, we derive the sequent
\begin{multline*}
N \SL CN, CN, (CN \BS CN) \SL (S \SL {!}N), N, (N \BS S) \SL N, \\
((N \BS S) \BS (N \BS S)) \SL (N \BS S), (N \BS S) \SL N \to N
\end{multline*}
in $\LLs$, as shown on Figure~\ref{Fig:exampleA}.

\begin{figure}
\centerline{\rotatebox{90}{
$$
\infer[\SL L]{N \SL CN, CN, (CN \BS CN) \SL (S \SL {!}N), N, (N \BS S) \SL N, 
((N \BS S) \BS (N \BS S)) \SL (N \BS S), (N \BS S) \SL N \to N}
{\infer[\SL R]{N, (N \BS S) \SL N, ((N \BS S) \BS (N \BS S)) \SL (N \BS S), (N \BS S) \SL N \to S \SL {!}N}
{\infer[{!}NC_1]{N, (N \BS S) \SL N, ((N \BS S) \BS (N \BS S)) \SL (N \BS S), (N \BS S) \SL N, {!}N \to S}
{\infer[{!}L]{N, (N \BS S) \SL N, {!}N, ((N \BS S) \BS (N \BS S)) \SL (N \BS S), (N \BS S) \SL N, {!}N \to S}
{\infer[{!}L]{N, (N \BS S) \SL N, N, ((N \BS S) \BS (N \BS S)) \SL (N \BS S), (N \BS S) \SL N, {!}N \to S}
{\infer[\SL L]{N, (N \BS S) \SL N, N, ((N \BS S) \BS (N \BS S)) \SL (N \BS S), (N \BS S) \SL N, N \to S}
{N \to N & \infer[\SL L]{N, N \BS S, ((N \BS S) \BS (N \BS S)) \SL (N \BS S), (N \BS S) \SL N, N \to S}
{N \to N & \infer[\SL L]{N, N \BS S, ((N \BS S) \BS (N \BS S)) \SL (N \BS S), N \BS S \to S}
{N \BS S \to N \BS S & \infer[\BS L]{N, N \BS S, (N \BS S) \BS (N \BS S) \to S}
{N \BS S \to N \BS S & \infer[\BS L]{N, N \BS S \to S}{N \to N & S \to S}}}}}}}}}
&
\infer[\SL L]{N \SL CN, CN, CN \BS CN \to N}
{CN \to CN & \infer[\SL L]{N \SL CN, CN \to N}{CN \to CN & N \to N}}}
$$
}}
\caption{Derivation for {\sl ``the paper that John signed without reading''} in $\LLs$ (like \cite[Fig.~24]{Morrill2019}, but with
brackets and bracket modalities removed)}\label{Fig:exampleA}
\end{figure}

Without brackets, however, categorial grammars based on $\LLs$ suffer from overgeneration, parsing ungrammatical phrases like
{\sl *``the girl whom John loves Mary and Pete loves''} (see Introduction). In the next section, we introduce systems including
both brackets and a restricted subexponential, developed by Morrill in a series of papers.

\section{Morrill's Calculi with Brackets and Subexponential}\label{S:calculi}

In this section we describe extensions of the Lambek calculus, which include both brackets (and bracket modalities which control them) and
a subexponential, which interacts with brackets in an intricate way.

In his papers, Morrill (sometimes with his co-author Valent\'{\i}n) introduces different variants of his calculus---the difference is in the most
interesting rule, contraction. We consider two of Morrill's calculi, and denote these calculi by $\sysA$ and $\sysB$, by the year of first publication.
In this notation, ``(st)'' means the presence of stoups, the $\mathbf{b}$ on the right stands for ``brackets,'' and 
$\boldsymbol{!}_{\mathbf{b}}$ means that the subexponential ${!}$ interacts with the bracketing structure.

The $\sysA$ system, in its version without stoups, appears in~\citet{MorrillValentin}, and then in~\cite{MorrillLACompLing}
(\cite{MorrillPhilosophy} features a slightly different version of this system). The $\sysB$ system appears in Morrill's recent papers~\citep{Morrill2018JLM,Morrill2019};
however, essentially here Morrill returns to an older formulation of the bracket-aware contraction rule~\citep{MorrillBook,Morrill2014}.

Morrill's systems are quite involved, including up to 45 connectives. In this article we consider their simpler fragments, including multiplicative and additive Lambek
connectives ($\BS, \SL,\cdot, \U, \vee, \wedge$), brackets and bracket modalities ($\PMod$ and $\NMod$), and the subexponential ${!}$. Since all Morrill's systems
do not include cut as a rule, these fragments are conservative inside the bigger systems, and our undecidability results also work for the latter. (The question of {\em admissibility} of cut in Morrill's system is more subtle, and we discuss it later on.)

Before going forward, let us notice that full Morrill's systems also include Kleene star, axiomatised by means of an $\omega$-rule (Morrill calls it
``existential exponential'' and denotes by ``?''). In the presence of Kleene star, the Lambek calculus is known to be at least 
$\Pi_1^0$-hard~\citep{BuszkoPalka,Kuzn2017WoLLIC}, if the $\omega$-rule is used, and at least $\Sigma_1^0$-hard~\citep{KuznLICS19}, if the Kleene star
is axiomatised by means of induction axioms. In both cases, this means undecidability.
 Moreover, in the view of Kozen's results on complexity of Horn theories
of Kleene algebras~\citep{Kozen2002}, the complexity of a system with both Kleene star (with an $\omega$-rule) and a subexponential modality allowing contraction is likely
to rise up to $\Pi_1^1$-completeness. Morrill, however, emphasizes the fact that in formulae used in categorial grammars designed for real
languages the Kleene star never occurs with positive polarity. Thus, the $\omega$-rule
is never used, and the Kleene star does not incur problems with decidability.
Thus, the only possible source of undecidability is the specific contraction rule
for the subexponential. We consider  fragments of Morrill's systems with this
rule, which are sufficient to show undecidability.

The syntax and metasyntax of sequents in Morrill's systems (in particular, their fragments considered throughout this article) is more involved, if compared to the
calculi without brackets. First, in the antecedents we now have brackets which operate along with the structural comma
(a metasyntactic correspondent of the product connective), introducing partial non-associativity. Second, in order to
avoid superfluous usage of permutation rules for ${!}$-formulae and to facilitate proof search, in his systems Morrill groups the
${!}$-formulae to specifically designated commutative areas in the sequent. Using the terminology of ~\citet{GirardStoupMSCS,GirardStoupAPAL}, Morrill calles these areas {\em stoups.} Morrill's calculi, both technically and ideologically, are close to the sequent system by~\cite{hodas94ic}. In that system, antecedents are split into two zones, $\zeta;\Delta$, where $\zeta$ is the intuitionistic zone (formulae there are allowed to contract and weaken) and $\Delta$ is the linear one. In Morrill's terms, $\zeta$ is the stoup. Morrill's rules are more complicated, because of non-commutativity of the system in general, and also partial non-associativity introduced by brackets.
Introducing
the stoups, in fact, is the first step towards a focused proof system~\citep{Andreoli,Morrill2015Fiji,KanKuzNigSce2018IJCAR}.
Since permutations for ${!}$-formulae cannot penetrate brackets, each pair of brackets has its own stoup.

Let us define the syntax formally.
Formulae will be built from variables (primitive types) $p,q,\ldots$ and the multiplicative unit constant $\One$ using 
three binary operations: $\BS$ (left division), $\SL$ (right division), $\mconj$ (product), and three unary operations:
$\PMod$ and $\NMod$ (bracket modalities) and ${!}$ (subexponential). Sequents (in Morrill's terminology, {\em h-sequents)} are expressions of the form
$\Xi \Rightarrow A$, where $A$ is a formula and $\Xi$ is a complex metasyntactic structure which we call {\em meta-formula} (Morrill calls them {\em zones).}
Meta-formulae are built from formulae using comma and brackets; also formulae which are intended to be marked by the subexponential ${!}$, which
allows permutation, are placed into stoups. 
Following~\citet{Morrill2019}, we define the notion of meta-formula along with two auxiliary notions, stoup and {\em tree term}, simultaneously.
\begin{itemize}
\item A stoup is a multiset of formulae: $\zeta = \{ A_1, \ldots, A_n \}$. A stoup could be empty, the empty stoup is denoted by $\varnothing$.
\item A tree term is either a formula or a bracketed expression of the form $[\Xi]$, where $\Xi$ is a meta-formula.
\item A meta-formula is an expression of the form $\zeta; \Gamma$, where $\zeta$ is a stoup and $\Gamma$ is a linearly
ordered sequence of tree terms. Here $\Gamma$ could also be empty; the empty sequence is denoted by $\Lambda$.
\end{itemize}
We use comma both for concatenation of tree term sequences and for multiset union of stoups (Morrill uses $\uplus$ for the latter).
Moreover, for adding one formula into a stoup we write $\zeta, A$ instead of $\zeta, \{A\}$. Empty stoups are omitted: instead of
$\varnothing; \Gamma$ we write just $\Gamma$.

Let us first formulate the rules which do not operate ${!}$, since these rules are the same in all Morrill's systems.

$$
\infer[\mathrm{id}]{A \yd A}{}
$$

$$
\infer[{\SL} L]{\Xi (\zeta_1, \zeta_2 ; \Delta_1, C \SL B, \Gamma, \Delta_2) \yd D}
{\zeta_1; \Gamma \yd B & \Xi(\zeta_2 ; \Delta_1, C, \Delta_2 ) \yd D}
\qquad
\infer[{\SL} R]{\zeta; \Gamma \yd C \SL B}{\zeta; \Gamma, B \yd C}
$$

$$
\infer[{\BS} L]{\Xi (\zeta_1, \zeta_2 ; \Delta_1, \Gamma, A \BS C, \Delta_2) \yd D}
{\zeta_1; \Gamma \yd A & \Xi(\zeta_2 ; \Delta_1, C, \Delta_2 ) \yd D}
\qquad
\infer[{\BS} R]{\zeta; \Gamma \yd A \BS C}{\zeta; A, \Gamma \yd C}
$$

$$
\infer[{\mconj} L]{\Xi (\zeta; \Delta_1, A \mconj B, \Delta_2) \yd D}
{\Xi (\zeta; \Delta_1, A, B, \Delta_2) \yd D}
\qquad
\infer[{\mconj} R]{\zeta_1, \zeta_2 ; \Delta, \Gamma \yd A \mconj B}
{\zeta_1; \Delta \yd A & \zeta_2; \Gamma \yd B}
$$

$$
\infer[\vee R_i\ i=1,2]{\Xi \yd A_1 \vee A_2}{\Xi \yd A_i}
\qquad
\infer[{\One} L]{\Xi(\zeta; \Delta_1, \One, \Delta_2) \yd A}{\Xi(\zeta;\Delta_1,\Delta_2) \yd A}
$$

$$
\infer[\vee L]{\Xi(\zeta; \Delta_1, A_1 \vee A_2, \Delta_2) \yd C}
{\Xi(\zeta; \Delta_1, A_1, \Delta_2) \yd C & \Xi(\zeta; \Delta_1, A_2, \Delta_2) \yd C}
\qquad
\infer[{\One} R]{\Lambda \yd \One}{}
$$

$$
\infer[\wedge L_j\ j=1,2]{\Xi(\zeta; \Delta_1, A_1 \wedge A_2, \Delta_2) \yd C}
{\Xi(\zeta; \Delta_1, A_j, \Delta_2) \yd C}
\qquad
\infer[\wedge R]{\Xi \yd A_1 \wedge A_2}{\Xi \yd A_1 & \Xi \wedge A_2}
$$

$$
\infer[{\NMod} L]{\Xi(\zeta; \Delta_1, [\NMod A], \Delta_2) \yd B}
{\Xi (\zeta; \Delta_1, A, \Delta_2) \yd B}
\qquad
\infer[{\NMod} R]{\Xi \yd \NMod A}{[ \Xi ] \yd A}
$$

$$
\infer[{\PMod} L]{\Xi(\zeta; \Delta_1, \PMod A, \Delta_2) \yd B}
{\Xi(\zeta; \Delta_1, [ A], \Delta_2) \yd B}
\qquad
\infer[{\PMod} R]{ [\Xi] \yd \PMod A}{\Xi \yd A}
$$

The two calculi, $\sysA$ and $\sysB$, also share two rules for ${!}$:
$$
\infer[{!} L]{\Xi(\zeta; \Gamma_1, {!}A, \Gamma_2) \yd B}{\Xi(\zeta, A ; \Gamma_1, \Gamma_2) \yd B}
\qquad
\infer[{!} P]{\Xi(\zeta, A; \Gamma_1, \Gamma_2) \yd B}{\Xi (\zeta; \Gamma_1, A, \Gamma_2) \yd B}
$$

However, the ${!}R$ rule and, most importantly, the contraction rule ${!}C$ are different.
In the ``older'' system $\sysA$ they are formulated as follows:
$$
\infer[{!}R]{\zeta; \Lambda \yd {!}B}{\zeta; \Lambda \yd B}
\qquad
\infer[{!}C,\ \zeta_2 \ne \varnothing]{\Xi(\zeta_1, \zeta_2; \Gamma_1, \Gamma_2, \Gamma_3) \yd B}
{\Xi(\zeta_1, \zeta_2; \Gamma_1, [\zeta_2; \Gamma_2], \Gamma_3) \yd B}
$$
The ``newer'' system $\sysB$ uses the following formulation of ${!}R$ and ${!}C$:
$$
\infer[{!} R]{{!}A \yd {!}B}{{!}A \yd B}
\qquad
\infer[{!} C]{\Xi(\zeta, A; \Gamma_1, [[\Gamma_2]], \Gamma_3) \yd B}
{\Xi(\zeta, A; \Gamma_1, [A; \Gamma_2], \Gamma_3) \yd B}
$$

As noticed above, in the absence of cut we can easily formulate fragments of $\sysA$ and $\sysB$ without additive connectives: one just removes the corresponding
rules ($\vee L$, $\vee R_{1,2}$, $\wedge L_{1,2}$, $\wedge R$). In the notations, we just replace ``$\mathbf{MALC}$'' with ``$\mathbf{L}$'':
$\LsysA$, $\LsysB$.  In the following sections we use the same naming convention: if a calculus' name includes ``$\mathbf{MALC}$,'' then
replacing it with ``$\mathbf{L}$'' gives a name for the fragment of this calculus without additive connectives.

For calculi with brackets, defining recognition of words in categorial grammars is trickier. One can keep the definition
from Section~\ref{S:MALC} and say that $w = a_1 \dots a_n$ is accepted
by the grammar if $A_1, \dots, A_n \to H$ is derivable, for some $A_i$ such that $a_i \rhd A_i$ ($i = 1, \dots, n$). Notice that this
sequent does not include brackets, but may include bracket modalities, $\PMod$ and $\NMod$. Thus, brackets could appear inside the derivation.
This notion of recognition is called {\em s-recognition}~\citep{Jaeger2003}. 

Linguistic applications, however, suggest another notion of recognition for Lambek grammars with brackets, called {\em t-recognition.}
A word $w = a_1 \dots a_n$ is t-accepted by an grammar $\Gc$ if the sequent $\Pi \to H$ is derivable for some $\Pi$ such that if one
removes all brackets (but not bracket modalities!) from $\Pi$, it yields $A_1, \dots, A_n$, where $a_i \rhd A_i$ ($i = 1,\dots,n$). In other words,
a word is accepted if there corresponding sequent is derivable {\em for some bracketing $\Pi$.}

In the implementation of Morrill's bracketed calculi in the CatLog parser, the bracket structure on $A_1, \ldots, A_n$ is requested from the user
as part of input data~\citep{CatLog3tech}.  There is an ongoing project of implementing automatic guessing of the correct bracket structure (so-called {\em bracket induction});
at the present time, there exists such an algorithm for the fragment with only multiplicative connectives and bracket modalities, without 
subexponential~\citep{MorKuzKanSce2018FG}.

As an example, we analyse the phrase {\sl ``the paper that John signed without reading''} using $\sysB$. Our analysis is a simplification
of the one of~\citet{Morrill2019}. In comparison with the analysis in Section~\ref{S:MALC} (Figure~\ref{Fig:exampleA}), here we take care
of the bracketed domains, which cannot be penetrated by associativity of product or permutations of ${!}$-formulae. Also notice that the
contraction rule here implements parasitic extension in the following sense: applying contraction to ${!}N$ (actually, to $N$ located in
the stoup) instantiates a secondary (parasitic) copy of ${!}N$ into an island. In order to prevent reusage of islands for parasitic extraction,
the island transforms from a strong (double-bracketed) to a weak (single-bracketed) one. The lexicon now is as follows (if compared to 
the one in Section~\ref{S:MALC}, the types here are augmented with bracket modalities):

{\small
\begin{align*}
\mbox{\sl the} &\triangleright N \SL CN &&				&			\mbox{\sl likes, signed} &\triangleright (\PMod N \BS S) \SL N \\
\mbox{\sl man, paper} &\triangleright CN && 				&			\mbox{\sl without} &\triangleright (\NMod ((\PMod N \BS S) \BS (\PMod N \BS S))) \SL (\PMod N \BS S) \\
\mbox{\sl reading} &\triangleright (\PMod N \BS S) \SL N && &  \mbox{\sl who, that} &\triangleright (\NMod\NMod (CN \BS CN)) \SL (S \SL {!}N)\\
\mbox{\sl John} &\triangleright \PMod N
\end{align*}}
Before parsing, we have to impose the right bracket structure on our phrase. This is done as follows:
{\sl ``the paper {\rm [[}that {\rm [}John{\rm]} signed {\rm [[}without reading{\rm]] ]]}.''}
Indeed, in Morrill's CatLog categorial grammar the subject group and the {\sl without}-clause form islands, and the {\sl that}-clause forms a strong island, embraced by
double brackets. Moreover, we also have to double-bracket our without-clause
(make it a ``strong island''), since it will be used for parasitic extraction.

Now the sequent we have to derive in $\sysB$ is as follows:
\begin{multline*}
N \SL CN, [[\, (\NMod\NMod (CN \BS CN)) \SL (S \SL {!}N), [N], (\PMod N \BS S) \SL N, \\
[[\, (\NMod ((\PMod N \BS S) \BS (\PMod N \BS S))) \SL (\PMod N \BS S), (\PMod N \BS S) \SL N\, ]]\;]] \to N
\end{multline*}
The derivation is presented on Figure~\ref{Fig:Johnsigned}.

\begin{figure}
\centerline{\rotatebox{90}{$
\infer[\SL L]{N \SL CN,  CN, [[\, (\NMod\NMod (CN \BS CN)) \SL (S \SL {!}N), [N], (\PMod N \BS S) \SL N, 
[[\,(\NMod ((\PMod N \BS S) \BS (\PMod N \BS S))) \SL (\PMod N \BS S), (\PMod N \BS S) \SL N\,]]\;]] \yd N}
{\infer[\SL L]{CN, [[\, (\NMod\NMod (CN \BS CN)) \SL (S \SL {!}N), [N], (\PMod N \BS S) \SL N, 
[[\, (\NMod ((\PMod N \BS S) \BS (\PMod N \BS S))) \SL (\PMod N \BS S), (\PMod N \BS S) \SL N\,]]\:]] \yd CN}
{\infer[\SL R]{[N], (\PMod N \BS S) \SL N, 
[[\,(\NMod ((\PMod N \BS S) \BS (\PMod N \BS S))) \SL (\PMod N \BS S), (\PMod N \BS S) \SL N\,]] \yd S \SL {!}N}
{\infer[{!} L]{[N], (\PMod N \BS S) \SL N, 
[[\, (\NMod ((\PMod N \BS S) \BS (\PMod N \BS S))) \SL (\PMod N \BS S), (\PMod N \BS S) \SL N\,]], {!}N \yd S}
{\infer[{!} C]{N; [N], (\PMod N \BS S) \SL N, 
[[\, (\NMod ((\PMod N \BS S) \BS (\PMod N \BS S))) \SL (\PMod N \BS S), (\PMod N \BS S) \SL N\,]] \yd S}
{\infer[{!} P]{N; [N], (\PMod N \BS S) \SL N, 
[\,N;  (\NMod ((\PMod N \BS S) \BS (\PMod N \BS S))) \SL (\PMod N \BS S), (\PMod N \BS S) \SL N\,] \yd S}
{\infer[{!} P]{[N], (\PMod N \BS S) \SL N, N,
[\,N;  (\NMod ((\PMod N \BS S) \BS (\PMod N \BS S))) \SL (\PMod N \BS S), (\PMod N \BS S) \SL N\,] \yd S}
{\infer[\SL L]{[N], (\PMod N \BS S) \SL N, N,
[\, (\NMod ((\PMod N \BS S) \BS (\PMod N \BS S))) \SL (\PMod N \BS S), (\PMod N \BS S) \SL N, N\,] \yd S}
{N \yd N & 
\infer[\SL L]{[N], (\PMod N \BS S) \SL N, N, [\,(\NMod ((\PMod N \BS S) \BS (\PMod N \BS S))) \SL (\PMod N \BS S), \PMod N \BS S\,] \yd S}
{\PMod N \BS S \yd \PMod N \BS S & 
\infer[\NMod L]{[N], (\PMod N \BS S) \SL N, N, [\,\NMod ((\PMod N \BS S) \BS (\PMod N \BS S))\,] \yd S}
{\infer[\BS L]{[N], (\PMod N \BS S) \SL N, N, (\PMod N \BS S) \BS (\PMod N \BS S)) \yd S}
{N \yd N & \infer[\BS L]{[N], \PMod N \BS S, (\PMod N \BS S) \BS (\PMod N \BS S)) \yd S}
{\PMod N \BS S \yd \PMod N \BS S & \infer[\BS L]{[N], \PMod N \BS S \yd S}
{\infer[\PMod R]{[N] \yd \PMod N}{N \yd N} & S \yd S}}}}}}}}}}} &
\infer[\NMod L]{CN, [[\, \NMod\NMod (CN \BS CN)\, ]] \yd CN}
{\infer[\NMod L]{CN, [\,\NMod (CN \BS CN)\,] \yd CN}
{\infer[\BS L]{CN, CN \BS CN \yd CN}{ CN \yd CN &  CN \yd CN}}}}
& N \yd N}
$
}}
\caption{Derivation for {\sl ``the paper that John signed without reading''}  in $\sysB$ (cf. \citet[Fig.~24]{Morrill2019})}\label{Fig:Johnsigned}
\end{figure}

\section{Issues with Cut Elimination}\label{S:issues}

Cut elimination is one of the standard logical properties which is expected from a reasonable Gentzen-style sequent calculus.
Since in the systems discussed in this article cut is not included as an official rule, the question of cut elimination 
appears as the question of the {\em admissibility} of cut. 
From the linguistic perspective, cut supports the principle of compositionality: once we have proved that a phrase has
syntactic type, say, $NP$, we can use it at any place where a noun phrase is allowed. 

\citet{Morrill2019} mentions a semantic approach to prove admissibility of cut in $\sysB$ as an ongoing work by O.~Valent\'{\i}n. In this
paper, we wish to pursue the more traditional syntactic approach for cut elimination, both in $\sysA$ and $\sysB$.

Unfortunately, Morrill's systems, as formulated above (Section~\ref{S:calculi}), fail to enjoy cut elimination (cut admissibility).
For $\sysB$, the counter-example is ${!}p, q \to q \cdot {!}p$. This sequent expresses the natural property that ${!}$-formulae
commute with arbitrary formulae, and it is derivable using cut:
$$
\infer[\CUT]{{!}p, q \to q \cdot {!}p}
{\infer[!R]{{!}p \to {!}{!}p}{{!}p \to {!}p}
& \infer[!L]{{!}{!}p, q \to q \cdot {!}p}
{\infer[!P]{{!}p; q \to q \cdot {!}p}
{\infer[{\cdot} R]{q, {!}p \to q \cdot {!}p}{q \to q & {!}p \to {!}p}}}}
$$
However, no cut-free derivation is available. Indeed, the lowermost rule of such a derivation should be either ${\cdot}R$ or $!L$. 
The former is impossible, since neither $\Lambda \to q$, nor ${!}p \to q$, nor ${!}p,q \to q$ is derivable. In the latter case,
we get $p; q \to q \cdot {!}p$ and again have two possibilities: ${\cdot} R$ or $!P$. For ${\cdot R}$, the only possible way
of splitting could be $q \to q$ and $p; \Lambda \to {!}p$. The latter, however, is counter-intuitively not derivable (though $p$
in stoup should mean $!p$): one cannot immediately apply $!R$, and applying $!P$ gives $p \to {!}p$. Applying $!P$ would give
either $p,q \to q \cdot {!}p$ or $q,p \to q \cdot {!}p$, none of which is derivable.
Notice that the proof search here is finite, since the contraction rule could not be used in the absence of brackets.

Thus, we have to modify $\sysB$ in order to restore the cut elimination property.
We do this by replacing ${!}R$ and ${!}C$ with the following rules: 
$$
\infer[!R']{A; \Lambda \to {!}B}{A; \Lambda \to B}
\qquad
\infer[!C']
{\Xi(\zeta, A; \Gamma_1, [[ \zeta'; \Gamma_2 ]], \Gamma_3) \to B}
{\Xi(\zeta, A; \Gamma_1, [ \zeta', A; \Gamma_2 ], \Gamma_3) \to B}
$$
Notice that ${!}R'$ corresponds to the $!R$ rule of Morrill's $\sysA$. The only difference is that here
the stoup should include exactly one formula.

We denote the modified calculi by $\sysBa$.

In what follows, we shall show (Theorem~\ref{Th:cutelim}) that  $\sysBa$ admits the cut rule in the following stoup-aware form:
$$
\infer[\CUT]{\Xi(\xi,\zeta; \Gamma_1, \Pi, \Gamma_2) \to C}
{\xi;\Pi \to A & \Xi(\zeta; \Gamma_1, A, \Gamma_2) \to C}
$$
Using cut and the left rules for ${!}$, one can derive the old ${!}R$ rule from the new ${!}R'$ one:
$$
\infer[{!}L]{{!}A \to {!}B}
{\infer[{!}R']{A; \Lambda \to {!}B}{\infer[\CUT]{A; \Lambda \to B}
{\infer[{!}R']{A; \Lambda \to {!}A}{\infer[{!}P]{A; \Lambda \to A}{A \to A}} & {!}A \to B}}}
$$
As for ${!}C'$, the old rule ${!}C$ is just its particular case, for $\zeta' = \varnothing$. Thus,
$\sysBa$ is an extension of $\sysB$.

For $\sysA$, problems come from the non-emptiness restriction imposed on the contraction rule.
The ${!}C$ rule in $\sysA$ is formulated in the ``multi-contraction'' form, allowing to contract
several formulae in the stoup at once. However, it should contract {\em at least one} formula.
This constraint can be easily violated by cut with $\Lambda \to {!}\U$ (which is derivable in $\sysA$).
In systems without brackets this would not be an issue, since in such systems contraction of zero
formulae does nothing. In $\sysA$, however, ${!}C$ operates brackets, so such a ``zero-contraction'' would violate
bracket discipline. 

The concrete counter-example is $q \to \PMod q$. This sequent clearly has no cut-free derivation, but can be derived using cut:
$$
\infer[\CUT]{q \to \PMod q}{\infer[{!}R]{\Lambda \to {!}\U}{\Lambda \to \U} & 
\infer[{!}L]{{!}\U, q \to \PMod q}{\infer[{!}C]{\U; q \to \PMod q}
{\infer[{!}P]{\U; [\U; q] \to \PMod q}
{\infer[{!}P]{\U, [\U; q] \to \PMod q}
{\infer[\U L]{\U, [\U, q] \to \PMod q}
{\infer[\U L]{[\U,q] \to \PMod q}
{\infer[\PMod R]{[q] \to \PMod q}{q\to q}}}}}}}}
$$

We modify $\sysA$ in the following way, yielding the system $\sysAa$:
$$
\infer[{!}R',\ \zeta \ne \varnothing]{\zeta; \Lambda \to {!}B}{\zeta; \Lambda \to B}
\qquad
\infer[{!}C',\ \zeta_2 \ne \varnothing]{\Xi(\zeta_1, \zeta_2, \zeta'; \Gamma_1, \Gamma_2, \Gamma_3) \to C}
{\Xi(\zeta_1, \zeta_2; \Gamma_1, [\zeta', \zeta_2; \Gamma_2], \Gamma_3) \to C}
$$
Theorem \ref{Th:cutelimA} establishes cut admissiblity in $\sysAa$.

\section{Lambek's Restriction}\label{S:restriction}

The original Lambek calculus~\citep{Lambek58} has an important difference from the systems discussed above, 
namely {\em Lambek's non-emptiness restriction.} 
Let us start with a linguistic example~\citep[Sect. 2.5]{MootRetore}. In the calculi defined above,
one can derive $(N \SL N) \SL (N \SL N), N \to N$. This sequent validates ``very book'' as an object of type $N$
(common noun), which is incorrect. Indeed, the type $(N \SL N) \SL (N \SL N)$ for ``very'' is a left modifier for
adjective, cf. ``very interesting book,'' analyzed as $(N \SL N) \SL (N \SL N), N \SL N, N \to N$.

This example motivates the following constraint: {\em  left-hand sides of all sequents 
 are required to be non-empty.} This constraint existed in the original Lambek calculus~\citep{Lambek58}. It is quite strange
 from the logical point of view, but is natural from the linguistic side and also in the view of algebraic interpretations
 (considering residuated semigroups instead of monoids).

In the presence of a full-power exponential modality, however, imposing Lambek's restriction is quite a subtle matter~\citep{KanKuzSceLFCS,KanKuzSceJLC}.
Actually, there is no way of doing it without losing at least one of the desired properties of a good logical system---cut elimination and substitution.
Similar issues arise with reconciling Lambek's restriction with the relevant modality.

The subexponential modalities used by Morrill, however, are not that powerful, and their behaviour is constrained by brackets. This makes it possible
to impose Lambek's restriction in a linguistically consistent manner. 
In this section, we present $\sysBar$, a version of $\sysBa$ with Lambek's restriction imposed.

Before going into the formalism, let us consider one more linguistic example~\citep{Morrill2018JLM}.
This example features an incorrect noun phrase, {\sl *``man who likes.''} The dependent clause here
is analysed with two gaps, {\sl *``man who {\rm []} likes {\rm []}.''} The intended semantics (and the correct
version of the phrase) is {\sl ``man who likes himself,''}  that is, both gaps should be filled with
the same $N$, using the parasitic extraction mechanism. The lexicon here is the same as in the example in
Section~\ref{S:calculi}.

Since the dependent clause forms a strong (double-bracketed) island, the brackets are imposed as follows:
{\sl ``man {\rm [[}who likes{\rm ]]}.''} Next, we recall that the subject should form a weak (single-bracketed) island, and
in these brackets can be generated in $\sysA$ by the contraction rule. This allows $\sysA$ to parse  (incorrectly) {\sl ``likes''} as
a dependent clause with two gaps, a host one for the object and a parasitic one for the subject:
$$
\infer[\SL R]{(\PMod N \BS S) \SL N \to S \SL {!}N}
{\infer[{!}L]{(\PMod N \BS S) \SL N, {!}N \to S}
{\infer[{!}C]{N; (\PMod N \BS S) \SL N \to S}
{\infer[{!}P]{N; [N; \Lambda], (\PMod N \BS S) \SL N \to S}
{\infer[{!}P]{[N; \Lambda], (\PMod N \BS S) \SL N, N \to S}
{\infer[\SL L]{[N], (\PMod N \BS S) \SL N, N \to S}
{N \to N & \infer[\BS L]{[N], \PMod N \BS S \to S}
{\infer[\PMod R]{[N] \to \PMod N}{N \to N} & S \to S}}}}}}}
$$
The complete derivation for {\sl *``man who likes''} as a common noun group ($CN$) in $\sysA$ is given on Figure~\ref{Fig:manwholikes}

\begin{figure}
\centerline{ 
{
$$
\infer[\SL L]
{CN, [[ (\NMod\NMod (CN \BS CN)) \SL (S \SL {!}N), (\PMod N \BS S) \SL N ]]  \to CN}
{\infer[\SL R]{(\PMod N \BS S) \SL N \to S \SL {!}N}
{\infer[{!}L]{(\PMod N \BS S) \SL N, {!}N \to S}
{\infer[{!}C]{N; (\PMod N \BS S) \SL N \to S}
{\infer[{!}P]{N; [N; \Lambda], (\PMod N \BS S) \SL N \to S}
{\infer[{!}P]{[N; \Lambda], (\PMod N \BS S) \SL N, N \to S}
{\infer[\SL L]{[N], (\PMod N \BS S) \SL N, N \to S}
{N \to N & \infer[\BS L]{[N], \PMod N \BS S \to S}
{\infer[\PMod R]{[N] \to \PMod N}{N \to N} & S \to S}}}}}}}
&
\infer=[\NMod L]{CN, [[ \NMod\NMod (CN \BS CN) ]] \to CN}
{\infer[\BS L]{CN, CN \BS CN \to CN}{CN \to CN & CN \to CN}}
}
$$
}
}
\caption{Derivation for {\sl *``man {\rm [[}who likes{\rm ]]}''} in $\sysA$ (cf.~\cite{Morrill2018JLM})}
\label{Fig:manwholikes}
\end{figure}

The problem here is the empty island (subject of the dependent clause) generated by the ${!}C$ rule.
This issue was one of the motivations for~\citet{Morrill2018JLM} to introduce the new system~$\sysB$, which features
another version of ${!}C$.

With this new version, the island for parasitic extraction should be given in the bracketing of the goal sequent.
Moreover, it should be declared as a strong (double-bracketed) island, and then the ${!}C$ rule will transform it
into a weak one. The erroneous phrase {\sl *``man who likes,''} however, can still be parsed by $\sysB$, but requires
the empty subject island to be explicitly introduced in the bracketing, see Figure~\ref{Fig:manwholikesB}.

\begin{figure}
\centerline{{
$$
\infer[\SL L]
{CN, [[ (\NMod\NMod (CN \BS CN)) \SL (S \SL {!}N), [[ \Lambda ]], (\PMod N \BS S) \SL N ]]  \to CN}
{\infer[\SL R]{[[\Lambda]], (\PMod N \BS S) \SL N \to S \SL {!}N}
{\infer[{!}L]{[[\Lambda]],(\PMod N \BS S) \SL N, {!}N \to S}
{\infer[{!}C]{N; [[\Lambda]], (\PMod N \BS S) \SL N \to S}
{\infer[{!}P]{N; [N; \Lambda], (\PMod N \BS S) \SL N \to S}
{\infer[{!}P]{[N; \Lambda], (\PMod N \BS S) \SL N, N \to S}
{\infer[\SL L]{[N], (\PMod N \BS S) \SL N, N \to S}
{N \to N & \infer[\BS L]{[N], \PMod N \BS S \to S}
{\infer[\PMod R]{[N] \to \PMod N}{N \to N} & S \to S}}}}}}}
&
\infer=[\NMod L]{CN, [[ \NMod\NMod (CN \BS CN) ]] \to CN}
{\infer[\BS L]{CN, CN \BS CN \to CN}{CN \to CN & CN \to CN}}
}
$$
}
}
\caption{Derivation of {\sl *``man {\rm [[}who {\rm [[\ ]]} likes{\rm ]]}''} in $\sysB$ (notice the empty
subject island)}\label{Fig:manwholikesB}
\end{figure}

An easy way of making phrases like {\sl *``man who likes''} invalid in grammars based on $\sysB$ is to forbid the user (or an automated
bracket-inducing system) to put empty bracket domains on the original sentence. This is essentially the idea which motivates
the usage of $\sysB$ in favour of $\sysA$: in $\sysA$, the brackets embracing an empty island appeared only inside the derivation, while
in the newer system $\sysB$ they should be provided as an input, which could be disallowed externally.

 A more logically consistent approach, however, requires imposing non-emptiness restriction
systematically for all sequents in derivations. The restriction is formulated as follows: 
\begin{center}
{\em every meta-formula, both the whole antecedent and each bracketed domain, should be non-empty.}
\end{center}
Non-emptiness of a meta-formula means that it is not equal to the empty one, $\varnothing;\Lambda$. In other words, it should
{\em either} include a non-empty sequence of formulae, {\em or} have a non-empty stoup.

The first thing one has to do in order to maintain Lambek's restriction is to remove the unit constant $\U$. The unit essentially
means ``empty,'' and there is no consistent way of reconciling it with Lambek's restriction. Indeed, having the unit, we can put it
into any meta-formula, thus making it formally non-empty. (Unfortunately, it seems that Morrill needs the unit for handling discontinuity,
that is why he does not impose Lambek's restriction on his systems.)

Most of the $\sysBa$ rules keep this restriction, {\em i.e.,} if in the premises all meta-formulae are non-empty, then the same holds for the conclusion.
Only three rules need specifically imposed restrictions:
\begin{itemize}
\item for $\BS R$ and $\SL R$, we require that either $\Gamma \ne \Lambda$ or $\zeta \ne \varnothing$ (this is the original Lambek's restriction);
\item for the contraction rule, ${!}C'$, we require that either $\Gamma_2 \ne \Lambda$ or $\zeta' \ne \varnothing$.
\end{itemize}
The latter constraint exactly captures the idea that parasitic gapping into an empty bracketed island is ungrammatical (cf. the ``man that likes'' example above).

We denote the version of $\sysBa$ with Lambek's restriction by $\sysBar$.

\section{Cut Elimination in Modified Systems}\label{S:cutelim}

In this section we prove that the cut rule in the following form
$$
\infer[\CUT]
{\Xi(\xi, \zeta; \Gamma_1, \Pi, \Gamma_2) \to C}
{\xi; \Pi \to A & \Xi(\zeta; \Gamma_1, A, \Gamma_2) \to C}
$$
is admissible in the following calculi: $\sysBa$, $\sysBar$, and $\sysAa$. We show this by a 
single inductive argument for $\sysBa$ and $\sysBar$, and then make necessary changes for $\sysAa$.

\begin{theorem}\label{Th:cutelim}
Let sequents $\xi; \Pi \to A$ and $\Xi(\zeta; \Gamma_1, A, \Gamma_2) \to C$ be derivable
in $\sysBa$ or $\sysBar$.
Then 
$\Xi(\xi,\zeta; \Gamma_1, \Pi, \Gamma_2) \to C$ is also derivable
in $\sysBa$ or, respectively, $\sysBar$.
\end{theorem}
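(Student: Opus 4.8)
The plan is to prove cut admissibility by the standard syntactic method: a double induction on the pair $(\kappa, d)$, where $\kappa$ is the complexity of the cut formula $A$ (counting all connectives, including $!$, $\PMod$, $\NMod$, and bracket structure), and $d$ is the total height of the two given derivations. The goal is the usual ``cut-reduction lemma'': given a cut instance whose premises are derived by rules $\rho_{\ell}$ (producing $\xi;\Pi \to A$) and $\rho_{r}$ (producing $\Xi(\zeta;\Gamma_1, A, \Gamma_2)\to C$), we rewrite it into one or more cut instances each of which either has strictly smaller cut-formula complexity, or the same complexity but strictly smaller combined height. First I would organize the case analysis into the familiar three groups: (i) axiom cases, where one premise is $\mathrm{id}$ or $\One R$ and the cut disappears; (ii) principal/principal cases, where $A$ is introduced by a right rule in $\rho_\ell$ and the matching left rule in $\rho_r$, handled by replacing the cut on $A$ by cuts on the immediate subformulae of $A$ (lower $\kappa$); and (iii) the commutative cases, where the last rule in at least one premise does not touch the cut formula, so the cut is permuted upward past $\rho_\ell$ or $\rho_r$ (lower $d$, same $\kappa$).

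The multiplicative and additive cases ($\BS, \SL, \mconj, \vee, \wedge, \One$) are essentially the classical Lambek/MALC reductions, adapted to the metasyntax $\Xi(\cdots)$; the only genuinely new bookkeeping there is that the stoups must be tracked through the stoup-aware cut rule, so in a principal $\mconj$ case, for instance, the split $\zeta_1,\zeta_2$ of the stoup on the right-premise side has to be matched against the $\xi$ carried by the cut, and I would verify that the context-merging performed by the cut rule is compatible with each left rule's own context-merging. The bracket-modality cases ($\PMod, \NMod$) are handled analogously, noting that $\PMod R$ and $\NMod R$ add or consume a bracket layer, so the principal reduction for $\PMod$ pairs $\PMod R$ (wrapping $\Xi$ in $[\cdot]$) against $\PMod L$ (turning $\PMod A$ into $[A]$), and the resulting cut is on the smaller formula $A$ inside a matched bracket.

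The hard part will be the cases involving $!$, and in particular the interaction of cut with the contraction rule $!C'$. When the cut formula is of the shape $!B$ introduced on the right by $!R'$ (so the left premise is $A;\Lambda \to !B$ coming from $A;\Lambda\to B$, with $A$ the unique stoup formula) and on the right by $!L$ or consumed by $!C'$, the reduction must correctly duplicate the ``generator'' derivation of $A;\Lambda\to B$ into each copy produced by contraction, and simultaneously maintain the bracket-transformation discipline ($[[\zeta';\Gamma_2]] \rightsquigarrow [\zeta',A;\Gamma_2]$) that $!C'$ enforces. This is where the nonstandard, bracket-sensitive shape of $!C'$ bites: permuting a cut upward past $!C'$ can change which stoup a copied formula lands in, so I would prove an auxiliary substitution/weakening lemma stating that a derivation of $\xi;\Pi\to A$ can be spliced into any occurrence of $A$ sitting in a stoup (not only in a linear position), matching the behaviour of $!L$ and $!P$. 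I expect the commutation of cut with $!C'$ to require a subinduction or a carefully chosen induction parameter, because $!C'$ increases bracket nesting downward; the stoup-aware formulation of cut is precisely what makes this go through, since it lets a cut ``follow'' its formula into a stoup rather than getting stuck at the stoup boundary.

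Finally, for the Lambek-restricted system $\sysBar$ the same induction applies verbatim, with one extra verification: every reduced derivation must preserve non-emptiness of all meta-formulae. The only rules carrying side conditions are $\BS R$, $\SL R$, and $!C'$, so I would check that none of the cut reductions can manufacture an empty antecedent or empty bracketed domain — the delicate point being the $!$-principal case, where the restriction on $!C'$ (either $\Gamma_2\ne\Lambda$ or $\zeta'\ne\varnothing$) guarantees that the copied material never produces an empty island. Since the two systems differ only by these side conditions and by the absence of $\One$ in $\sysBar$, treating them in a single inductive argument, as the statement promises, amounts to carrying the non-emptiness invariant alongside the main reduction and discharging it case by case.
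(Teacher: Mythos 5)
Your overall architecture (double induction on cut-formula complexity and combined derivation height, with axiom / principal / commutative cases) matches the paper's for everything except the one case that carries the real content of the theorem, and there your proposal has a genuine gap. In the principal $!$ case the left premise is $B;\Lambda\to{!}A$ obtained by ${!}R'$, the right premise ends in ${!}L$, and the cut formula $A$ moves into a stoup; from there it is copied by ${!}C'$ into the stoups of bracketed islands arbitrarily far up the derivation. You correctly see that the generator derivation must be duplicated into each such copy and propose an auxiliary ``splice into the stoup'' lemma, but you then say you ``expect the commutation of cut with ${!}C'$ to require a subinduction or a carefully chosen induction parameter'' without exhibiting one. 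No such parameter exists for the local permutation you describe: permuting a stoup-level cut upward past a contraction of the cut formula produces two cuts stacked one under the other, and the lower cut's right premise is the conclusion of the upper cut, so neither the height nor the complexity measure decreases. This is exactly the point the paper flags as the obstruction to the naive strategy, and it is why the paper does \emph{not} permute cut past ${!}C'$ at all in this case. (A secondary issue: your auxiliary lemma is stated for a general left premise $\xi;\Pi\to A$, but a linear sequence $\Pi$ cannot be substituted into a stoup; the construction only makes sense because ${!}R'$ forces the left premise to have the form $B;\Lambda\to A$ with a single stoup formula $B$ and empty linear part.)

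What the paper does instead is a non-local, ``deep'' cut elimination step in the style of Bra\"uner and de Paiva: trace the designated stoup occurrence of $A$ upward through $\DerR$ (the trace branches at ${!}C'$, $\wedge R$, and $\vee L$, and each branch terminates at a principal ${!}P$), replace every traced stoup occurrence of $A$ by $B$ in one global rewriting of the tree --- all ${!}C'$ applications remain valid verbatim --- and convert each terminal ${!}P$ into a cut with left premise $B;\Lambda\to A$. Every cut so created is on the strictly smaller formula $A$ rather than ${!}A$, so the induction on $\kappa$ closes. The alternative classical fix, a Gentzen-style mix rule absorbing contractions, would also work in principle but is what the paper deliberately avoids because formulating mix in the presence of brackets and per-island stoups is unmanageable. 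To complete your proof you must either adopt this tracing construction or define and verify a bracket- and stoup-aware mix; as written, the induction does not terminate. Your remarks on the remaining cases (the MALC and bracket-modality reductions, and the preservation of the non-emptiness side conditions for $\sysBar$, including the condition on ${!}C'$) are consistent with the paper's treatment.
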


The proof of cut elimination traditionally goes by nested induction: on the complexity of the formula being cut,
and on the depth of the cut, that is, the number of rules applied in the derivation above the cut.

For the original Lambek calculus, cut elimination was shown by~\citet{Lambek58}. \citet{MoortgatMultimodal}
extended Lambek's proof to the Lambek calculus with brackets. 
The presence of ${!}$ and
stoups, however, makes cut elimination more involved. Namely, the principal case for ${!}$ moves the active formula
being cut to the stoup:
$$
\infer[\CUT]
{\Xi(\zeta, B; \Gamma_1, \Gamma_2) \to C}
{\infer[{!}R']{B; \Lambda \to {!}A}{B; \Lambda \to A} & 
\infer[{!}L]{\Xi(\zeta; \Gamma_1, {!}A, \Gamma_2) \to C}
{\Xi(\zeta, A; \Gamma_1, \Gamma_2) \to C}}
$$
Propagating the cut upwards in this situation would require a specific version of cut for formulae inside the stoup,
and eliminate it together with the usual cut rule by simultaneous induction.
Contraction, however, raises yet another issue with propagating cut. Namely, if we contract the formula $A$ being cut,
then after propagation we get two cut applications, one under another. For the lower cut, we fail to maintain the decrease
of induction parameters, see~\citet{KanKuzNigSce2018Dale}. 

The standard strategy, going back to~\citet{Gentzen} and applied to linear logic with exponentials by~\citet{Girard} and~\citet{LMSS},
replaces the cut rule with a more general rule called mix. Mix is a combination of cut and contractions, and this more general rule 
is then eliminated by a straightforward inductive argument. In the presence of brackets and stoups, however, formulating mix becomes
an extremely tedious job. In the view of that, we follow another strategy, ``deep cut elimination'' by~\citet{BraunerBRICS,dePaiva};
see also~\cite{Brauner2000IGPL,EadesPaiva}.

\begin{proof}
Let $\xi; \Pi \to A$ and $\Xi(\zeta; \Gamma_1, A, \Gamma_2 \to C)$
 have cut-free derivations $\DerL$ and $\DerR$ respectively.
We proceed by nested induction on two parameters: $\kappa$, the complexity of the formula $A$ being cut;
$\sigma$, the total number of rule applications in the derivations of $\DerL$ and $\DerR$.
In each case either $\kappa$ gets reduces, or $\sigma$ gets reduced with
the same $\kappa$.

We consider the lowermost rules of $\DerL$ and $\DerR$.

We call ${!}P$ and ${!}C$ {\em structural} rules; all other rules (excluding cut, which is not allowed
in our derivations) are {\em logical} ones. Being the lowermost rule of $\DerL$ or $\DerR$, a logical rule is called
{\em principal,} if it introduces the formula $A$ being cut. 
The axiom $\Lambda\to\U$ in this proof is considered a principal rule (with no premises) introducing $\U$.
Structural rules are never principal, since they operate only inside the stoup, while the formula $A$
being cut is not in the stoup.

First we list all possible cases, with short comments, and then accurately consider each of them:

\begin{enumerate}
\item The lowermost rule in $\DerL$ is $!R$ and the lowermost rule in $\DerR$ is $!L$ ({\em i.e.,} the principal case with ${!}$). 
This is actually the most interesting case, in which deep cut elimination
differs from traditional cut elimination schemes. In this case, we are going to perform a non-local transformation
of the $\DerR$ tree, as shown below.
\item Both lowermost rules of $\DerL$ and $\DerR$ are principal, and $A$ is not of the form ${!}A'$ (if it is, we are in Case~1).
This is the standard principal case for cut elimination in the Lambek calculus: the tricky part with ${!}$ is considered in
Case~1, not here.
\item The lowermost rule in $\DerL$ is a non-principal one. In this case we propagate cut to the left.
\item The lowermost rule in $\DerR$ is a non-principal one. Propagate cut to the right.
\item One of the premises of cut is an axiom of the form $A \to A$. Cut disappears. 
\end{enumerate}

{\bf Case 1 (deep: principal for ${!}$):} the lowermost rule in $\DerL$ is 
${!}R$ and the lowermost rule in $\DerR$ is ${!}L$. Cut is applied as follows:
$$
\infer[\CUT]{\Xi(\zeta,B; \Gamma', \Gamma'') \to C}
{\infer[!R]{B; \Lambda \to {!}A}{B; \Lambda \to A} & 
\infer[!L]{\Xi(\zeta; \Gamma', {!}A, \Gamma'') \to C}
{\Xi(\zeta, A; \Gamma', \Gamma'') \to C)}}
$$

Let us trace the designated occurrence of $A$ inside the stoup upwards along $\DerR$. 
Each principal $!C'$ application branches the trace.
The trace also branches on applications of $\wedge R$ and $\vee L$.
Each branch ends at a principal application of ${!}P$ (see Figure~\ref{Fig:deep1}).

\begin{figure}
\includegraphics[scale=.9]{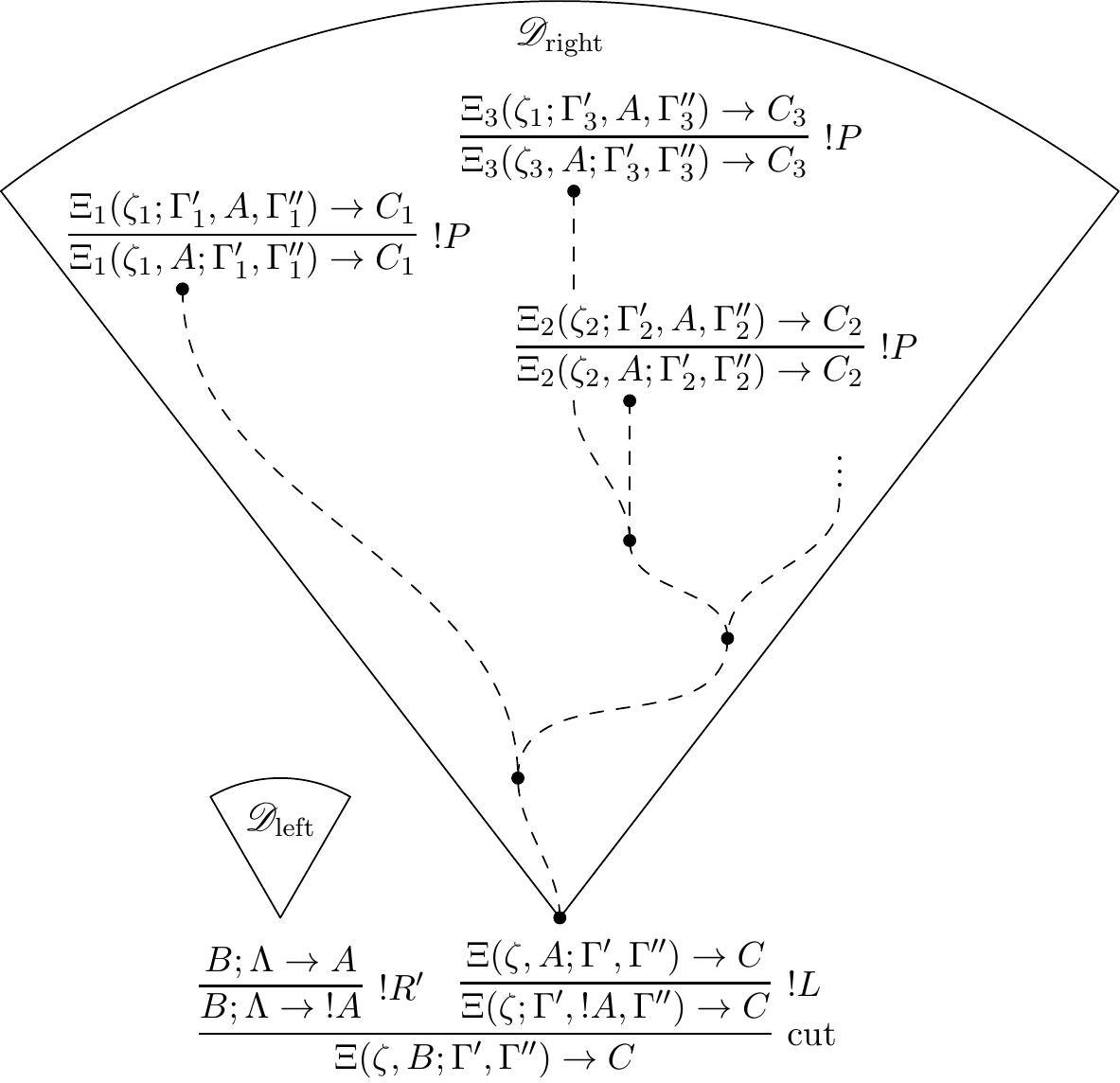}
\caption{Tracing $A$ in the stoup up to ${!}P$}\label{Fig:deep1}
\end{figure}

Now we perform 
the deep cut elimination step. In $\DerR$, we replace the designated occurrences of $A$ in the stoup with $B$.
The applications of $!C'$ remain valid. Other
rules do not operate $A$ in the stoup and therefore remain intact. After this replacement
applications of ${!}P$ transform into applications of cut with $B;\Lambda \to A$ as the left
premise (Figure~\ref{Fig:deep2}). One trace could go through several instances of ${!}P$ with
the active $A$, like $\Xi_2$ and $\Xi_3$ in the example; in this case we go from top to
bottom.

\begin{figure}
\includegraphics[scale=.85]{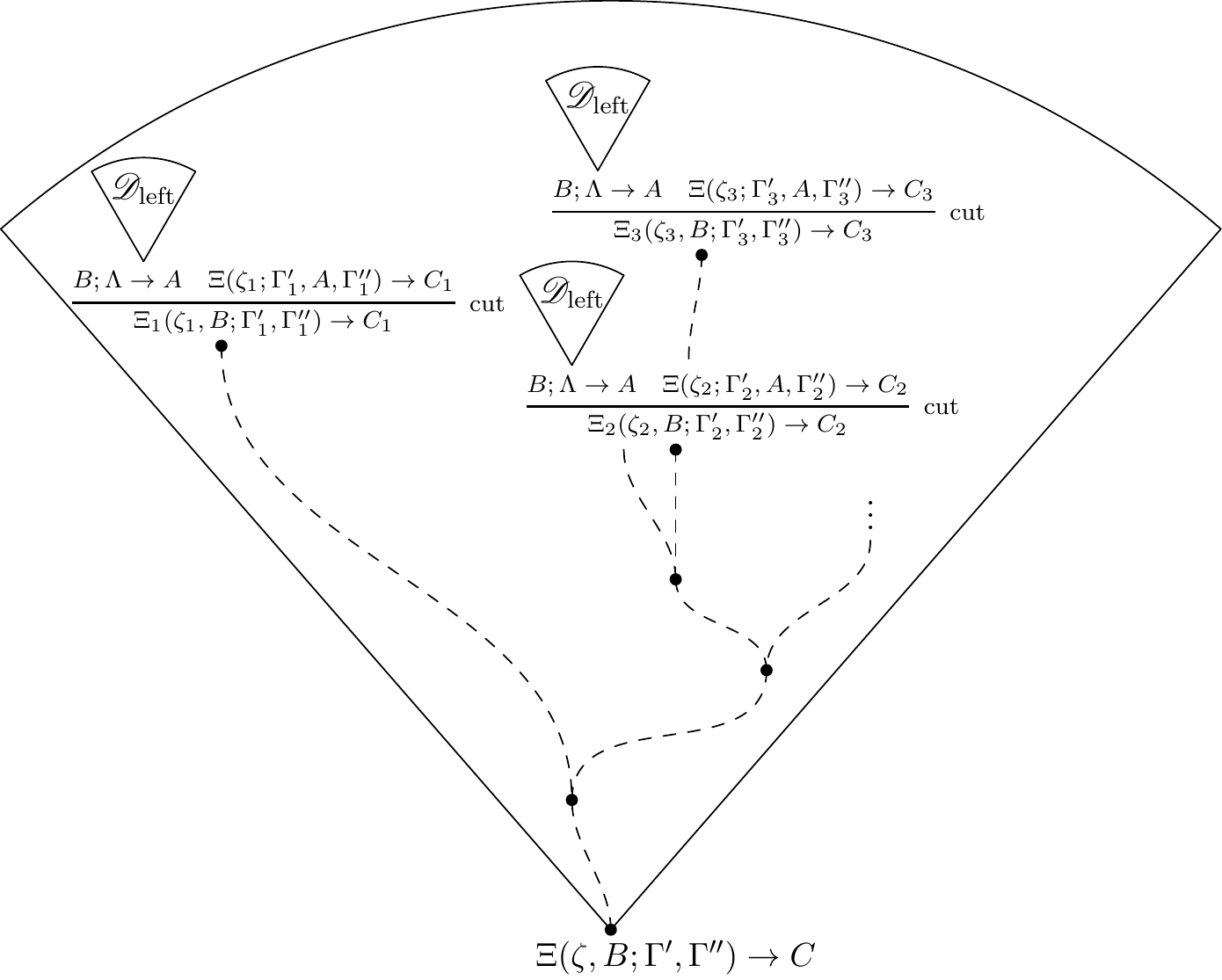}
\caption{Deep cut elimination}\label{Fig:deep2}
\end{figure}

The new cuts have lower $\kappa$ (the cut formula is $A$ instead of ${!}A$), and therefore
they are eliminable by induction hypothesis.

For the case with Lambek's restriction, notice that in the deep cut elimination step we just
changed $A$ to $B$ in the stoups, so Lambek's restriction could not get violated.

The figures illustrating deep cut elimination (Figure~\ref{Fig:deep1} and Figure~\ref{Fig:deep2}) are taken
from~\citet{KanKuzSceFCT}, with the necessary changes for calculi with stoups.

{\bf Case 2 (principal, but not ${!}$).} As said before, this case is the standard principal case for cut elimination
in the multiplicative-additive Lambek calculus with brackets~\citep{MoortgatMultimodal}, since ${!}$ does not appear in this case.
Adding the stoups makes only a minor difference.
All the interesting things about ${!}$ have already happened in the ``deep'' Case~1.

So, the main connective of $A$ is not ${!}$. Consider other possible cases.

{\em Subcase 2.a.} $A = A_1 \BS A_2$ or $A = A_2 \SL A_1$ (the latter is of course symmetric to the former).
In this case
$$\small
\infer[\CUT]{\Xi(\xi, \zeta_1, \zeta_2; \Delta_1, \Gamma, \Pi, \Delta_2) \to C}
{\infer[\BS R]{\xi; \Pi \to A_1 \BS A_2}{\xi; A_1, \Pi \to A_2} & 
\infer[\BS L]{\Xi(\zeta_1, \zeta_2; \Delta_1, \Gamma, A_1 \BS A_2, \Delta_2) \to C}{\zeta_1; \Gamma \to A_1 & \Xi(\zeta_2; \Delta_1, A_2, \Delta_2) \to C}}
$$
transforms into
$$\small
\infer[\CUT]{\Xi (\xi, \zeta_1, \zeta_2; \Delta_1, \Gamma, \Pi, \Delta_2) \to C}
{\zeta_1; \Gamma \to A_1 & \infer[\CUT]{\Xi(\xi, \zeta_2; \Delta_1, A_1, \Pi, \Delta_2) \to C}{\xi; A_1, \Pi \to A_2 & \Xi(\zeta_2; \Delta_1, A_2, \Delta_2) \to C}}
$$
Both new cuts have a smaller $\kappa$ parameter (and then we do not care for $\sigma$).

Here in the new derivation no new $\BS R$ instance was added, so Lambek's restriction is observed.

{\em Subcase 2.b.} $A = A_1 \cdot A_2$.
In this case
$$\small
\infer[\CUT]{\Xi (\xi_1, \xi_2, \zeta; \Delta_1, \Pi_1, \Pi_2, \Delta_2) \to C}
{\infer[\cdot R]{\xi_1, \xi_2; \Pi_1, \Pi_2 \to A_1 \cdot A_2}{\xi_1; \Pi_1 \to A_1 & \xi_2; \Pi_2 \to A_2} & 
\infer[\cdot L]{\Xi(\zeta; \Delta_1, A_1 \cdot A_2, \Delta_2) \to C}{\Xi(\zeta; \Delta_1, A_1, A_2, \Delta_2) \to C}}
$$
transforms into
$$\small
\infer[\CUT]{\Xi (\xi_1, \xi_2, \zeta; \Delta_1, \Pi_1, \Pi_2, \Delta_2) \to C}
{\xi_1; \Pi_1 \to A_1 & \infer[\CUT]{\Xi(\xi_2, \zeta; \Delta_1, A_1, \Pi_2, \Delta_2) \to C}{\xi_2; \Pi_2 \to A_2 & 
\Xi(\zeta; \Delta_1, A_1, A_2, \Delta_2) \to C}}
$$
(again $\kappa$ decreases).

{\em Subcase 2.c.} $A = \U$:
$$\small
\infer[\CUT]{\Xi(\zeta; \Delta_1, \Delta_2) \to C}{\Lambda \to \U & \infer[\U L]{\Xi(\zeta; \Delta_1, \U, \Delta_2) \to C}
{\Xi(\zeta; \Delta_1, \Delta_2) \to C}}
$$
The goal coincides with the right premise, so we just remove this detour and arrive at a cut-free proof of
$\Xi(\zeta; \Delta_1, \Delta_2) \to C$.

{\em Subcase 2.d.} $A = A_1 \vee A_2$:
$$\small
\infer[\CUT]{\Xi(\xi,\zeta; \Delta_1, \Pi, \Delta_2) \to C}
{\infer[\vee R_i]{\xi; \Pi \to A_1 \vee A_2}{\xi; \Pi \to A_i} & 
\infer[\vee L]{\Xi(\zeta; \Delta_1, A_1 \vee A_2, \Delta_2) \to C}
{\Xi(\zeta; \Delta_1, A_1, \Delta_2) \to C & \Xi(\zeta; \Delta_1, A_2, \Delta_2) \to C}}
$$
transforms into
$$\small
\infer[\CUT]{\Xi(\xi,\zeta; \Delta_1, \Pi, \Delta_2) \to C}
{\xi; \Pi \to A_i & \Xi(\zeta; \Delta_1, A_i, \Delta_2) \to C}
$$
($\kappa$ gets decreased, and the derivation of $\Pi \to A_j$ for $j \ne i$ gets forgotten).

{\em Subcase 2.e.} $A = A_1 \wedge A_2$:
$$\small
\infer[\CUT]{\Xi(\xi,\zeta; \Delta_1, \Pi, \Delta_2) \to C}
{\infer[\wedge R]{\xi; \Pi \to A_1 \wedge A_2}{\xi; \Pi \to A_1 & \xi; \Pi \to A_2} 
& \infer[\wedge L_j]{\Xi(\xi,\zeta; \Delta_1, A_1 \wedge A_2, \Delta_2) \to C}
{\Xi(\xi,\zeta; \Delta_1, A_j, \Delta_2) \to C}}
$$
transforms into
$$\small
\infer[\CUT]{\Xi(\xi,\zeta; \Delta_1, \Pi, \Delta_2) \to C}
{\xi; \Pi \to A_j & \Xi(\xi,\zeta; \Delta_1, A_j, \Delta_2) \to C}
$$
($\kappa$ gets decreased, and the derivation of $\xi; \Pi \to A_i$ for $i \ne j$ gets forgotten).

{\em Subcase 2.f.} $A = \PMod A'$. Notice that here $\xi$ is empty, otherwise $\PMod R$ could not be applied.
$$\small
\infer[\CUT]{\Xi(\zeta; \Delta_1, [\Pi], \Delta_2) \to C}
{\infer[\PMod R]{[\Pi] \to \PMod A'}{\Pi \to A'} & 
\infer[\PMod L]{\Xi(\zeta; \Delta_1, \PMod A', \Delta_2 \to C}{\Xi(\zeta; \Delta_1, [A'], \Delta_2) \to C}}
$$
transforms into
$$\small
\infer[\CUT]{\Xi(\zeta; \Delta_1, [\Pi], \Delta_2) \to C}{\Pi \to A' & \Xi(\zeta; \Delta_1, [A'], \Delta_2) \to C}
$$
($\kappa$ decreases).

{\em Subcase 2.g.} $A = \NMod A'$. In this subcase, notice that the stoup of the meta-formula including the active $\NMod A'$ is empty, by the $\NMod L$ rule;
$\zeta'$ below is the stoup of a {\em different} bracketed domain.
$$\small
\infer[\CUT]{\Xi(\zeta'; \Delta_1, [\xi; \Pi], \Delta_2) \to C}
{\infer[\NMod R]{\xi;\Pi \to \NMod A'}{[\xi;\Pi] \to A'} & 
\infer[\NMod L]{\Xi(\zeta'; \Delta_1, [\NMod A'], \Delta_2) \to C}{\Xi(\zeta'; \Delta_1, A', \Delta_2) \to C}}
$$
transforms into
$$\small
\infer[\CUT]{\Xi(\zeta'; \Delta_1, [\xi; \Pi], \Delta_2) \to C}{[\xi; \Pi] \to A' & \Xi(\zeta'; \Delta_1, A', \Delta_2) \to C}
$$
($\kappa$ decreases).

{\bf Case 3 (left non-principal).}
The lowermost rule of $\DerL$ is non-principal if and only if it is a left rule, {\em i.e.,}
operates in the antecedent. 

{\em Subcase 3.a.} The lowermost rule in $\DerL$ is one of the one-premise rules,
$\cdot L$, $\U L$, $\vee L$, $\PMod L$, $\NMod L$, ${!}L$, ${!}P$, ${!} C'$. 
Such rules are called {\em easy rules} in~\cite{KanKuzNigSce2018Dale}, therefore we denote this rule by $ER$:
$$\small
\infer[\CUT]{\Xi(\xi,\zeta; \Gamma_1, \Pi, \Gamma_2) \to C}{\infer[ER]{\xi;\Pi \to A}{\xi'; \Pi' \to A} & 
\Xi(\zeta; \Gamma_1, A, \Gamma_2) \to C}
$$
The easy rule is still valid in a larger context, where $\Pi$ is put between $\Gamma_1$ and $\Gamma_2$ and $\xi$ is 
added to $\zeta$, therefore one can reconstruct the
derivation as follows:
$$\small
\infer[ER]{\Xi(\xi,\zeta; \Gamma_1, \Pi, \Gamma_2) \to C}
{\infer[\CUT]{\Xi(\xi',\zeta; \Gamma_1, \Pi', \Gamma_2) \to C}{\xi';\Pi' \to A & \Xi(\zeta; \Gamma_1, A, \Gamma_2) \to C}}
$$
The new cut has a smaller $\sigma$ with the same $\kappa$, and can be eliminated by induction.
Lambek's restriction on ${!}C'$, if it was imposed, is kept.

{\em Subcase 3.b.} The lowermost rule in $\DerL$ is $\BS L$ or $\SL L$.
Here
$$\small
\infer[\CUT]
{\Xi(\xi,\zeta; \Gamma_1, \Pi(\sigma,\chi; \Delta_1, \Psi, E \BS F, \Delta_2), \Gamma_2) \to C}
{\infer[\BS L]{\xi; \Pi(\sigma,\chi; \Delta_1, \Psi, E \BS F, \Delta_2) \to A}
{\chi; \Psi \to E & \xi; \Pi(\sigma; \Delta_1, F, \Delta_2) \to A} & 
\Xi(\zeta; \Gamma_1, A, \Gamma_2) \to C}
$$
transforms into
$$\small
\infer[\SL L]
{\Xi(\xi,\zeta; \Gamma_1, \Pi(\sigma,\chi; \Delta_1, \Psi, E \BS F, \Delta_2), \Gamma_2) \to C}
{\chi; \Psi \to E & 
\infer[\CUT]{\Xi(\xi,\zeta; \Gamma_1, \Pi(\sigma; \Delta_1, F, \Delta_2), \Gamma_2) \to C}
{\xi; \Pi(\sigma; \Delta_1, F, \Delta_2) \to A & \Xi(\zeta; \Gamma_1, A, \Gamma_2) \to C}}
$$
In particular, $\Pi$ could just coincide with the designated meta-formula inside. In this case $\sigma,\chi$ gets
merged with $\xi,\zeta$ into one stoup.

The new cut has a smaller $\sigma$ with the same $\kappa$. The $\SL$ case is symmetric.

{\em Subcase 3.c.} The lowermost rule in $\DerL$ is $\vee L$. Then a derivation of the sequent $\Xi(\xi,\zeta; \Gamma_1, \Pi(\sigma; \Delta_1, B_1 \vee B_2, \Delta_2), \Gamma_2) \to C$ by cut from $\xi; \Pi(\sigma; \Delta_1, B_1 \vee B_2, \Delta_2) \to A$ and $\Xi(\zeta; \Gamma_1, A, \Gamma_2) \to C$ is transformed into a derivation by $\vee L$, whose premises are derived as follows ($i=1,2$):
$$\small
\infer[\CUT]{\Xi(\xi,\zeta; \Gamma_1, \Pi(\sigma; \Delta_1, B_i, \Delta_2), \Gamma_2) \to C}
{\xi; \Pi(\sigma; \Delta_1, B_i, \Delta_2) \to A & 
\Xi(\zeta; \Gamma_1, A, \Gamma_2) \to C}
$$
In particular, $\Pi$ could just coincide with the designated meta-formula inside. In this case $\sigma$ gets
merged with $\xi,\zeta$ into one stoup.

Now we have two cuts, but each of it has a smaller $\sigma$ parameter with the same $\kappa$, and they are independent, {\em i.e.,} derivations of the premises of
these cuts are already cut-free. Thus, we proceed by induction.

{\bf Case 4 (right non-principal).} 

{\em Subcase 4.a.}
In the right case we also have the notion of {\em easy rule,} which is a one-premise rule that does something
in the context ($\Xi$, $\zeta$, $\Gamma_1$, $\Gamma_2$, $C$) while keeping the active occurrence of $A$ intact. These rules are non-principal instances of 
$\BS R$, $\SL R$, $\cdot L$, $\U L$, $\wedge L_i$, $\vee R_i$, $\PMod L$, $\PMod R$, $\NMod L$, $\NMod R$, ${!}L$, ${!}P$. Contraction,
${!}C'$, is also essentially an easy rule, but we consider it more  accurately below (Subcase~3.a$'$).
For easy rules,
the transformation is as follows:
$$\small
\infer[\CUT]{\Xi(\xi,\zeta; \Gamma_1, \Pi, \Gamma_2) \to C}
{\xi; \Pi \to A & \infer[ER]{\Xi(\zeta; \Gamma_1, A, \Gamma_2) \to C}{\Xi'(\zeta'; \Gamma'_1, A, \Gamma'_2) \to C'}}
$$
transforms into
$$\small
\infer[ER]{\Xi(\xi,\zeta; \Gamma_1, \Pi, \Gamma_2) \to C}
{\infer[\CUT]{\Xi'(\xi,\zeta'; \Gamma'_1, \Pi, \Gamma'_2) \to C'}{\xi; \Pi \to A & \Xi'(\zeta'; \Gamma'_1, A, \Gamma'_2) \to C'}}
$$
which is legal, since the easy rule is still valid with $\Pi$ substituted for $A$ (recall that $A$ was not the active
formula in the easy rule) and $\xi$ added to the stoup. Moreover, if the original application of the easy rule, $\BS R$ or $\SL R$,
obeyed Lambek's restriction, so will the new one. The $\sigma$ parameter decreases, with the same $\kappa$.

{\em Subcase 4.a$'$.} The lowermost rule in $\DerR$ is ${!}C'$. The interesting situation here is when contraction and cut are
performed in the same meta-formula; otherwise ${!}C'$ acts as an easy rule, considered in the previous subcase. Since ${!}C'$ operates two
bracketed domains (the outer domain and the island), we have two subsituations.

Namely,
$$\small
\infer[\CUT]{\Xi(\xi,\zeta,B; \Gamma'_1, \Pi, \Gamma''_1, [[\zeta'; \Gamma_2]], \Gamma_3) \to C}
{\xi;\Pi \to A &  \infer[{!}C']{\Xi(\zeta, B; \Gamma'_1, A, \Gamma''_1, [[\zeta'; \Gamma_2]], \Gamma_3) \to C}
{\Xi(\zeta, B; \Gamma'_1, A, \Gamma''_1, [\zeta', B; \Gamma_2], \Gamma_3) \to C}}
$$
transforms into
$$\small
\infer[{!}C']{\Xi(\xi,\zeta,B; \Gamma'_1, \Pi, \Gamma''_1, [[\zeta'; \Gamma_2]], \Gamma_3) \to C}
{\infer[\CUT]{\Xi(\xi,\zeta,B; \Gamma'_1, \Pi, \Gamma''_1, [\zeta',B; \Gamma_2], \Gamma_3) \to C}
{\xi;\Pi \to A & \Xi(\zeta, B; \Gamma'_1, A, \Gamma''_1, [\zeta', B; \Gamma_2], \Gamma_3) \to C}}
$$
(the case with $A$  in $\Gamma_3$ is considered symmetrically), and
$$\small
\infer[\CUT]{\Xi(\zeta,B; \Gamma_1, [[\xi,\zeta'; \Gamma'_2, \Pi, \Gamma''_2]], \Gamma_3) \to C}
{\xi;\Pi \to A & \infer[{!}C']{\Xi(\zeta,B; \Gamma_1, [[\zeta'; \Gamma'_2, A, \Gamma''_2]], \Gamma_3) \to C}
{\Xi(\zeta,B; \Gamma_1, [\zeta', B; \Gamma'_2, A, \Gamma''_2], \Gamma_3) \to C}}
$$
transforms into
$$\small
\infer[{!}C']{\Xi(\zeta,B; \Gamma_1, [[\xi,\zeta'; \Gamma'_2, \Pi, \Gamma''_2]], \Gamma_3) \to C}
{\infer[\CUT]{\Xi(\zeta,B; \Gamma_1, [\xi,\zeta',B; \Gamma'_2, \Pi, \Gamma''_2], \Gamma_3) \to C}
{\xi;\Pi \to A & \Xi(\zeta,B; \Gamma_1, [\zeta', B; \Gamma'_2, A, \Gamma''_2], \Gamma_3) \to C}}
$$

The $\sigma$ parameter decreases, with the same $\kappa$. Lambek's restriction on ${!}C$, if imposed, is kept.

{\em Subcase 4.b} (the counterpart of 3.b). The lowermost rule in $\DerR$ is a non-principal instance of $\BS L$ or $\SL L$.
We consider $\BS L$; $\SL L$ is symmetric. There are three possible situations, depending on the relative locations of the active $A$ of cut
and the active $E \BS F$ of $\BS L$.

If the active $A$ goes to the left premise of $\BS L$, then
$$\small
\infer[\CUT]{\Xi(\chi,\sigma; \Delta_1, \Phi(\xi,\zeta; \Gamma_1, \Pi, \Gamma_2), E \BS F, \Delta_2) \to C}
{\xi;\Pi \to A & 
\infer[\BS L]{\Xi(\chi,\sigma; \Delta_1, \Phi(\zeta; \Gamma_1, A, \Gamma_2), E \BS F, \Delta_2) \to C}
{\chi; \Phi(\zeta; \Gamma_1, A, \Gamma_2) \to E & \Xi(\sigma; \Delta_1, F, \Delta_2) \to C}
}
$$
transforms into
$$\small
\infer[\BS L]
{\Xi(\chi,\sigma; \Delta_1, \Phi(\xi,\zeta; \Gamma_1, \Pi, \Gamma_2), E \BS F, \Delta_2) \to C}
{\infer[\CUT]{\chi; \Phi(\xi,\zeta; \Gamma_1, \Pi, \Gamma_2) \to E}{\xi;\Pi \to A & \chi; \Phi(\zeta; \Gamma_1, A, \Gamma_2) \to E} & 
\Xi(\sigma; \Delta_1, F, \Delta_2) \to C}
$$
In particular, $\Phi$ could be just $\xi,\zeta; \Gamma_1, \Pi, \Gamma_2$, in which case $\xi,\zeta$ gets merged with $\chi,\sigma$ into one stoup.

If the active $A$ goes to the right premise of $\BS L$, there are two more cases:
$$\small
\infer[\CUT]{\Xi(\chi,\sigma; \Delta_1 (\xi,\zeta; \Gamma_1, \Pi, \Gamma_2), \Phi, E \BS F, \Delta_2) \to C}
{\xi; \Pi \to A & 
\infer[\BS L]{\Xi(\chi,\sigma; \Delta_1(\zeta; \Gamma_1, A, \Gamma_2), \Phi, E \BS F, \Delta_2) \to C}
{\chi; \Phi \to E & \Xi(\sigma; \Delta_1(\zeta; \Gamma_1, A, \Gamma_2), F, \Delta_2) \to C}}
$$
transforms into
$$\small
\infer[\BS L]{\Xi(\chi,\sigma; \Delta_1 (\xi,\zeta; \Gamma_1, \Pi, \Gamma_2), \Phi, E \BS F, \Delta_2) \to C}
{\chi; \Phi \to E & 
\infer[\CUT]{\Xi(\sigma; \Delta_1(\xi,\zeta; \Gamma_1, \Pi, \Gamma_2), F, \Delta_2) \to C}
{\xi; \Pi \to A & \Xi(\sigma; \Delta_1(\zeta; \Gamma_1, A, \Gamma_2), F, \Delta_2) \to C}}
$$
(in particular, $\Delta_1$ could be just $\xi, \zeta; \Gamma_1, \Pi, \Gamma_2$, in which case $\xi,\zeta$ gets merged with
$\chi,\sigma$ into one stoup; the $\Delta_2$ case is symmetric), and, finally,
$$\small
\infer[\CUT]{\Xi(\xi,\zeta; \Gamma_1, \Pi, \Gamma_2)(\chi, \sigma; \Delta_1, \Phi, E \BS F, \Delta_2) \to C}
{\xi; \Pi \to A & \infer[\BS L]{\Xi(\zeta; \Gamma_1, A, \Gamma_2)(\chi, \sigma; \Delta_1, \Phi, E \BS F, \Delta_2) \to C}
{\chi; \Phi \to E & \Xi(\zeta; \Gamma_1, A, \Gamma_2)(\sigma; \Delta_1, F, \Delta_2) \to C}}
$$
transforms into
$$\small
\infer[\BS L]{\Xi(\xi,\zeta; \Gamma_1, \Pi, \Gamma_2)(\chi, \sigma; \Delta_1, \Phi, E \BS F, \Delta_2) \to C}
{\chi; \Phi \to E & \infer[\CUT]
{\Xi(\xi,\zeta; \Gamma_1, \Pi, \Gamma_2)(\sigma; \Delta_1, F, \Delta_2) \to C}
{\xi; \Pi \to A & \Xi(\zeta; \Gamma_1, A, \Gamma_2)(\sigma; \Delta_1, F, \Delta_2) \to C}}
$$
The notation $\Xi(\ldots)(\ldots)$ means a meta-formula with {\em two} designated sub-meta-formulae, which are
independent (i.e.,  do not intersect).

The $\sigma$ parameter decreases, with the same $\kappa$.

{\em Subcase 4.c} (similar to the previous one). The lowermost rule of $\DerR$ is $\cdot R$:
$$\small
\infer[\CUT]{\sigma_1, \sigma_2; \Gamma_1(\xi,\zeta; \Delta_1, \Pi, \Delta_2), \Gamma_2 \to E \cdot F}
{\xi;\Pi \to A & 
\infer[\cdot R]{\sigma_1, \sigma_2; \Gamma_1(\zeta; \Delta_1, A, \Delta_2), \Gamma_2 \to E \cdot F}
{\sigma_1; \Gamma_1(\zeta; \Delta_1, A, \Delta_2) \to E & \sigma_2; \Gamma_2 \to F}}
$$
transforms into
$$\small
\infer[\cdot R]{\sigma_1, \sigma_2; \Gamma_1(\xi,\zeta; \Delta_1, \Pi, \Delta_2), \Gamma_2 \to E \cdot F}
{\infer[\CUT]{\sigma_1; \Gamma_1(\xi, \zeta; \Delta_1, \Pi, \Delta_2) \to E}
{\xi;\Pi \to A & \sigma_1; \Gamma_1(\zeta; \Delta_1, A, \Delta_2) \to E}
& \sigma_2; \Gamma_2 \to F}
$$
Again, $\Gamma_1$ could coincide with its designated part; in this situation, $\xi,\zeta$ gets merged into $\sigma_1$.
The case where $A$ appears in $\Gamma_2$ is symmetric.

Here $\sigma$ decreases, with the same $\kappa$.

{\em Subcase 4.d} (the counterpart of 3.c). The lowermost rule in $\DerR$ is $\vee L$. Here we have two cases, depending on the
mutual location of the active $A$ of cut and the active $B_1 \vee B_2$ of $\vee L$. Namely, if they are in different (non-intersecting)
meta-formulae, then
$$\small
\infer[\CUT]{\Xi(\xi,\zeta; \Gamma_1, \Pi, \Gamma_2)(\sigma; \Delta_1, B_1 \vee B_2, \Delta_2) \to C}
{\xi;\Pi \to A & 
\infer[\vee L]{\Xi(\zeta; \Gamma_1, A, \Gamma_2)(\sigma; \Delta_1, B_1 \vee B_2, \Delta_2) \to C}
{\Xi(\zeta; \Gamma_1, A, \Gamma_2)(\sigma; \Delta_1, B_1, \Delta_2) \to C & 
\Xi(\zeta; \Gamma_1, A, \Gamma_2)(\sigma; \Delta_1, B_2, \Delta_2) \to C}}
$$
transforms into two cuts ($i=1,2$) with simpler cut formulae:
$$\small
\infer[\CUT]{\Xi(\xi,\zeta; \Gamma_1, \Pi, \Gamma_2)(\sigma; \Delta_1, B_i, \Delta_2) \to C}
{\xi;\Pi \to A & \Xi(\zeta; \Gamma_1, A, \Gamma_2)(\sigma; \Delta_1, B_i, \Delta_2) \to C}
$$
whose goals are then merged by $\vee L$.
and if $B_1 \vee B_2$ is in $\Gamma_1$ (or, symmetrically, in $\Gamma_2$), then
$$\small
\infer[\CUT]{\Xi(\xi,\zeta; \Gamma_1(\sigma; \Delta_1, B_1 \vee B_2, \Delta_2), \Pi, \Gamma_2) \to C}
{\xi; \Pi \to A & \infer[\vee L]{\Xi(\zeta; \Gamma_1(\sigma; \Delta_1, B_1 \vee B_2, \Delta_2), A, \Gamma_2) \to C}
{\Xi(\zeta; \Gamma_1(\sigma; \Delta_1, B_1, \Delta_2), A, \Gamma_2) \to C
 & \Xi(\zeta; \Gamma_1(\sigma; \Delta_1, B_2, \Delta_2), A, \Gamma_2) \to C}}
$$
 transforms into two cuts ($i=1,2$), whose goals again can be merged by $\vee L$:
 $$\small
 \infer[\CUT]{\Xi(\xi,\zeta; \Gamma_1(\sigma; \Delta_1, B_2, \Delta_2), \Pi, \Gamma_2) \to C}
{\xi; \Pi \to A & \Xi(\zeta; \Gamma_1(\sigma; \Delta_1, B_2, \Delta_2), A, \Gamma_2) \to C} 
 $$
Again, $\Gamma_1$ could coincide with its designated sub-meta-formula; in this case, $\sigma$ gets merged with 
$\xi,\zeta$.

In both situations, the two new cuts are independent and have a smaller $\sigma$ parameter with the same $\kappa$.

{\em Subcase 4.e} (similar to the previous one). The lowermost rule in $\DerR$ is $\wedge R$:
$$\small
\infer[\CUT]{\Xi(\xi,\zeta; \Gamma_1, \Pi, \Gamma_2) \to C_1 \wedge C_2}{\xi;\Pi \to A &
\infer[\wedge R]{\Xi(\zeta; \Gamma_1, A, \Gamma_2) \to C_1 \wedge C_2}
{\Xi(\zeta; \Gamma_1, A, \Gamma_2) \to C_1 & \Xi(\zeta; \Gamma_1, A, \Gamma_2) \to C_2}}
$$
transforms into
$$\small
\infer[\wedge R]{\Xi(\xi,\zeta; \Gamma_1, \Pi, \Gamma_2) \to C_1 \wedge C_2}
{\infer[\CUT]{\Xi(\xi,\zeta; \Gamma_1, \Pi, \Gamma_2) \to C_1}{\xi;\Pi \to A & \Xi(\zeta; \Gamma_1, A, \Gamma_2) \to C_1}
&\infer[\CUT]{\Xi(\xi,\zeta; \Gamma_1, \Pi, \Gamma_2) \to C_2}{\xi;\Pi \to A & \Xi(\zeta; \Gamma_1, A, \Gamma_2) \to C_2}}
$$
Again, the two new cuts have a smaller $\sigma$ and the same $\kappa$.

{\bf Case 5 (axiom).} One of the sequents is an axiom of the form $A \to A$.
Then cut disappears, since the other premise coincides with the goal.
\qed
\end{proof}

\begin{theorem}\label{Th:cutelimA}
Let sequents $\xi; \Pi \to A$ and $\Xi(\zeta; \Gamma_1, A, \Gamma_2 \to C)$ be derivable
in $\sysAa$. 
Then the sequent $\Xi(\xi,\zeta; \Gamma_1, \Pi, \Gamma_2) \to C)$ is also derivable
in $\sysAa$. 
\end{theorem}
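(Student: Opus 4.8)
The plan is to run exactly the same nested induction as in the proof of Theorem~\ref{Th:cutelim}, on $\kappa$ (the complexity of the cut formula $A$) and $\sigma$ (the total number of rule applications in the two cut-free derivations $\DerL$ and $\DerR$), splitting into cases by the lowermost rules of $\DerL$ and $\DerR$. Since $\sysAa$ differs from $\sysBa$ only in the two rules $!R'$ and $!C'$, every case not involving these rules carries over verbatim: the whole principal Case~2 (multiplicatives, additives, unit, brackets), the non-principal Cases~3 and~4 for those connectives, and the axiom Case~5. The genuine work is therefore confined to the principal case for $!$ (Case~1) and to the subcase where the lowermost rule of $\DerR$ is the $\sysAa$-contraction $!C'$; note that $!R'$ can only appear as the principal rule of $\DerL$ (Case~1) and never as the lowermost rule of $\DerR$, since its conclusion has an empty antecedent sequence and so leaves no room for a cut-active formula. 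Also, $\sysAa$ carrying no Lambek restriction, the extra side-condition bookkeeping of Theorem~\ref{Th:cutelim} simply drops out.

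First I would redo the deep step (Case~1). Here $\DerL$ ends in $!R'$, so its endsequent is $\zeta; \Lambda \to {!}A$ with $\zeta; \Lambda \to A$ above it, where now $\zeta$ is a \emph{non-empty multiset} rather than a single formula, while $\DerR$ ends in $!L$, introducing $A$ into a stoup. As before, I trace this stoup occurrence of $A$ upward through $\DerR$: the trace branches at every principal $!C'$ (those whose contracted sub-multiset $\zeta_2$ contains $A$, so that multi-contraction copies it into the bracketed stoup), and at $\wedge R$ and $\vee L$, and each branch terminates at a principal $!P$. The deep transformation replaces the traced stoup slot occupied by $A$ with the entire multiset $\zeta$. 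Every intervening $!C'$ remains valid because stoup manipulations are purely multiset operations, and its side condition $\zeta_2 \ne \varnothing$ survives precisely because $\zeta \ne \varnothing$ --- this is exactly what the side condition on $!R'$ buys us. At each terminal $!P$ the rule is replaced by a cut against $\zeta; \Lambda \to A$, whose conclusion matches the image of the $!P$-conclusion under the substitution. Each such new cut has strictly smaller $\kappa$ and is discharged by the induction hypothesis.

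Next I would handle the contraction subcase. When the cut-active $A$ lies outside the bracketed domain reshaped by $!C'$ (in $\Gamma_1$ or $\Gamma_3$, or in an altogether different meta-formula), $!C'$ behaves as an easy rule and the cut commutes past it directly. The interesting situation is when $A$ lies in the sequence $\Gamma_2$ that $!C'$ extracts from the single bracket; then I perform the cut \emph{inside} the bracket first, merging $\xi$ into the inner stoup $\zeta', \zeta_2$, and reapply $!C'$ afterwards. A short check shows the resulting outer stoup $\xi, \zeta_1, \zeta_2, \zeta'$ and sequence agree with the original cut conclusion, and the constraint $\zeta_2 \ne \varnothing$ is untouched. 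In all these transformations $\sigma$ strictly decreases with $\kappa$ fixed, so the induction hypothesis applies.

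I expect the main obstacle to be, as in Theorem~\ref{Th:cutelim}, Case~1: one must verify that promoting the substitution from a single formula to the multiset $\zeta$ leaves every $!C'$ application along the (branching) trace well-formed, and in particular that the non-emptiness side condition on $!C'$ is preserved by the substitution. Once this is secured, the remaining transformations are routine bookkeeping parallel to the proof already given for $\sysBa$.
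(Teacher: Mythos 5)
Your proposal is correct and follows essentially the same route as the paper: the same nested induction on $(\kappa,\sigma)$, the same observation that only Case~1 and the $!C'$ subcase need rework, the same deep-cut-elimination step substituting the non-empty multiset $\xi$ for the traced stoup occurrence of $A$ and converting the terminal $!P$ applications into lower-$\kappa$ cuts, and the same use of $\xi \ne \varnothing$ to preserve the non-emptiness side conditions on $!R'$ and $!C'$. No gaps.
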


\begin{proof}
The $\sysAa$ system differs from $\sysBa$ only in two rules: ${!}R'$ and ${!}C'$. Thus, we have to
reconsider Case~1 and Subcase~4.a$'$ (in Subcase~3.a, ${!}C'$ still acts as an ``easy rule'').

{\em Case~1.} The bottom of $\DerL$ now is
$$
\infer[!R']{\xi; \Lambda \to {!}A}{\xi;\Lambda \to A}
$$
with $\xi \ne \varnothing$. We perform the same deep cut elimination procedure, as in Theorem~\ref{Th:cutelim}. Namely,
the lowermost rule application in $\DerR$ is
$$
\infer[!L]{\Xi(\zeta; \Gamma', {!}A, \Gamma'') \to C}{\Xi(\zeta, A; \Gamma', \Gamma'') \to C}
$$
and we trace the $A$ in the stoup upwards until applications of ${!}P$. Next, we replace these occurrences of $A$ with $\xi$.

The active ${!}P$ applications,
$$
\infer[{!}P]
{\Xi_i(\zeta_i, A; \Gamma'_i, \Gamma''_i) \to C_i}
{\Xi_i(\zeta_i; \Gamma'_i, A, \Gamma''_i) \to C_i}
$$
transform into cuts with a smaller $\kappa$:
$$
\infer[\CUT]{\Xi_i(\zeta_i, \xi; \Gamma'_i, \Gamma''_i) \to C_i}
{\xi; \Lambda \to A & \Xi_i(\zeta_i; \Gamma'_i, A, \Gamma''_i) \to C_i}
$$

Rule applications along the trace remain valid. Notice that the stoup non-emptiness conditions on ${!}R'$ and ${!}C'$ are maintained by
non-emptiness of $\xi$.

{\em Subcase 4.a$'$.} In this case,
$$
\infer[\CUT]{\Xi(\xi,\zeta_1,\zeta_2,\zeta'; \Gamma'_1, \Pi, \Gamma''_1, \Gamma_2, \Gamma_3) \to C}
{\xi; \Pi \to A &
\infer[{!}C']{\Xi(\zeta_1,\zeta_2,\zeta'; \Gamma'_1, A, \Gamma''_1, \Gamma_2, \Gamma_3) \to C}
{\Xi(\zeta_1, \zeta_2; \Gamma'_1, A, \Gamma''_1, [\zeta',\zeta_2; \Gamma_2], \Gamma_3) \to C}}
$$
(with $\zeta_2 \ne \varnothing$) transforms into
$$
\infer[{!}C']{\Xi(\xi,\zeta_1,\zeta_2,\zeta'; \Gamma'_1, \Pi, \Gamma''_1, \Gamma_2, \Gamma_3) \to C}
{\infer[\CUT]{\Xi(\xi,\zeta_1,\zeta_2; \Gamma'_1, \Pi, \Gamma''_1, [\zeta',\zeta_2;\Gamma_2], \Gamma_3) \to C}
{\xi; \Pi \to A & \Xi(\zeta_1, \zeta_2; \Gamma'_1, A, \Gamma''_1, [\zeta',\zeta_2; \Gamma_2], \Gamma_3) \to C}}
$$
(the case with $A$ in $\Gamma_3$ is considered symmetrically), and
$$
\infer[\CUT]{\Xi(\zeta_1,\zeta_2,\xi,\zeta'; \Gamma_1, \Gamma'_2, \Pi, \Gamma''_2, \Gamma_3) \to C}
{\xi;\Pi \to A & 
\infer[{!}C']{\Xi(\zeta_1,\zeta_2,\zeta' ; \Gamma_1, \Gamma'_2, A, \Gamma''_2, \Gamma_3) \to C}
{\Xi(\zeta_1, \zeta_2; \Gamma_1, [\zeta',\zeta_2; \Gamma'_2, A, \Gamma''_2], \Gamma_3) \to C}}
$$
(again, $\zeta_2 \ne \varnothing$) transforms into
$$
\infer[{!}C']{\Xi(\zeta_1,\zeta_2,\xi,\zeta'; \Gamma_1, \Gamma'_2, \Pi, \Gamma''_2, \Gamma_3) \to C}
{\infer[\CUT]{\Xi(\zeta_1,\zeta_2; \Gamma_1, [\xi,\zeta',\zeta_2; \Gamma'_2, \Pi, \Gamma''_2], \Gamma_3) \to C}
{\xi;\Pi \to A & \Xi(\zeta_1, \zeta_2; \Gamma_1, [\zeta',\zeta_2; \Gamma'_2, A, \Gamma''_2], \Gamma_3) \to C }}
$$\qed
\end{proof}

\section{Versions of Modified Morrill's Systems without Stoups}\label{S:nostoups}

In this section we provide alternative, in a sense more traditional formulations of $\sysAa$, $\sysBa$, and $\sysBar$ without
stoups (like in~\cite{MorrillValentin}). We denote these calculi by $\sysAfl$, $\sysBfl$, and $\sysBrfl$ respectively.

Formulae, meta-formulae, and sequents are defined as in Morrill's system (Section~\ref{S:calculi}), but all the stoups are empty now.

We start with $\sysAfl$ and $\sysBfl$. Both calculi are built on top of~$\MALCsb$, the multiplicative-additive Lambek calculus with 
brackets~\citep{Morrill1992,MoortgatMultimodal}. Axioms and rules of $\MALCsb$ are as follows:

$$
\infer[\mathrm{id}]{A \yd A}{}
$$

$$
\infer[{\SL} L]{\Xi (\Delta_1, C \SL B, \Gamma, \Delta_2) \yd D}
{\Gamma \yd B & \Xi(\Delta_1, C, \Delta_2 ) \yd D}
\qquad
\infer[{\SL} R]{\Gamma \yd C \SL B}{\Gamma, B \yd C}
$$

$$
\infer[{\BS} L]{\Xi (\Delta_1, \Gamma, A \BS C, \Delta_2) \yd D}
{\Gamma \yd A & \Xi(\Delta_1, C, \Delta_2 ) \yd D}
\qquad
\infer[{\BS} R]{\Gamma \yd A \BS C}{A, \Gamma \yd C}
$$

$$
\infer[{\mconj} L]{\Xi (\Delta_1, A \mconj B, \Delta_2) \yd D}
{\Xi (\Delta_1, A, B, \Delta_2) \yd D}
\qquad
\infer[{\mconj} R]{\Delta, \Gamma \yd A \mconj B}
{\Delta \yd A & \Gamma \yd B}
$$

$$
\infer[{\One} L]{\Xi(\Delta_1, \One, \Delta_2) \yd A}{\Xi(\Delta_1,\Delta_2) \yd A}
\qquad
\infer[{\One} R]{\Lambda \yd \One}{}
$$

$$
\infer[{\NMod} L]{\Xi(\Delta_1, [\NMod A], \Delta_2) \yd B}
{\Xi (\Delta_1, A, \Delta_2) \yd B}
\qquad
\infer[{\NMod} R]{\Xi \yd \NMod A}{[ \Xi ] \yd A}
$$

$$
\infer[{\PMod} L]{\Xi(\Delta_1, \PMod A, \Delta_2) \yd B}
{\Xi(\Delta_1, [A], \Delta_2) \yd B}
\qquad
\infer[{\PMod} R]{[\Xi] \yd \PMod A}{\Xi \yd A}
$$

The following rules for ${!}$ are the same in both systems, $\sysAfl$ and $\sysBfl$:
$$
\infer[!L]{\Xi(\Delta_1, {!}A, \Delta_2) \to C}{\Xi(\Delta_1, A, \Delta_2) \to C}
$$
$$
\infer[!P_1]{\Xi(\Delta_1, \Phi, {!}A, \Delta_2) \to C}{\Xi(\Delta_1, {!}A, \Phi, \Delta_2) \to C}
\qquad
\infer[!P_2]{\Xi(\Delta_1, {!}A, \Phi, \Delta_2) \to C}{\Xi(\Delta_1, \Phi, {!}A, \Delta_2) \to C}
$$

The difference between $\sysAfl$ and $\sysBfl$ is in the $!R$ and $!C$ rules. In $\sysAfl$, they are formulated as follows
$$
\infer[!R,\ n \geq 1]{{!}A_1, \ldots, {!}A_n \to {!}B}{{!}A_1, \ldots, {!}A_n \to B}
$$
$$
\infer[!C,\ n \geq 1]{\Xi({!}A_1, \ldots, {!}A_n, \Gamma_1, \Gamma_2, \Gamma_3) \to C}{\Xi({!}A_1, \ldots, {!}A_n, \Gamma_1, [{!}A_1, \ldots, {!}A_n, \Gamma_2], \Gamma_3) \to C}
$$
For $\sysBfl$, these rules are formulated as follows:
$$
\infer[!R]{{!}A \to {!}B}{{!}A \to B}
\qquad
\infer[!C]{\Xi({!A}, \Gamma_1, [[\Gamma_2]], \Gamma_3) \to C}{\Xi({!}A, \Gamma_1, [{!}A, \Gamma_2], \Gamma_3) \to C}
$$

The cut rule of all stoup-free calculi is formulated as follows:
$$
\infer[\CUT]{\Xi(\Gamma_1, \Pi, \Gamma_2) \to C}{\Pi \to A & \Xi(\Gamma_1, A, \Gamma_2) \to C}
$$

In order to obtain $\sysBrfl$, we impose Lambek's restriction on the rules of $\sysBfl$ in the following natural way:
\begin{itemize}
\item in $\BS R$ and $\SL R$, we require $\Gamma \ne \Lambda$;
\item in ${!}C$, we require $\Gamma_2 \ne \Lambda$.
\end{itemize}

\begin{proposition}\label{Prop:nostoupB}
Let $\Xi \to C$ be a sequent without stoups (in the context of $\sysBa$ we consider it as a sequent with empty stoups).
Then the following are equivalent:
\begin{enumerate}\itemsep=3pt
\item $\Xi \to C$ is derivable in $\sysBfl$ without cut;
\item $\Xi \to C$ is derivable in $\sysBfl$, possibly using cut;
\item $\Xi \to C$ is derivable in $\sysBa$, possibly using cut; 
\item $\Xi \to C$ is derivable in $\sysBa$ without cut.
\end{enumerate}
The same holds for the variants with Lambek's restriction, $\sysBar$ and $\sysBrfl$.
\end{proposition}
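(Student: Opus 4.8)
The plan is to close the cycle $(1)\Rightarrow(2)\Rightarrow(3)\Rightarrow(4)\Rightarrow(1)$, in which two links are routine and two carry the content. The links $(1)\Rightarrow(2)$ and $(4)\Rightarrow(3)$ are trivial inclusions, since a cut-free derivation is \emph{a fortiori} a derivation possibly using cut within the same calculus. The link $(3)\Rightarrow(4)$ is precisely the cut elimination theorem for $\sysBa$ (Theorem~\ref{Th:cutelim}), which is stated for $\sysBar$ as well and so also serves the Lambek-restricted variant. The two remaining links, $(4)\Rightarrow(1)$ and $(2)\Rightarrow(3)$, bridge the stoup calculus $\sysBa$ and its stoup-free reformulation $\sysBfl$, and this is where the real work sits.

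For $(4)\Rightarrow(1)$ I would introduce a stoup-erasing translation $\flat$ that sends a meta-formula $\zeta;\Gamma$ with $\zeta=\{A_1,\dots,A_n\}$ to the stoup-free meta-formula obtained by prepending the block ${!}A_1,\dots,{!}A_n$ to the recursively translated $\Gamma$ \emph{inside the same bracketed domain}; on sequents with empty stoups $\flat$ is the identity. I would then prove by induction on a cut-free $\sysBa$ derivation that cut-free derivability of $S$ in $\sysBa$ yields cut-free derivability of $S^\flat$ in $\sysBfl$. The multiplicative, additive, unit and bracket rules go to their empty-stoup instances, the only bookkeeping being that the images of stoup formulas, now genuine ${!}$-prefixed formulas, must be shuttled into position by the permutations ${!}P_1,{!}P_2$; because each stoup belongs to a single bracketed domain, these permutations never cross a bracket and stay legal. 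The stoup rules ${!}L,{!}P,{!}R',{!}C'$ are matched by the ${!}$-rules of $\sysBfl$ assisted by ${!}P_1,{!}P_2$ (for instance ${!}R'$ translates verbatim to $\sysBfl$'s ${!}R$). Since the endsequent has empty stoups, $S^\flat=S$. For the Lambek case I would note that $\flat$ preserves non-emptiness of every domain (a nonempty stoup becomes a nonempty block of ${!}$-formulas), so the side conditions of $\sysBrfl$ on $\BS R$, $\SL R$ and ${!}C$ hold exactly when those of $\sysBar$ held.

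For $(2)\Rightarrow(3)$ I would simulate each rule of $\sysBfl$ inside $\sysBa$, freely using cut (harmless, as $(3)$ permits it; in particular $\sysBfl$'s own cut is the empty-stoup instance of the stoup-aware cut). The shared rules are again empty-stoup instances, and $\sysBfl$'s ${!}L$ is reproduced by $\sysBa$'s ${!}P$ followed by ${!}L$. The delicate point is the permutations ${!}P_1,{!}P_2$: in $\sysBa$ the subexponential marker is consumed the moment a formula enters the stoup, so a linear ${!}A$ cannot be permuted literally. To repair this I would first establish \emph{invertibility of ${!}L$}: if $\sysBa\vdash\Xi(\zeta;\Gamma_1,{!}A,\Gamma_2)\to C$ then $\sysBa\vdash\Xi(\zeta,A;\Gamma_1,\Gamma_2)\to C$, proved by a single cut against $A;\Lambda\to{!}A$ (derivable by ${!}R'$ and ${!}P$ from $A\to A$). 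Invertibility absorbs a linear ${!}A$ into the stoup, whence one application of ${!}L$ re-emits it at any chosen position of the same domain; composing the two yields ${!}P_1,{!}P_2$ as admissible rules. The same move simulates $\sysBfl$'s ${!}R$ and ${!}C$ (for ${!}C$ one inverts both the outer and the inner ${!}A$, applies $\sysBa$'s ${!}C'$ with empty auxiliary stoups, and restores the outer ${!}A$ by ${!}L$).

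The main obstacle is exactly this clash of mechanisms: $\sysBfl$ keeps ${!}$-formulas linear and permutes them explicitly, whereas $\sysBa$ routes them through the stoup and strips the ${!}$ on the way in; the invertibility-of-${!}L$ lemma is the hinge that reconciles the two. The point demanding the most care is the Lambek-restricted case $\sysBar/\sysBrfl$: every auxiliary sequent used above ($A;\Lambda\to{!}A$, the cut premises, and each intermediate bracketed domain) must be checked to be non-empty and free of $\U$, and the side condition of $\sysBar$'s ${!}C'$ ($\Gamma_2\ne\Lambda$ or $\zeta'\ne\varnothing$) must be seen to correspond precisely, under $\flat$ and under the simulation, to the condition on $\sysBrfl$'s ${!}C$ that the contracted island be non-empty.
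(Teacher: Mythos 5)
Your proposal is correct and follows essentially the same route as the paper: the same round-robin cycle, with $(3)\Rightarrow(4)$ by cut elimination, $(4)\Rightarrow(1)$ by flattening stoups into blocks of ${!}$-prefixed formulae shuffled by ${!}P_{1,2}$, and $(2)\Rightarrow(3)$ by simulating the stoup-free ${!}$-rules via cuts against $A;\Lambda\to{!}A$. Your ``invertibility of ${!}L$'' lemma is exactly the cut the paper performs inline to move a linear ${!}A$ into the stoup, so the two arguments coincide in substance.
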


\begin{proof}
We proceed by round-robin implications: $1 \Rightarrow 2 \Rightarrow 3 \Rightarrow 4 \Rightarrow 1$.

\fbox{$1 \Rightarrow 2$} Obvious.

\fbox{$2 \Rightarrow 3$}
Consider a derivation of $\Xi \to C$ in $\sysBfl$ (possibly with cuts) and translate it into $\sysB$. The interesting case concerns
$!$-operating rules. These rules are simulated using cut, by temporarily moving the active $!$-formula to the stoup (which is
normally empty):

{\small
$$
\infer[!P_1]
{\Xi(\Gamma_1, {!}A, \Phi, \Gamma_2) \to C}
{\Xi(\Gamma_1, \Phi, {!}A, \Gamma_2) \to C}
\quad\mbox{\raisebox{.8em}{$\leadsto$}}\quad
\infer[!L]
{\Xi(\Gamma_1, {!}A, \Phi, \Gamma_2) \to C}
{\infer[\mathrm{cut}]{\Xi(A; \Gamma_1, \Phi, \Gamma_2) \to C}
{A; \Lambda \to {!}A & \Xi(\Gamma_1, \Phi, {!}A, \Gamma_2) \to C}}
$$

$$
\infer[!C]
{\Xi({!}A, \Gamma_1, [[\Gamma_2]], \Gamma_3) \to C}
{\Xi({!}A, \Gamma_1, [{!}A, \Gamma_2], \Gamma_3) \to C}
\quad\mbox{\raisebox{.8em}{$\leadsto$}}\!
\infer[!L]
{\Xi({!}A, \Gamma_1, [[\Gamma_2]], \Gamma_3) \to C}
{\infer[!C]{\Xi(A; \Gamma_1, [[\Gamma_2]], \Gamma_3) \to C}
{\infer[\mathrm{cut}]{\Xi(A; \Gamma_1, [A; \Gamma_2], \Gamma_3) \to C}
{A; \Lambda \to {!}A & 
\infer[\mathrm{cut}]{\Xi({!}A, \Gamma_1, [A; \Gamma_2], \Gamma_3) \to C}
{A; \Lambda \to {!}A & 
\Xi({!}A, \Gamma_1, [{!}A, \Gamma_2], \Gamma_3) \to C}}}}
$$

$$
\infer[!R]
{{!}A \to {!}B}
{{!}A \to B}
\quad\mbox{\raisebox{.8em}{$\leadsto$}}\quad
\infer[!L]
{{!}A \to {!}B}
{\infer[!R]{A; \Lambda \to {!}B}
{\infer[\mathrm{cut}]{A; \Lambda \to B}{A; \Lambda \to {!}A & {!}A \to B}}}
$$

$$
\infer[!L]
{\Xi(\Gamma_1, {!}A, \Gamma_2) \to C}
{\Xi(\Gamma_1, A, \Gamma_2) \to C}
\quad\mbox{\raisebox{.8em}{$\leadsto$}}\quad
\infer[!L]
{\Xi(\Gamma_1, {!}A, \Gamma_2) \to C}
{\infer[!P]{\Xi(A; \Gamma_1, \Gamma_2) \to C}{\Xi(\Gamma_1, A, \Gamma_2) \to C}}
$$
}

The sequent $A; \Lambda \to {!}A$ is derived as follows:
{\small
$$
\infer[{!}R]{A; \Lambda \to {!}A}
{\infer[{!}P]{A; \Lambda \to A}{A \to A}}
$$}
All other rules, including cut, are translated straightforwardly.

\fbox{$3 \Rightarrow 4$} This is due to cut elimination (Theorem~\ref{Th:cutelim}).

\fbox{$4 \Rightarrow 1$} Consider a cut-free proof of $\Xi \to C$ in $\sysBa$.
In the goal sequent, all stoups are empty, but this could be not the case for sequents inside the derivation.
In each sequent, we ``flatten'' the stoups, replacing each meta-formula $\zeta; \Gamma$, where
$\zeta = \{ A_1, \ldots, A_N \}$, with ${!}A_1, \ldots, {!}A_N, \Gamma$. For $\zeta = \{ A_1, \ldots, A_N \}$, let us
denote ${!}A_1, \ldots, {!}A_N$ by ${!}\zeta$.

The rules of $\sysBa$ which do not operate the stoup are mapped to those of $\sysBfl$, adding permutations for ${!}$-formulae, if necessary.
For example, this is how it is performed for $\BS R$ and $\BS L$:

{\small
$$
\infer[\BS R]{\zeta;\Gamma \to B \BS C}{\zeta;B,\Gamma \to C}
\quad\mbox{\raisebox{.8em}{$\leadsto$}}\quad
\infer[\BS R]{{!}\zeta, \Gamma \to B \BS C}
{\infer=[{!}P]{B, {!}\zeta, \Gamma \to C}{{!}\zeta, B, \Gamma \to C}}
$$
$$
\infer[\BS L]{\Xi(\zeta_1, \zeta_2; \Delta_1, \Gamma, B \BS C, \Delta_2) \to D}
{\zeta_1; \Gamma\to B & \Xi(\zeta_2; \Delta_1, C, \Delta_2) \to D} 
\quad\mbox{\raisebox{.8em}{$\leadsto$}}\quad
\infer=[{!}P]{\Xi({!}\zeta_1, {!}\zeta_2, \Delta_1, \Gamma, B \BS C, \Delta_2) \to D}
{\infer[\BS L]{\Xi({!}\zeta_2, \Delta_1, {!}\zeta_1, \Gamma, B \BS C, \Delta_2) \to C}
{{!}\zeta_1, \Gamma \to B & \Xi({!}\zeta_2, \Delta_1, C, \Delta_2) \to D}}
$$
}

Notice how Lambek's restriction is conserved in the $\BS R$ rule.

Contraction is handled as follows:

{\small $$
\infer[!C']{\Xi(\zeta, A; \Gamma_1, [[\zeta'; \Gamma_2]], \Gamma_3) \to B}
{\Xi(\zeta, A; \Gamma_1, [\zeta', A; \Gamma_2], \Gamma_3) \to B}
\quad\mbox{\raisebox{.8em}{$\leadsto$}}\quad
\infer[!P]{\Xi({!}\zeta, {!}A, \Gamma_1, [[{!}\zeta', \Gamma_2]], \Gamma_3) \to B}
{\infer[!C]{\Xi({!}A, {!}\zeta, \Gamma_1, [[{!}\zeta', \Gamma_2]], \Gamma_3) \to B}
{\infer[!P]{\Xi({!}A, {!}\zeta, \Gamma_1, [{!}A, {!}\zeta', \Gamma_2], \Gamma_3) \to B}
{\infer[!P]{\Xi({!}A, {!}\zeta, \Gamma_2, [{!}\zeta', {!}A, \Gamma_2], \Gamma_3) \to B}
{\Xi({!}\zeta, {!}A, \Gamma_2, [{!}\zeta', {!}A, \Gamma_2], \Gamma_3) \to B}}}}
$$}

Again, we notice that Lambek's restriction (${!}\zeta_2, \Gamma_2 \ne \Lambda$) is conserved.

Finally, ${!}L$ becomes just permutation, ${!}R'$ maps to ${!}R$, and the version of ${!}P$ in $\sysBa$ maps to a combination of permutation and ${!}L$ of $\sysBfl$.
\qed
\end{proof}

\begin{proposition}\label{Prop:nostoupA}
Let $\Xi \to C$ be a sequent without stoups (in the context of $\sysAa$ we consider it as a sequent with empty stoups).
Then the following are equivalent:
\begin{enumerate}\itemsep=3pt
\item $\Xi \to C$ is derivable in $\sysAfl$ without cut;
\item $\Xi \to C$ is derivable in $\sysAfl$, possibly using cut;
\item $\Xi \to C$ is derivable in $\sysAa$, possibly using cut; 
\item $\Xi \to C$ is derivable in $\sysAa$ without cut.
\end{enumerate}
\end{proposition}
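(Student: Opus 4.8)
The plan is to prove Proposition~\ref{Prop:nostoupA} by the very same round-robin $1 \Rightarrow 2 \Rightarrow 3 \Rightarrow 4 \Rightarrow 1$ used for Proposition~\ref{Prop:nostoupB}. The two statements differ only in that the ``$\mathbf{2018}$'' calculi are replaced by the ``$\mathbf{2015}$'' ones, and these in turn differ solely in the shape of the ${!}R$ and ${!}C$ rules. Hence every case of the four implications not touching ${!}R$ or ${!}C$ — the Lambek connectives, the additives, the bracket modalities, the shared rules ${!}L$ and ${!}P$, and the derivation of the auxiliary sequent $A;\Lambda \to {!}A$ — I would copy verbatim from the proof of Proposition~\ref{Prop:nostoupB}. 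The step $1 \Rightarrow 2$ is again trivial, and for $3 \Rightarrow 4$ I would invoke Theorem~\ref{Th:cutelimA} (cut admissibility in $\sysAa$) in place of Theorem~\ref{Th:cutelim}. A welcome simplification is that $\sysAa$ and $\sysAfl$ carry no Lambek restriction, so none of the non-emptiness bookkeeping of the $\sysBar$/$\sysBrfl$ part is needed. The genuine work is thus confined to redoing the ${!}R$ and ${!}C$ cases of the two translations $2 \Rightarrow 3$ and $4 \Rightarrow 1$ for the $\sysA$-style, multi-formula rules.

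For $2 \Rightarrow 3$ (translating $\sysAfl$ into $\sysAa$ with cut) the method is unchanged: a ${!}$-formula that a rule acts on is temporarily moved into a stoup by ${!}L$ together with a cut against $A_i;\Lambda \to {!}A_i$. The only novelty is multiplicity. For ${!}R$ with premise ${!}A_1,\dots,{!}A_n \to B$ I would perform $n$ cuts to reach $A_1,\dots,A_n;\Lambda \to B$, apply ${!}R'$ (whose side condition $\zeta\ne\varnothing$ holds since $n\geq 1$) to get $A_1,\dots,A_n;\Lambda \to {!}B$, and then apply ${!}L$ $n$ times to recover ${!}A_1,\dots,{!}A_n \to {!}B$. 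For the multi-contraction ${!}C$ I would cut both the inner and the outer occurrences of each ${!}A_i$ into the respective stoups, turning the $\sysAfl$ premise $\Xi({!}A_1,\dots,{!}A_n,\Gamma_1,[{!}A_1,\dots,{!}A_n,\Gamma_2],\Gamma_3) \to C$ into $\Xi(\zeta;\Gamma_1,[\zeta;\Gamma_2],\Gamma_3) \to C$ with $\zeta=\{A_1,\dots,A_n\}$; one application of ${!}C'$ (taking $\zeta_1=\zeta'=\varnothing$, $\zeta_2=\zeta$, so the condition $\zeta_2\ne\varnothing$ holds) then erases the bracket — matching the fact that in $\sysA$ the bracket is dissolved entirely rather than weakened — yielding $\Xi(\zeta;\Gamma_1,\Gamma_2,\Gamma_3) \to C$, after which $n$ applications of ${!}L$ restore the surviving outer stoup to the context.

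For $4 \Rightarrow 1$ I would flatten every meta-formula $\zeta;\Gamma$ to ${!}\zeta,\Gamma$ exactly as in Proposition~\ref{Prop:nostoupB}, the only new cases again being ${!}R'$ and ${!}C'$. The rule ${!}R'$, whose conclusion $\zeta;\Lambda \to {!}B$ has $\zeta\ne\varnothing$, flattens to ${!}A_1,\dots,{!}A_n \to {!}B$ from ${!}A_1,\dots,{!}A_n \to B$ with $n=|\zeta|\geq 1$, which is literally the $\sysAfl$ rule ${!}R$. For ${!}C'$ the flattened premise is $\Xi({!}\zeta_1,{!}\zeta_2,\Gamma_1,[{!}\zeta',{!}\zeta_2,\Gamma_2],\Gamma_3) \to C$ and the flattened conclusion is $\Xi({!}\zeta_1,{!}\zeta_2,{!}\zeta',\Gamma_1,\Gamma_2,\Gamma_3) \to C$; I would use ${!}P_1,{!}P_2$ to bring the common contracted multiset ${!}\zeta_2$ to the head of the outer antecedent and of the inner bracket, apply $\sysAfl$'s ${!}C$ with contracted formulae $\zeta_2$ (which inlines ${!}\zeta',\Gamma_2$ and deletes the bracket), and finally permute into the required order. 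The step I expect to be the main (though still routine) obstacle is precisely this alignment: one must identify the shared multiset $\zeta_2$ that ${!}C'$ duplicates between the outer and bracketed stoups, set up the $\sysAfl$ contraction on exactly that common part, and check that the ``spectator'' formulae ${!}\zeta_1$ and ${!}\zeta'$ land in the correct positions. Once this permutation bookkeeping is managed, no case presents difficulty beyond what already appears in the $\sysB$ case.
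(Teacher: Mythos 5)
Your proposal is correct and follows essentially the same route as the paper's proof: round-robin implications with $3\Rightarrow 4$ by Theorem~\ref{Th:cutelimA}, the $2\Rightarrow 3$ simulation of the multi-formula ${!}R$ and ${!}C$ via iterated cuts against $A_i;\Lambda\to{!}A_i$ followed by ${!}R'$ resp.\ ${!}C'$ (with $\zeta_1=\zeta'=\varnothing$, $\zeta_2=\{A_1,\dots,A_n\}$) and then ${!}L$, and the $4\Rightarrow 1$ flattening in which ${!}R'$ maps directly to ${!}R$ and ${!}C'$ is handled by permuting ${!}\zeta_2$ to the head of the outer antecedent and of the bracket before applying $\sysAfl$'s ${!}C$. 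The permutation bookkeeping you flag as the main remaining work is exactly what the paper's displayed derivation for ${!}C'$ carries out.
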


\begin{proof}
\fbox{$1 \Rightarrow 2$} Obvious.

\fbox{$2 \Rightarrow 3$} The ${!}R$ and ${!}C$ rules of $\sysAfl$ are simulated in the system $\sysAa$ as follows:
{\small
$$
\infer[{!}R,\ n \ge 1]{{!}A_1, \ldots, {!}A_n \to {!}B}
{{!}A_1, \ldots, {!}A_n \to B}
\quad\mbox{\raisebox{.8em}{$\leadsto$}}\quad
\infer=[{!}L]{{!}A_1, \ldots, {!}A_n \to {!}B}
{\infer[{!}R',\ \{A_1,\ldots,A_n\} \ne \varnothing]{A_1, \ldots, A_n; \Lambda \to {!}B}
{\infer[\CUT]{A_1, \ldots, A_n; \Lambda \to B}
{A_1; \Lambda \to {!}A & \infer{A_2, \ldots, A_n; {!}A_1 \to B}
{\infer[\CUT]{\vdots}{A_n; \Lambda \to {!}A_n & {!}A_1, \ldots, {!}A_n \to B}}}}}
$$

$$
\infer[{!}C,\ n \ge 1]{\Xi({!}A_1, \ldots, {!}A_n, \Gamma_1, \Gamma_2, \Gamma_3) \to C}
{\Xi({!}A_1, \ldots, {!}A_n, \Gamma_1, [{!}A_1, \ldots, {!}A_n, \Gamma_2], \Gamma_3) \to C}
$$
$$ \mbox{\rotatebox{270}{$\leadsto$}} $$
$$
\infer=[{!}L]{\Xi({!}A_1, \ldots, {!}A_n, \Gamma_1, \Gamma_2, \Gamma_3) \to C}
{\infer[{!}C',\ \{A_1,\ldots,A_n\} \ne \varnothing]{\Xi(A_1, \ldots, A_n; \Gamma_1, \Gamma_2, \Gamma_3) \to C}
{\infer[\CUT]{\Xi(A_1, \ldots, A_n; \Gamma_1, [A_1, \ldots, A_n; \Gamma_2], \Gamma_3) \to C}
{A_1; \Lambda \to {!}A_1 & \infer{\Xi(A_2, \ldots, A_n; {!}A_1, \Gamma_1, [A_1, \ldots, A_n; \Gamma_2], \Gamma_3) \to C}
{\infer[\CUT]{\vdots}{A_n; \Lambda \to {!}A_n & \Xi({!}A_1, \ldots, {!}A_n, \Gamma_1, [{!}A_1, \ldots, {!}A_n, \Gamma_2], \Gamma_3) \to C }}}}}
$$
}
where $A_i; \Lambda \to {!}A_i$ ($i = 1, \ldots, n$) is derived as follows:
{\small
$$
\infer[{!}R,\ \{A_i\} \ne \varnothing]{A_i; \Lambda \to {!}A_i}
{\infer[{!}P]{A_i; \Lambda \to A_i}{A_i \to A_i}}
$$
}
All other rules are translated exactly as in the proof of the $2 \Rightarrow 3$ implication of Proposition~\ref{Prop:nostoupB}.

\fbox{$3 \Rightarrow 4$} This is due to cut eliminaton, Theorem~\ref{Th:cutelimA}.

\fbox{$4 \Rightarrow 1$}
The ${!}R'$ rule of $\sysAa$ maps directly onto the ${!}R$ rule of $\sysAfl$. Contraction is handled as follows:
{\small
$$
\infer[{!}C',\ \zeta_2 \ne \varnothing]
{\Xi(\zeta_1, \zeta_2, \zeta'; \Gamma_1, \Gamma_2, \Gamma_3) \to C}
{\Xi(\zeta_1, \zeta_2; \Gamma_1, [\zeta', \zeta_2; \Gamma_2], \Gamma_3) \to C}
$$
}
transforms into
{\small
$$
\infer=[{!}P]
{\Xi({!}\zeta_1, {!}\zeta_2, {!}\zeta', \Gamma_1, \Gamma_2, \Gamma_3) \to C}
{\infer[{!}C,\ {!}\zeta_2 \ne\Lambda]{\Xi({!}\zeta_2, {!}\zeta_1, \Gamma_1, {!}\zeta', \Gamma_2, \Gamma_3) \to C}
{\infer=[{!}P]{\Xi({!}\zeta_2, {!}\zeta_1, \Gamma_1, [{!}\zeta_2, {!}\zeta', \Gamma_2], \Gamma_3) \to C}
{\Xi({!}\zeta_1, {!}\zeta_2, \Gamma_1, [{!}\zeta', {!}\zeta_2, \Gamma_2], \Gamma_3) \to C}}}
$$
}
All other rules are translated exactly as in the proof of the $4 \Rightarrow 1$ implication of Proposition~\ref{Prop:nostoupB}.
\qed
\end{proof}

Finally, for $\sysB$ we prove only one implication, since the other one does not hold (for counter-example see Section~\ref{S:issues}).

\begin{proposition}\label{Prop:nostoupBx}
If a sequent without stoups is derivable in $\sysB$, then it is also derivable in $\sysBfl$.
\end{proposition}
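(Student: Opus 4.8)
The plan is to reduce the statement to results already established earlier in the paper, rather than to carry out any fresh proof-search argument. Recall that $\sysBa$ was observed to be an extension of $\sysB$: the contraction rule ${!}C$ of $\sysB$ is exactly the $\zeta' = \varnothing$ instance of ${!}C'$, and the promotion rule ${!}R$ of $\sysB$ was derived inside $\sysBa$ with the help of cut and ${!}R'$. Building on this, I would first translate a given $\sysB$-derivation of $\Xi \to C$ into a $\sysBa$-derivation (possibly using cut) of the very same end-sequent.

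This translation proceeds by a routine induction on the $\sysB$-derivation. All rules common to the two calculi --- the multiplicative, additive, unit and bracket rules together with ${!}L$ and ${!}P$ --- are copied verbatim. Every application of ${!}C$ is reread as the instance of ${!}C'$ with empty island stoup $\zeta' = \varnothing$. Every application of ${!}R$, which in $\sysB$ has the shape $\,{!}A \yd {!}B\,$ inferred from $\,{!}A \yd B$, is replaced by the cut-based derivation exhibited in Section~\ref{S:issues}, namely: move $A$ into the stoup by ${!}L$, apply ${!}R'$, and discard the auxiliary ${!}A$ by a cut against $A; \Lambda \yd {!}A$. Since the premise ${!}A \yd B$ has already been translated by the induction hypothesis, the result is a legitimate $\sysBa$-derivation, at the cost of introducing cuts.

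The second and final step is to invoke Proposition~\ref{Prop:nostoupB}. Because the end-sequent $\Xi \to C$ carries no stoups, it falls within the scope of that proposition, in which item~(3) (derivability in $\sysBa$ with cut) is equivalent to item~(1) (cut-free derivability in $\sysBfl$). Applying the arrows $(3) \Rightarrow (4) \Rightarrow (1)$ --- whose links are, respectively, cut elimination (Theorem~\ref{Th:cutelim}) and the stoup-flattening translation --- yields the desired derivation in $\sysBfl$.

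I do not anticipate a genuine mathematical obstacle; the work lies entirely in organising the reduction correctly. The one point requiring care is that the translation of ${!}R$ \emph{forces} the introduction of cuts, so that one genuinely lands in the ``with cut'' version of $\sysBa$ and must then rely on cut admissibility to descend to $\sysBfl$. It is precisely this unavoidable detour through cut that makes the converse implication fail: the sequent $\,{!}p, q \yd q \mconj {!}p\,$ is derivable in $\sysBfl$ (equivalently, in $\sysBa$) but, as shown in Section~\ref{S:issues}, admits no derivation in the cut-free calculus $\sysB$, which is exactly why only the stated direction holds.
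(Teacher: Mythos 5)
Your proof is correct, but it takes a different route from the paper's. The paper exploits the fact that derivations in $\sysB$ are cut-free by definition and flattens the given $\sysB$-derivation \emph{directly} into $\sysBfl$, reusing the $4 \Rightarrow 1$ translation of Proposition~\ref{Prop:nostoupB} and only re-checking the two rules where $\sysB$ and $\sysBa$ differ: after flattening the stoups, ${!}R$ and ${!}C$ of $\sysB$ become literally the same $\sysBfl$-rules as the images of ${!}R'$ and ${!}C'$, so no new work is needed and, in particular, no appeal to cut elimination. You instead first embed $\sysB$ into $\sysBa$-with-cut (reading ${!}C$ as the $\zeta'=\varnothing$ instance of ${!}C'$ and simulating ${!}R$ via the cut-based derivation from Section~\ref{S:issues}), and then run the full chain $3 \Rightarrow 4 \Rightarrow 1$ of Proposition~\ref{Prop:nostoupB}, which routes the argument through Theorem~\ref{Th:cutelim}. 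Both arguments are sound; yours is a cleaner reduction that reuses already-established results wholesale, at the cost of making the proposition depend on the (heavy) cut-elimination theorem, whereas the paper's version is more elementary and self-contained, needing only the syntactic flattening of a single cut-free derivation. Your closing observation about why the converse fails (via ${!}p, q \to q \mconj {!}p$) is accurate and consistent with Section~\ref{S:issues}.
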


\begin{proof}
Recall that derivations in $\sysB$ are cut-free by definition. Thus, this proposition can be proved by modifying the $4 \Rightarrow 1$ implication
of Proposition~\ref{Prop:nostoupB}. The rules that are different in $\sysB$, if compared with
$\sysBa$, are ${!}R$ and ${!}C$. Thus, we need to reconsider these rules.
Fortunately, after ``flattening'' the stoups, ${!}R$ and ${!}C$ become identical to ${!}R'$ and ${!}C'$ respectively.
\qed
\end{proof}

\section{The $\pi$ and $\pi_q$ Projections}\label{S:nobrackets}

The calculi presented above are related to the bracket-free system $\EL$ by means of so-called
{\em bracket-forgetting projections (BFP).} 
The BFPs are going to be used in the undecidability proofs (Section~\ref{S:undec} below), in order to make
use of the standard undecidability proof for $\EL$ in the undecidability proofs for more sophisticated sequents
with brackets. We define two versions of BFP, and for simplicity we
do this for the syntax without stoups.

\begin{definition}
The {\em $\pi$-projection} of a formula is defined in the following recursive way:
\begin{align*}
& \pi(p) = p \mbox{ for $p \in \Var$} && \pi(A \cdot B) = \pi(A) \cdot \pi(B)\\
& \pi(A \BS B) = \pi(A) \BS \pi(B) && \pi(B \SL A) = \pi(B) \SL \pi(A) \\
& \pi(A_1 \wedge A_2) = \pi(A_1) \wedge \pi(A_2) && \pi(A_1 \vee A_2) = \pi(A_1) \vee \pi(A_2) \\
& \pi({!}A) = {!} \pi(A) && \pi(\U) = \U\\
& \pi(\NMod A) = \pi(A) && \pi(\PMod A) = \pi(A)
\end{align*}
For meta-formulae (tree terms) and sequents without stoups the $\pi$-projection is defined as follows:
\begin{align*}
& \pi(\Gamma_1, \ldots, \Gamma_k) = \pi(\Gamma_1), \ldots, \pi(\Gamma_k) && \pi(\Lambda) = \Lambda\\
& \pi([\Xi]) = \pi(\Xi) && \pi(\Xi \to C) = (\pi(\Xi) \to \pi(C))
\end{align*}
\end{definition}

For technical reasons we shall also need the following modification of $\pi$-projection.
\begin{definition}
Let $q$ be a designated variable. Then the {\em $\pi_q$-projection} is defined on variables as follows:
$$
\pi_q(p) = \left\{ 
\begin{aligned}
& p, && \mbox{if $p \ne q$}\\
& \U, && \mbox{if $p = q$}
\end{aligned}
\right.
$$
and then propagated to formulae, meta-formulae, and sequents exactly as $\pi$.
\end{definition}

The $\pi$-projection erases all bracket information from a sequent. Since brackets block some unwanted derivabilities, this projection is only one-way sound,
as formulated in the following definition. The $\pi_q$-projection additionally makes the special variable $q$ behave as a unit (this is going to be necessary when
handling $\sysBr$, the system with Lambek's restriction which does not include a unit constant).

\begin{definition}
A calculus $\Lc$ is {\em $\pi$-sound ($\pi_q$-sound)} in $\EL$ if for any sequent derivable in $\Lc$ its $\pi$-projection (resp., $\pi_q$-projection) is
derivable in $\EL$.
\end{definition}

Since the rules of all Morrill's systems (in the version without stoups), after applying the $\pi$-projection, map to rules of $\EL$,
we automatically get $\pi$-soundness for Morrill's systems. For $\pi_q$-soundness, we additionally notice that the axiom $q \to q$ maps to
a derivable sequent $\U \to \U$ (everything else remains as for the $\pi$-projection).
\begin{proposition}
The following calculi  are $\pi$-sound and $\pi_q$-sound in $\EL$: $\sysAfl$, $\sysBfl$, and $\sysBrfl$.
\end{proposition}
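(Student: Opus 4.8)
The plan is to argue by a single induction on the structure of a derivation, proving both the $\pi$-soundness and the $\pi_q$-soundness claims simultaneously, and uniformly for all three calculi $\sysAfl$, $\sysBfl$, $\sysBrfl$. The invariant maintained by the induction is: if $\Xi \to C$ is derivable in the calculus at hand, then $\pi(\Xi) \to \pi(C)$ (respectively $\pi_q(\Xi) \to \pi_q(C)$) is derivable in $\EL$. Since Lambek's restriction in $\sysBrfl$ only narrows the class of admissible derivations and places no additional obligation on the target system $\EL$ (which carries no such restriction), the very same induction handles $\sysBrfl$ verbatim; I would simply forget the side conditions when passing to $\EL$.

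First I would dispatch the base cases. The axiom $A \to A$ projects to $\pi(A) \to \pi(A)$, again an identity axiom of $\EL$; for $\pi_q$ the only new situation is $A = q$, where $q \to q$ maps to $\U \to \U$, which is an axiom (or, if one prefers, is derivable from $\Lambda \to \U$ by the unit rule $\U L$). The axiom $\Lambda \to \One$ projects to $\Lambda \to \U$, an axiom of $\EL$. For the inductive step, the bracket-modality rules $\NMod L$, $\NMod R$, $\PMod L$, $\PMod R$ become vacuous: since $\pi(\NMod A) = \pi(\PMod A) = \pi(A)$ and $\pi([\Xi]) = \pi(\Xi)$, the projected premise and projected conclusion literally coincide, so the induction hypothesis already delivers the conclusion. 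The multiplicative and additive rules $\BS L$, $\BS R$, $\SL L$, $\SL R$, $\cdot L$, $\cdot R$, $\One L$, $\wedge L$, $\wedge R$, $\vee L$, $\vee R$ commute with the projection, using that $\pi$ is a homomorphism on connectives, distributes over commas, and erases brackets; each such instance maps to the corresponding rule of $\EL$. Likewise ${!}L$, ${!}P_1$, ${!}P_2$ map to the matching rules of $\EL$, and ${!}R$ (in either the $\sysAfl$ or the $\sysBfl$ form, since $\pi({!}A) = {!}\pi(A)$ and a bracket-free antecedent consisting of $!$-formulae is preserved) maps to ${!}R$ of $\EL$.

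The single genuinely non-trivial case, and the main obstacle, is the bracketed contraction rule ${!}C$, where the projection destroys the bracket separating the contracted copies. Concretely, for $\sysBfl$ the premise $\Xi({!}A, \Gamma_1, [{!}A, \Gamma_2], \Gamma_3) \to C$ projects to a sequent containing $\ldots, {!}\pi(A), \pi(\Gamma_1), {!}\pi(A), \pi(\Gamma_2), \ldots \to \pi(C)$, whereas the conclusion retains only one occurrence of ${!}\pi(A)$. Starting from the derivable projected premise, I would use the permutation rules ${!}P_1$, ${!}P_2$ of $\EL$ to slide the second copy of ${!}\pi(A)$ leftward across $\pi(\Gamma_1)$ until it sits adjacent to the first copy, and then apply $\EL$'s contraction rule ${!}C$ to merge the two copies, arriving exactly at the projected conclusion. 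For the $\sysAfl$ form of the rule, with $n$ contracted formulae ${!}A_1, \ldots, {!}A_n$, one repeats this permute-then-contract routine once for each index, permuting within the block of $!$-formulae as needed to align the matching copies before contracting. This step crucially exploits the fact that $\EL$ grants full permutation and contraction to every $!$-formula, which is precisely why $\EL$ (rather than the weaker $\LLs$) is the correct target of the projection. Lambek's restriction is irrelevant here, as the target imposes none. This completes the induction and establishes both soundness claims for $\sysAfl$, $\sysBfl$, and $\sysBrfl$.
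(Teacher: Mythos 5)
Your proof is correct and follows essentially the same route as the paper, which simply observes that each rule of the stoup-free calculi maps under $\pi$ (resp.\ $\pi_q$) to a rule admissible in $\EL$, with $q \to q$ going to the derivable $\U \to \U$. Your explicit treatment of the bracketed contraction case --- permuting the projected second copy of ${!}\pi(A)$ into position and then applying $\EL$'s contraction, i.e.\ using the non-local contraction admissible in $\EL$ --- is exactly the detail the paper leaves implicit.
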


Notice that $\pi$ and $\pi_q$ lose essential information about bracketing, so the reverse implication, ``$\pi$-completeness'' or ``$\pi_q$-completeness,''
does not (and is not intended to) hold. For example, $\PMod p \to p$ is not derivable in any of the Morrill's systems, 
while its $\pi$-projection (and also $\pi_q$-projection),  $p \to p$, is an axiom of $\EL$. More interesting examples arise from the linguistic usage
of brackets (see Linguistic Introduction): {\em e.g.,} {\sl *``the girl whom John loves Mary and Pete loves''} is not assigned type $NP$
(parsed as a valid noun phrase) in bracketed calculi, but after forgetting the brackets and bracket modalities the corresponding sequent becomes
derivable in $\EL$.

For fragments without additives, there are analogous notions of $\pi$-sound\-ness and $\pi_q$-sound\-ness of a given calculus in $\ELM$.

\section{Undecidability}\label{S:undec}

In this section we prove algorithmic undecidability of the derivability problems for systems defined above. 
In order to make our results stronger, we confine ourselves to fragments of these systems which include only 
multiplicative connectives (product and divisions), brackets and bracket modalities, and the subexponential, but
not additive connectives. For the full $\MALC$-variants of the calculi, the corresponding undecidability results follow as corollaries, by
conservativity.

Recall the well-known proof of undecidability for $\ELM$, the Lambek calculus (without brackets) enriched with a full-power exponential 
modality~\citep{LMSS,Kanazawa,KanKuzNigSce2018Dale}, via encoding of {\em semi-Thue systems}.

A {\em semi-Thue system}~\citep{Thue} is a pair $\STS = \langle \Af, P \rangle$, where $\Af$ is a finite alphabet and
$P$ is a finite set of rewriting rules of the form $\alpha \Rightarrow \beta$, where $\alpha$ and $\beta$
are words (possibly empty) over $\Af$. A rule from $P$ can be {\em applied} as follows:
$\eta\,\alpha\,\theta \Rightarrow_\STS \eta\,\beta\,\theta$, where $(\alpha \Rightarrow \beta) \in P$ and
$\eta$ and $\theta$ are arbitrary (possibly empty) words over $\Af$. By $\Rightarrow^*_\STS$ (rewritability
relation in $\STS$) we denote the
reflexive-transitive closure of $\Rightarrow_\STS$.

We use the following classical result by~\citet{markov47dan} and~\citet{post47jsl}: 

\begin{theorem}\label{Th:Markov}
There exists a semi-Thue system $\STS$ for which the $\Rightarrow^*_\STS$ relation is algorithmically undecidable,
i.e., there exists no algorithm that, given words $\gamma$ and $\delta$, decides whether $\gamma \Rightarrow^*_\STS \delta$.
\end{theorem}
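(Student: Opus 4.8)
The plan is to reduce the halting problem for Turing machines to the rewritability relation $\Rightarrow^*_\STS$. First I would fix a deterministic Turing machine $M$ whose halting problem is algorithmically undecidable; the existence of such an $M$ is the classical undecidability of the halting problem (as the input varies). The target is a single semi-Thue system $\STS = \langle \Af, P \rangle$ together with an effective map sending an input $x$ of $M$ to a pair of words $(\gamma_x, \delta)$ such that $\gamma_x \Rightarrow^*_\STS \delta$ holds if and only if $M$ halts on $x$. Since the left-hand condition is undecidable as $x$ ranges over all inputs, the $\Rightarrow^*_\STS$ relation is undecidable as well.

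Next I would encode instantaneous descriptions (configurations) of $M$ as words over $\Af = Q \cup T \cup \{\#\}$, where $Q$ is the set of control states, $T$ is the tape alphabet (including the blank), and $\#$ is an endmarker delimiting the used portion of the tape. A configuration in which the tape reads $u\,a\,v$, the head scans the cell holding $a$, and the control state is $q$ is written as the single string $\#\, u\, q\, a\, v\, \#$. Each instruction of $M$ then becomes one or several rules in $P$: a right-moving instruction ``in state $q$ scanning $a$, print $b$, move right, enter $q'$'' yields the rule $q\, a \Rightarrow b\, q'$; a left-moving one yields the rules $c\, q\, a \Rightarrow q'\, c\, b$ for every $c \in T$; and auxiliary rules involving $\#$ let the tape be extended by a fresh blank at either end when the head reaches an endmarker. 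The key invariant is that, restricted to configuration words, $C \Rightarrow_\STS C'$ holds exactly when $C'$ is the successor of $C$ under the transition function of $M$, so that $\Rightarrow^*_\STS$ faithfully mirrors the computation of $M$.

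To reduce halting to reaching a \emph{fixed} word, I would add clean-up rules that fire only once a halting state $q_{\mathrm{fin}}$ has been entered; these erase all tape symbols and endmarkers and collapse the string to a designated constant $\delta$ (say $\delta = q_{\mathrm{fin}}$). Then $\gamma_x = \#\, q_0\, x\, \#$ satisfies $\gamma_x \Rightarrow^*_\STS \delta$ precisely when $M$ halts on $x$, which completes the reduction. The main difficulty here is not conceptual but lies in verifying that the simulation is faithful in \emph{both} directions: one must check that no spurious rewriting sequence can produce $\delta$ without corresponding to a genuine halting computation, which amounts to controlling exactly when the blank-extension and clean-up rules are permitted to fire and carefully treating the boundary cases at the two ends of the tape. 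Determinism of $M$ is a convenience rather than a necessity; what the argument genuinely needs is that reachability of $\delta$ from $\gamma_x$ coincides with the existence of a halting run of $M$ on $x$.
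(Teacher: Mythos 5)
The paper does not prove this statement at all: it is quoted as the classical Markov--Post theorem and supported only by citations. Your proposal is a correct reconstruction of exactly that classical argument --- simulate a Turing machine with undecidable halting problem by a semi-Thue system, encoding configurations as $\#\,u\,q\,a\,v\,\#$, turning instructions into rewriting rules, and adding clean-up rules so that reaching a fixed word $\delta$ is equivalent to halting; fixing $\delta$ and varying $\gamma_x$ indeed suffices to make the binary relation $\Rightarrow^*_\STS$ undecidable. You correctly flag the only delicate point (faithfulness of the backward direction), which is handled in the standard way by the invariant that every reachable word contains exactly one state symbol and that the erasure rules fire only after $q_{\mathrm{fin}}$ appears, so no further review is needed.
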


Before going into the details of our undecidability proof, we sketch the ideas. As noticed by~\citet{BuszkoZML,Buszkowski2005nonlog}, semi-Thue systems can be encoded as finite {\em theories} (that is, sets of sequents considered as additional axioms) over the Lambek calculus.
Following~\citet{Buszkowski2005nonlog}, such an encoding can 
performed in a very natural way, by taking a new axiom $y_1, \ldots, y_m \to x_1 \cdot\ldots\cdot x_k$ for each rule
$x_1 \ldots x_k \Rightarrow y_1 \ldots y_m$ of the semi-Thue system. (Notice how the arrows change their direction here.) 

\begin{proposition}\label{Pr:STStoTheory}
The sequent $a_1, \ldots, a_n \to b_1 \cdot\ldots\cdot b_k$ is derivable in $\MALC$ extended with the set
of new axioms produced from rules of a semi-Thue system $\STS$ (as explained above) if and only if 
$b_1 \ldots b_k \Rightarrow^*_{\STS} a_1 \ldots a_n$.
\end{proposition}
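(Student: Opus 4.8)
The plan is to prove the two implications separately: derivability from rewritability by explicitly constructing derivations, and rewritability from derivability by a semantic soundness argument in a suitably chosen residuated lattice. Throughout, for a word $w = c_1 \ldots c_r$ over $\Af$ I write $\widehat{w} = c_1 \cdot \ldots \cdot c_r$ for the product formula it determines and $\check{w}$ for the antecedent $c_1, \ldots, c_r$ consisting of its letters; the empty word gives $\widehat{\varepsilon} = \U$ and the empty antecedent. Note that the new axiom attached to a rule $\alpha \Rightarrow \beta$ is exactly $\check{\beta} \to \widehat{\alpha}$.

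For the direction (rewritability $\Rightarrow$ derivability) I would show, by induction on the length of a rewriting $b_1 \ldots b_k = w_0 \Rightarrow_\STS \cdots \Rightarrow_\STS w_t = a_1 \ldots a_n$, that $\check{v} \to \widehat{u}$ is derivable whenever $u \Rightarrow^*_\STS v$. In the base case $\check{u} \to \widehat{u}$ is derivable in $\MALC$ by $r-1$ applications of $\cdot R$ over identity axioms. For a single step $w' = \eta\alpha\theta \Rightarrow_\STS \eta\beta\theta = w''$ I split the succedent $\widehat{\eta\alpha\theta}$ with $\cdot R$ into the blocks $\widehat{\eta}$, $\widehat{\alpha}$, $\widehat{\theta}$, deriving the outer blocks as in the base case and feeding the axiom $\check{\beta} \to \widehat{\alpha}$ into the middle, which yields $\check{w''} \to \widehat{w'}$. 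Finally I chain the single-step sequents along the rewriting: from the induction hypothesis $\check{w_{t-1}} \to \widehat{u}$ one obtains $\widehat{w_{t-1}} \to \widehat{u}$ by $\cdot L$, and a cut with $\check{w_t} \to \widehat{w_{t-1}}$ gives $\check{w_t} \to \widehat{u}$. Associativity of $\cdot$ in $\MALC$ makes the bracketing of product formulae immaterial here.

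For the harder direction (derivability $\Rightarrow$ rewritability) I would pass to a model. On $\Af^*$ define the preorder $u \preceq v \iff v \Rightarrow^*_\STS u$; since $\Rightarrow^*_\STS$ is a congruence for concatenation, $(\Af^*, \cdot, \varepsilon, \preceq)$ is a partially ordered monoid. Its lattice of $\preceq$-downsets $\mathcal{D}(\Af^*)$, ordered by inclusion, with unit ${\downarrow}\varepsilon$, product $X \odot Y = {\downarrow}(X \cdot Y)$, intersection as $\wedge$, union as $\vee$, and the induced residuals as $\BS$ and $\SL$, is a complete residuated lattice, hence a sound model of $\MALC$ in which cut is also sound. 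I interpret each primitive type $x$ by the principal downset ${\downarrow}x$. Two facts then drive the argument. First, ${\downarrow}c \odot {\downarrow}d = {\downarrow}(cd)$, so by iteration $[\![\widehat{w}]\!] = {\downarrow}w$ for every word $w$. Consequently the axiom $\check{\beta} \to \widehat{\alpha}$ is valid: it demands ${\downarrow}\beta \subseteq {\downarrow}\alpha$, which holds because $\alpha \Rightarrow_\STS \beta$ forces every word reachable from $\beta$ to be reachable from $\alpha$. By soundness, the derivable sequent $a_1, \ldots, a_n \to b_1 \cdot \ldots \cdot b_k$ is valid, i.e. ${\downarrow}(a_1 \ldots a_n) \subseteq {\downarrow}(b_1 \ldots b_k)$; evaluating at $a_1 \ldots a_n$, which lies in its own downset, yields $a_1 \ldots a_n \in {\downarrow}(b_1 \ldots b_k)$, that is $b_1 \ldots b_k \Rightarrow^*_\STS a_1 \ldots a_n$, as required. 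This reproduces the technique of \citet{Buszkowski2005nonlog}.

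I expect the main obstacle to be choosing the right model: the naive powerset-monoid interpretation, using complex product without closure, fails to validate the axioms, since a word derived from $\alpha_1 \ldots \alpha_k$ need not split into factors individually reachable from the $\alpha_j$. Moving to downsets with $X \odot Y = {\downarrow}(X \cdot Y)$ repairs exactly this gap, making each $[\![\widehat{w}]\!]$ a principal downset and reducing axiom-validity to the one-line monotonicity property of $\Rightarrow^*_\STS$. The remaining points are routine bookkeeping: the treatment of rules with empty $\alpha$ or $\beta$ (handled by $\widehat{\varepsilon} = \U$ and the $\U L$, $\U R$, $\Lambda \to \U$ machinery), the associativity normalisation in the derivation-building direction, and verifying that the downset construction genuinely supplies residuals for $\BS$, $\SL$ and a unit for $\U$ compatible with the empty antecedent — all standard for the downset completion of a partially ordered monoid.
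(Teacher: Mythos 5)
Your proof is correct, and the forward direction (rewriting steps become instances of $\cdot R$ over the new axioms, chained by $\cdot L$ and cut) is the same as what the paper does in the $1 \Rightarrow 2$ implication of Lemma~\ref{Lm:undecRR}. For the converse, however, you take a genuinely different route. The paper does not prove Proposition~\ref{Pr:STStoTheory} directly (it cites Buszkowski); the argument it actually carries out for the analogous fact is the $4 \Rightarrow 1$ implication of Lemma~\ref{Lm:undecRR}, which is purely proof-theoretic: the axioms are first internalised as ${!}$-formulae in $\ELM$, cut is eliminated there, and the subformula property lets one ``hide'' the ${!}$- and $\SL$-formulae and read a cut-free derivation directly as a semi-Thue derivation. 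Your semantic argument via the downset completion of $(\Af^*,\cdot,\varepsilon,\preceq)$ sidesteps the main obstacle of the theory formulation — namely that $\MALC$ plus arbitrary non-logical axioms does not admit cut elimination, so no cut-free analysis is available at this level — because soundness in a residuated lattice holds for derivations \emph{with} cut. You also correctly identify the one non-trivial point, that the naive complex product must be replaced by its downward closure so that $[\![\widehat{w}]\!]$ is the principal downset ${\downarrow}w$ and axiom validity reduces to monotonicity of $\Rightarrow^*_\STS$. What the semantic route buys is a short, self-contained proof of exactly this proposition; what the paper's syntactic route buys is that the same analysis continues to work after the axioms are internalised by ${!}$ (where one genuinely needs to inspect cut-free $\ELM$-derivations), which is what the undecidability results ultimately rest on. One cosmetic remark: $\preceq$ is only a preorder, not a partial order, but downsets of a preordered monoid form a complete residuated lattice all the same, so nothing breaks.
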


By Theorem~\ref{Th:Markov}, this yields undecidability for the problem of derivability from finite sets of hypotheses in the Lambek calculus.

In his earlier paper, \citet{BuszkoZML} provides a much more sophisticated encoding of semi-Thue derivations using only one division operation
(the encoding above also uses product). We discuss this encoding later, in Section~\ref{S:Buszko}.

Reducing derivability from finite theories to ``pure'' derivability requires a sort of deduction theorem, which allows to {\em internalise}
additional axioms (hypotheses) into the sequent being derived. For classical or intuitionistic logic, for example, this could be implemented
as follows: formula $\varphi$ is derivable from hypotheses $\psi_1, \ldots, \psi_n$ if and only if the formula $\psi_1 \to (\psi_2 \to \ldots \to (\psi_n \to \varphi)
\ldots)$ is derivable without hypotheses. In the Lambek calculus without (sub)exponentials, however, such a theorem is impossible, due to the substructural nature
of the system. This is due to the fact that in derivations from hypotheses each hypothesis can be used several (but also maybe zero) times, while in the absence
of weakening and contraction the ``pure'' calculus treats each formula as a ``resource'' which should be used exactly once.

The full-power exponential, as in $\EL$, enables the structural rules of weakening, contraction, and permutation, and 
enjoys internalisation of extra axiom, in the following form.
\begin{proposition}\label{Pr:dedEL}
A sequent $\Delta \to D$ is derivable in $\EL$ from a set of sequents $\{ \Gamma_1 \to C_1, \ldots, \Gamma_N \to C_N \}$ (possibly using cut) if and only if
the sequent ${!}(C_1 \SL \prod\Gamma_1), \ldots, {!}(C_N \SL \prod\Gamma_N), \Delta \to D$ is derivable in $\EL$.
\end{proposition}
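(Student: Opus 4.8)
The plan is to prove both implications by the standard ``internalisation'' argument, the decisive leverage being that in $\EL$ the modality ${!}$ carries \emph{all three} structural rules --- weakening ${!}W$, contraction ${!}C$, and permutation ${!}P_{1,2}$. It is precisely the availability of weakening that makes the full exponential (rather than the relevant one) the right setting for a deduction theorem, since a hypothesis may legitimately be used zero times. Throughout write $\Phi = {!}(C_1 \SL \prod\Gamma_1), \ldots, {!}(C_N \SL \prod\Gamma_N)$ for the internalised prefix.

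For the forward direction I would first record an ``axiom emulation'' fact: each hypothesis $\Gamma_i \to C_i$ becomes derivable in pure $\EL$ once $\Phi$ is placed in the antecedent, i.e. $\Phi, \Gamma_i \to C_i$ is derivable. Indeed, applying ${!}L$ to the $i$-th formula of $\Phi$ and then $\SL L$ reduces the goal to $\Gamma_i \to \prod\Gamma_i$ (obtained by iterating $\cdot R$ on the axioms $B \to B$) together with $C_i \to C_i$; the remaining ${!}$-formulae of $\Phi$ are discharged by ${!}W$. I would then argue by induction on the derivation $\mathcal{D}$ of $\Delta \to D$ from the hypotheses, proving the stronger statement that $\Phi, \Theta \to E$ is derivable in pure $\EL$ for every endsequent $\Theta \to E$ of a subderivation of $\mathcal{D}$. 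The base cases are the identity axiom and $\Lambda \to \U$ (handled by ${!}W$) and the hypotheses (handled by the emulation fact). For one-premise rules and for the context-\emph{sharing} rules $\wedge R$ and $\vee L$ the prefix $\Phi$ is simply carried along, since those premises share their context. For the context-\emph{splitting} rules $\cdot R$, $\BS L$, $\SL L$, and $\CUT$ (which $\mathcal{D}$ may use, as permitted), the induction hypothesis supplies $\Phi$ in each premise; after applying the rule one obtains two copies of $\Phi$ in the conclusion, which are merged into one by permuting (${!}P_{1,2}$) the second copy to the front and contracting (${!}C$). Taking $\Theta \to E$ to be the root $\Delta \to D$ yields exactly $\Phi, \Delta \to D$.

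For the converse I would show that each internalised formula is outright derivable from its hypothesis: from $\Gamma_i \to C_i$ one obtains $\prod\Gamma_i \to C_i$ by iterating $\cdot L$, then $\Lambda \to C_i \SL \prod\Gamma_i$ by $\SL R$ (legitimate here because $\EL$ imposes no non-emptiness restriction), and finally $\Lambda \to {!}(C_i \SL \prod\Gamma_i)$ by the $n=0$ instance of ${!}R$. Given a pure $\EL$-derivation of $\Phi, \Delta \to D$, I would then cut these $N$ derivations against it one at a time, each $\CUT$ removing one leading ${!}$-formula, until the antecedent is reduced to $\Delta$. The result is a derivation of $\Delta \to D$ in $\EL$ extended with the hypotheses, using cut, as the statement permits.

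I expect no serious obstacle. The only delicate point is the bookkeeping in the forward induction, where one must track how the single copy of $\Phi$ is duplicated and re-merged across every multiplicative splitting, and confirm that weakening disposes of genuinely unused hypotheses and that permutation can always move ${!}$-formulae past ordinary ones. All of this is routine given the structural rules for ${!}$, and the whole argument mirrors the classical internalisation lemmas for linear logic with exponentials~\citep{LMSS}.
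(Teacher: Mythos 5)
Your argument is correct, and it is exactly the standard internalisation argument that the paper relies on: Proposition~\ref{Pr:dedEL} is stated there without proof, as a known fact citing \citet{LMSS} and \citet{KanKuzNigSce2018Dale}, and your two directions (emulating each hypothesis via ${!}L$, $\SL L$, ${!}W$ and merging duplicated prefixes with ${!}P$/${!}C$ at context-splitting rules; conversely deriving $\Lambda \to {!}(C_i \SL \prod\Gamma_i)$ and cutting) are the intended route. The only detail worth making explicit is the ${!}R$ case of the forward induction, where the prefix $\Phi$ can be carried through precisely because every formula in it is ${!}$-prefixed, so the side condition of ${!}R$ is preserved.
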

By $\prod \Gamma$, for $\Gamma = E_1, \ldots, E_k$, we here and further denote the product
$E_1 \cdot \ldots \cdot E_k$.

Theorem~\ref{Th:Markov} and Propositions~\ref{Pr:STStoTheory} and~\ref{Pr:dedEL} together yield undecidability for $\ELM$.

Subexponentials, with restricted sets of structural rules, also allow internalisation, but the
$!$-formulae used in order to obtain it are more complicated. For $\LLsM$, this is performed by~\cite{KanKuzNigSce2018Dale}
(\cite{KanKuzSceFG} use a slightly different strategy). 
We perform internalisation of finite sets of hypotheses in Morill's systems, where ${!}$ interacts with brackets.

For our undecidability proofs, it will be convenient to use Chomsky's type-0 (unrestricted) grammars~\citep{Chomsky}, a formalism closely
related to semi-Thue systems. A type-0 grammar $\Gc$ can be defined as a semi-Thue system $\STS$ with the following additional features:
\begin{itemize}
\item a designated symbol $s \in \Af$, called the {\em starting symbol};
\item a designated subset $\Sigma \subset \Af$, called the {\em terminal alphabet};
\item left-hand sides of rewriting rules are required to be non-empty.
\end{itemize}
The {\em language} generated by the type-0 grammar $\Gc$ is defined as the set of all words $w$ over the {\em terminal} alphabet $\Sigma$, 
such that $s \Rightarrow^*_\STS w$, where $\STS$ is the rewriting (semi-Thue) system of $\Gc$. Further we use the notation $\Rightarrow^*_\Gc$ instead
of $\Rightarrow^*_\STS$.

For type-0 grammars, there is the following form of Theorem~\ref{Th:Markov}:
\begin{theorem}\label{Th:Markov_Gc}
There exists a type-0 grammar $\Gc$ such that the language generated by $\Gc$ is algorithmically undecidable, 
i.e., there exists no algorithm that, given a word $w$ over $\Sigma$, decides whether $s \Rightarrow^*_\Gc w$.
\end{theorem}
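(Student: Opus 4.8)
The plan is to transfer the undecidability of Theorem~\ref{Th:Markov} to the type-0 setting by internalising the word problem of the semi-Thue system into the language of a single grammar. Fix the semi-Thue system $\STS = \langle \Af, P \rangle$ provided by Theorem~\ref{Th:Markov}, so that there is no algorithm deciding, for arbitrary words $\gamma, \delta \in \Af^*$, whether $\gamma \Rightarrow^*_\STS \delta$. The difficulty in passing to a grammar is that a type-0 grammar must start from a single fixed symbol $s$, whereas the undecidable problem ranges over \emph{pairs} $(\gamma,\delta)$. I would therefore encode a pair as one word: introduce a fresh separator $\# \notin \Af$, set $\Sigma = \Af \cup \{\#\}$, and build $\Gc$ so that $s \Rightarrow^*_\Gc \gamma\,\#\,\delta$ holds exactly when $\gamma \Rightarrow^*_\STS \delta$. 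Membership of $\gamma\,\#\,\delta$ in the language of $\Gc$ is then equivalent to $\gamma \Rightarrow^*_\STS \delta$, and undecidability of the language follows immediately.

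To realise this, I would let $\Gc$ work in three phases. In the \emph{seed} phase, from $s$ the grammar generates $\delta\,\#\,\delta$ for an arbitrary $\delta \in \Af^*$, producing the two copies by a standard ``copying'' construction: auxiliary nonterminals emit matching symbols which are then sorted into the two halves on opposite sides of $\#$, the right-hand copy being frozen as terminals while the left-hand copy is kept in a rewritable (marked) form. In the \emph{simulation} phase, for each rule $\alpha \Rightarrow \beta$ of $\STS$ I would add the \emph{reversed} production $\beta \to \alpha$, guarded by the markers so that it may fire only inside the left copy. Starting from the left copy $\delta$, a sequence of such productions reaches $\gamma$ precisely when $\gamma \Rightarrow_\STS w_{k-1} \Rightarrow_\STS \cdots \Rightarrow_\STS \delta$, that is, precisely when $\gamma \Rightarrow^*_\STS \delta$. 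A final \emph{clean-up} phase removes the markers, turning the left copy into the terminal word $\gamma$, so that the terminal string generated is exactly $\gamma\,\#\,\delta$.

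Establishing $s \Rightarrow^*_\Gc \gamma\,\#\,\delta \Leftrightarrow \gamma \Rightarrow^*_\STS \delta$ then splits into completeness and soundness of this simulation. Completeness is routine: any $\STS$-derivation witnessing $\gamma \Rightarrow^*_\STS \delta$ is mirrored step by step by the reversed productions on the left copy. The main obstacle is soundness, i.e.\ ruling out \emph{spurious} derivations: I must ensure, through the marker discipline and the ordering of the phases, that the seed phase really produces two identical words, that the reversed rules can never act on the right copy or across the separator, and that clean-up cannot be interleaved so as to leave partially rewritten junk. This bookkeeping is exactly what the markers are designed to enforce, and it is the point where the construction must be checked with care. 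As an alternative that sidesteps the explicit copying machinery, one may instead invoke the classical fact that type-0 grammars generate precisely the recursively enumerable languages~\citep{Chomsky}: the set $\{\gamma\,\#\,\delta : \gamma \Rightarrow^*_\STS \delta\}$ is recursively enumerable yet, by Theorem~\ref{Th:Markov}, undecidable, hence is the language of some type-0 grammar.
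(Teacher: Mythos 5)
The paper offers no proof of Theorem~\ref{Th:Markov_Gc} at all: it is stated as ``the following form of Theorem~\ref{Th:Markov},'' i.e., as the classical Markov--Post undecidability result transported to type-0 grammars via the standard equivalence between unrestricted grammars and recursively enumerable languages. Your closing alternative---observe that $\{\gamma\,\#\,\delta : \gamma \Rightarrow^*_\STS \delta\}$ is recursively enumerable but undecidable, and invoke the classical fact that type-0 grammars generate exactly the r.e.\ languages---is precisely that standard justification, and it is complete as it stands; I would lead with it rather than relegate it to a remark. Your explicit three-phase construction is a legitimate, more self-contained route, but as written it leaves work undone: the copying phase producing $\delta\,\#\,\delta$ and the marker discipline ruling out spurious derivations are only sketched, and there is one concrete obstacle you do not address. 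If $\STS$ contains a rule $\alpha \Rightarrow \varepsilon$ with empty right-hand side, its reversal $\varepsilon \to \alpha$ is not a legal type-0 production, since the paper's definition (and the standard one) requires non-empty left-hand sides; such reversed rules must instead be anchored to a marker or boundary symbol of the left copy. None of this is fatal, but it is exactly the kind of bookkeeping the Chomsky-equivalence argument lets you avoid.
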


Wishing to prove undecidability for several closely related calculi, 
we first introduce an abstract notion
of internalisation of finite theories in a deduction-theorem style. Then we prove undecidability for an arbitrary calculus $\Lc$ which enjoys
this property and is $\pi_q$-sound in $\ELM$. The internalisation property facilitates the ``forward'' direction  of the encoding,
from a type-0 grammar (semi-Thue system) to $\Lc$, and $\pi_q$-soundness is used for the ``backwards'' direction, from derivations
in $\Lc$, via $\ELM$, back to derivations in the grammar.

For simplicity, we are going to internalise formulae rather than sequents: recall that $\Gamma \to C$ corresponds to $C \SL \prod\Gamma$.
Further we shall designate two specific variables, $s$ and $q$. The $s$ variable is going to be the starting symbol of the grammar, and
$q$ is a fresh variable which should not appear in the grammar. The $q$ variable will be used {\em in lieu} of $\U$, since the latter could
be unavailable in presence of Lambek's restriction.

\begin{definition}\label{Df:intern}
Let $\Ac = \{ A_1, \ldots, A_N \}$ be a finite set of formulae.
A meta-formula $\Phi$ {\em internalises} $\Ac$ in the calculus $\Lc$, if the following holds:
\begin{enumerate}
\item \label{It:internS} the sequent $\Phi, s \to s$ is derivable in $\Lc$;
\item \label{It:internLand} the following `landing' rule is admissible in $\Lc$:
$$
\infer[\mathrm{land},\ A_i \in \Ac]
{\Phi, \Delta_1, \Delta_2 \to C}
{\Phi, \Delta_1, A_i, \Delta_2 \to C}
$$
\item \label{It:internBack} 
the sequent ${!}A_1, \ldots, {!}A_N \to \prod\pi_q(\Phi)$ is derivable in $\ELM$.
\end{enumerate}
If for any finite set $\Ac$ of Lambek formulae there exists a meta-formula $\Phi$ which internalises $\Ac$ in $\Lc$, then
we say that $\Lc$ {\em internalises finite sets of Lambek formulae.}
\end{definition}

Let $\Gc$ be a type-0 grammar and suppose that all letters of its alphabet are
variables ($\Af \subset \Var$). We also introduce an extra fresh variable $q \notin \Af$. Let
$$
\Ac_{\Gc} = \{ (x_1 \cdot\ldots\cdot x_k) \SL (y_1 \cdot\ldots\cdot y_m) \mid
x_1 \ldots x_k \Rightarrow y_1 \ldots y_m \mbox{ is a rewriting rule of $\Gc$}\}.
$$
By definition of a type-0 grammar, $x_1 \ldots x_k$ is always non-empty. On the other hand,
$y_1 \ldots y_m$ could be empty ($m = 0$), and in this case we include just $x_1 \cdot\ldots\cdot x_k$ into $\Bc_\Gc$.
Such a rule, with an empty right-hand side, is called an {\em $\varepsilon$-rule} and written as
$x_1 \ldots x_k \Rightarrow \varepsilon$ ($\varepsilon$ stands for the empty word).

Now we are ready to formulate and prove the key lemma. 

\begin{lemma}\label{Lm:undecRR}
Let $\Lc$ be $\pi_q$-sound in $\ELM$ and admit the $\cdot L$, $\cdot R$, and $\SL L$ rules.
Let $\Phi_{\Gc}$ internalise $\Ac_{\Gc} = \{ A_1, \ldots, A_N \}$ in $\Lc$ in the sense of Definition~\ref{Df:intern}.
Then the following are equivalent:
\begin{enumerate}
\item $s \Rightarrow_{\Gc}^* a_1 \ldots a_n$;
\item the sequent $\Phi_{\Gc}, a_1, \ldots, a_n \to s$ is derivable in $\Lc$;
\item there exists such a bracketing $\Delta$ of $a_1, \ldots, a_n$ that the sequent
$\Phi_{\Gc}, \Delta \to s$ is derivable in $\Lc$;
\item the sequent ${!}A_1, \ldots, {!}A_N, a_1, \ldots, a_n \to s$ is derivable in $\ELM$.
\end{enumerate}
\end{lemma}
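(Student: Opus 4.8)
The plan is to prove the four conditions equivalent by the round-robin chain $1 \Rightarrow 2 \Rightarrow 3 \Rightarrow 4 \Rightarrow 1$. The two extreme implications carry the real content. The implication $1 \Rightarrow 2$ is the ``forward'' simulation of grammar rewriting inside the abstract calculus $\Lc$, and it is the only place where the three internalisation clauses of Definition~\ref{Df:intern} together with admissibility of $\cdot L$, $\cdot R$, $\SL L$ are combined. The implication $4 \Rightarrow 1$ is just the known encoding of semi-Thue rewriting in $\ELM$: by Proposition~\ref{Pr:dedEL} the sequent in statement~4 is (up to passing from $\ELM$ to $\EL$) the internalisation of the Buszkowski theory $\{\, y_1, \ldots, y_m \to x_1 \cdot \ldots \cdot x_k \,\}$ associated with the rules of $\Gc$ over the goal $a_1, \ldots, a_n \to s$, so it is equivalent to derivability of $a_1, \ldots, a_n \to s$ from that theory, which by Proposition~\ref{Pr:STStoTheory} is equivalent to $s \Rightarrow^*_\Gc a_1 \ldots a_n$. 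The two middle implications are cheap: $2 \Rightarrow 3$ is immediate on taking for $\Delta$ the trivial bracket-free bracketing $a_1, \ldots, a_n$, and $3 \Rightarrow 4$ is where $\pi_q$-soundness is used.

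For $1 \Rightarrow 2$ I would induct on the length of the rewriting $s \Rightarrow^*_\Gc a_1 \ldots a_n$. In the base case (zero steps) the word is $s$ itself, and the required sequent $\Phi_\Gc, s \to s$ is exactly clause~\ref{It:internS} of internalisation. For the inductive step write the last step as $\eta\,\alpha\,\theta \Rightarrow \eta\,\beta\,\theta$, where $\alpha = x_1 \ldots x_k \Rightarrow y_1 \ldots y_m = \beta$ is a grammar rule, so $A_i = (x_1 \cdot \ldots \cdot x_k) \SL (y_1 \cdot \ldots \cdot y_m)$ lies in $\Ac_\Gc$. The induction hypothesis supplies $\Phi_\Gc, \eta, x_1, \ldots, x_k, \theta \to s$, and I want $\Phi_\Gc, \eta, y_1, \ldots, y_m, \theta \to s$. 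The recipe is: apply the landing rule (clause~\ref{It:internLand}) to reduce the goal to $\Phi_\Gc, \eta, A_i, y_1, \ldots, y_m, \theta \to s$; then apply $\SL L$, whose left premise $y_1, \ldots, y_m \to y_1 \cdot \ldots \cdot y_m$ is obtained by iterated $\cdot R$ and whose right premise $\Phi_\Gc, \eta, x_1 \cdot \ldots \cdot x_k, \theta \to s$ collapses by iterated $\cdot L$ to the induction hypothesis. The $\varepsilon$-rule case ($m = 0$, so $A_i = x_1 \cdot \ldots \cdot x_k$) is the same but simpler: landing followed by iterated $\cdot L$ alone, with no $\SL L$.

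For $3 \Rightarrow 4$ I would apply $\pi_q$-soundness to the derivable sequent $\Phi_\Gc, \Delta \to s$. Since $\Delta$ is a bracketing of $a_1, \ldots, a_n$ and each $a_i$ lies in $\Af$ while $q \notin \Af$, the projection erases all brackets and fixes every letter and $s$, so $\pi_q(\Phi_\Gc, \Delta \to s)$ equals $\pi_q(\Phi_\Gc), a_1, \ldots, a_n \to s$, which is derivable in $\ELM$. Writing the bracket-free meta-formula $\pi_q(\Phi_\Gc)$ as a list $F_1, \ldots, F_r$, I fold it by iterated $\cdot L$ into $F_1 \cdot \ldots \cdot F_r, a_1, \ldots, a_n \to s$ and cut this against ${!}A_1, \ldots, {!}A_N \to \prod \pi_q(\Phi_\Gc)$, the sequent provided by clause~\ref{It:internBack}; as cut is admissible in $\ELM$, the result is ${!}A_1, \ldots, {!}A_N, a_1, \ldots, a_n \to s$, which is statement~4.

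The step I expect to be the main obstacle is $1 \Rightarrow 2$, as it is the only genuinely new argument: the abstract hypotheses on $\Lc$ must be assembled into a working simulation, and the delicate points are the placement of the inserted $A_i$, the correct orientation of $\SL L$ against the product $y_1 \cdot \ldots \cdot y_m$, and the separate treatment of $\varepsilon$-rules. A secondary point requiring care is in $3 \Rightarrow 4$: one must check that $\pi_q$ really fixes all the $a_i$ and $s$, which relies on $q$ being a fresh variable outside $\Af$, so that the projected antecedent is exactly $a_1, \ldots, a_n$ with no spurious units; everything in $4 \Rightarrow 1$ is then the standard $\ELM$ machinery already furnished by Propositions~\ref{Pr:STStoTheory} and~\ref{Pr:dedEL}.
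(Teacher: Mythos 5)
Your proposal is correct, and three of the four implications ($1 \Rightarrow 2$, $2 \Rightarrow 3$, $3 \Rightarrow 4$) coincide with the paper's proof essentially line for line: the base case of $1 \Rightarrow 2$ is clause~\ref{It:internS}, the inductive step is landing followed by $\SL L$ whose left premise is built by iterated $\cdot R$ and whose right premise collapses by iterated $\cdot L$ to the induction hypothesis, and $3 \Rightarrow 4$ is $\pi_q$-soundness plus folding $\pi_q(\Phi_\Gc)$ with $\cdot L$ and cutting against clause~\ref{It:internBack}. The one place you diverge is $4 \Rightarrow 1$. The paper does not route this through Propositions~\ref{Pr:STStoTheory} and~\ref{Pr:dedEL}; instead it takes a cut-free $\ELM$ derivation of ${!}A_1, \ldots, {!}A_N, a_1, \ldots, a_n \to s$, invokes the subformula property, erases the $\SL$- and ${!}$-formulae and replaces antecedent products by commas, and reads the resulting ``simplified calculus'' derivation directly as a type-0 derivation. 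Your shortcut is more economical but hides a small mismatch: Proposition~\ref{Pr:dedEL} gives you derivability of $a_1, \ldots, a_n \to s$ from the theory \emph{in $\EL$}, whereas Proposition~\ref{Pr:STStoTheory} speaks of derivability from the theory \emph{in $\MALC$}; to conclude $s \Rightarrow^*_\Gc a_1 \ldots a_n$ you need conservativity of $\EL$ over $\MALC$ for ${!}$-free sequents derived from ${!}$-free axioms. That fact is true, and the paper itself leans on the same combination when it asserts that the two propositions ``together yield undecidability for $\ELM$,'' but establishing it is essentially the same cut-elimination-plus-subformula analysis that the paper's direct argument performs — so your route does not actually avoid that work, it only defers it to propositions the paper states without proof. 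If you make that conservativity step explicit (or simply reproduce the direct analysis of the cut-free derivation), the argument is complete.
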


\begin{proof}
We prove round-robin implications: $1 \Rightarrow 2 \Rightarrow 3 \Rightarrow 4 \Rightarrow 1$.

\fbox{$1 \Rightarrow 2$}
Proceed by induction on the number of rewriting steps in $\Rightarrow_{\Gc}^*$. The base case is no rewriting steps, 
$s \Rightarrow^*_{\Gc} s$, and $\Phi_{\Gc}, s \to s$ is derivable by Definition~\ref{Df:intern}, item~\ref{It:internS}.

Next, let 
$a_1 \ldots a_i x_1 \ldots x_m a_r \ldots a_n \Rightarrow_{\Gc} a_1 \ldots a_i y_1  \ldots y_k a_r \ldots a_n$ be the last rewriting 
step. By induction hypothesis, since $s \Rightarrow_{\Gc}^* a_1 \ldots a_i x_1 \ldots x_m a_r \ldots a_n$ in fewer rewriting steps,
we have $\Phi_{\Gc}, a_1, \ldots, a_i, x_1, \ldots, x_m, a_r, \ldots, a_n$. Since $x_1 \ldots x_m \Rightarrow y_1 \ldots y_k$ is a rule of
$\Gc$, we have $(x_1 \cdot \ldots \cdot x_m) \SL (y_1 \cdot \ldots \cdot y_m) \in \Ac_{\Gc}$.
Now the needed sequent $\Phi_{\Gc}, a_1, \ldots, a_i, y_1, \ldots, y_m, a_r, \ldots, a_n \to s$ is derived as follows,
using the `landing' rule provided by Definition~\ref{Df:intern}, item~\ref{It:internLand}:
$$
\infer[\mathrm{land}]{\Phi_{\Gc}, a_1, \ldots, a_i, y_1, \ldots, y_k, a_r, \ldots, a_n \to s}
{\infer[\SL L]{\Phi_{\Gc}, a_1, \ldots, a_i, (x_1 \cdot \ldots \cdot x_m) \SL (y_1 \cdot \ldots \cdot y_k), y_1, \ldots, y_k, a_r, \ldots, a_n \to s}
{\infer=[\cdot R]{\vphantom{\Phi} y_1, \ldots, y_k \to y_1 \cdot \ldots \cdot y_k}{\mbox{axioms}} & 
\infer=[\cdot L]{\Phi_{\Gc}, a_1, \ldots, a_i, x_1 \cdot \ldots \cdot x_m, a_r, \ldots, a_n \to s}
{\Phi_{\Gc}, a_1, \ldots, a_i, x_1, \ldots, x_m, a_r, \ldots, a_n \to s}}}
$$
(Here and further double horizontal line means several applications of the rule.)

\fbox{$2 \Rightarrow 3$} is obvious, since one just takes the trivial bracketing $\Delta = a_1, \ldots, a_n$.

\fbox{$3 \Rightarrow 4$}
Since $a_1, \ldots, a_n = \pi_q(\Delta)$ and $\Lc$ is $\pi_q$-sound in $\ELM$, the sequent 
$\pi_q(\Phi_{\Gc}), a_1, \ldots, a_n \to s$ is derivable in $\ELM$. Next, proceed as follows, using item~\ref{It:internBack} of 
Definition~\ref{Df:intern}:
$$
\infer[\CUT]{{!}B_1, \ldots, {!}B_N, a_1, \ldots, a_n \to s}{{!}B_1, \ldots, {!}B_N \to \prod \pi_q(\Phi_{\Gc}) & 
\infer=[\cdot L]{\prod \pi_q(\Phi_{\Gc}), a_1, \ldots, a_n \to s}{\pi_q(\Phi_{\Gc}), a_1, \ldots, a_n \to s}}
$$

\fbox{$4 \Rightarrow 1$} 
This part comes directly from the standard undecidability proof for $\ELM$, see~\cite{KanKuzNigSce2018Dale}.
 Consider 
the derivation of the sequent ${!}A_1, \ldots, {!}A_N, a_1, \ldots, a_n \to s$ in $\ELM$. 
The cut rule in $\ELM$ is eliminable~\citep{KanKuzNigSce2018Dale}, 
so we can suppose that
this derivation is cut-free. All formulae in this
derivation are subformulae of the goal sequent, and the only applicable rules
are ${\mconj} L$, ${\mconj} R$, ${\SL} L$, and rules operating ${!}$ in the antecedent:
${!}L$, ${!}C_{1,2}$, ${!}W$. 

Now let us hide all the formulae which include $\SL$ or ${!}$. 
This trivialises all ${!}$-operating rules. Next, let us replace
all $\mconj$'s in the antecedents with commas. 
This, in its turn, trivialises ${\mconj} L$.  
All sequents in our derivation
are now of the form $b_1, \ldots, b_\ell \Rightarrow C$, where $\ell \ge 0$ and $C = c_1 \mconj \ldots \mconj c_r$ ($r \ge 1$) or $C = \One$.
For the sake of uniformity, we also write $C = \One$ as $C = c_1 \mconj \ldots \mconj c_r$ with $r = 0$.
The $\SL L$ rule reduces to 
$$
\infer{b_1, \ldots, b_i, b_{i+1}, \ldots, b_j, b_{j+1}, \ldots, b_s \yd C}
{b_{i+1}, \ldots, b_j \to y_1 \mconj \ldots \mconj y_k & 
b_1, \ldots, b_i, x_1, \ldots, x_m, b_{j+1}, \ldots, b_s \yd C}
$$
where $x_1 \ldots x_m \Rightarrow y_1 \ldots y_k$ is a rewriting rule of $\Gc$.
For each $\varepsilon$-rule $x_1 \ldots x_m \Rightarrow \varepsilon$ in $\Gc$ we get the
rule
$$
\infer{b_1, \ldots, b_i, b_{i+1}, \ldots, b_s \yd C}
{b_1, \ldots, b_i, x_1, \ldots, x_m, b_{i+1}, \ldots, b_s \yd C}
$$
(which is the reduction of ${!}L$ for ${!}(x_1 \cdot \ldots \cdot x_m)$).
Finally, $\cdot R$ transforms into
 $$
\infer{b_1, \ldots, b_i, b_{i+1}, \ldots, b_\ell \yd c_1 \mconj \ldots \mconj c_j \mconj c_{j+1} \mconj \ldots \mconj c_r}
{b_1, \ldots, b_i \yd c_1 \mconj \ldots \mconj c_j & b_{i+1}, \ldots, b_\ell \yd c_{j+1} \mconj \ldots \mconj c_r}
$$
and axioms are of the form $a \yd a$. 

Now straightforward induction on derivation establishes the following fact: if $b_1, \ldots, b_\ell \yd c_1 \mconj \ldots \mconj c_r$ is derivable
in the simplified calculus presented above, then $b_1 \ldots b_\ell$ is derivable from $c_1 \ldots c_r$ in the type-0 grammar $|Gc$. This finishes
our proof.
\qed
\end{proof}

Theorem~\ref{Th:Markov_Gc} and Lemma~\ref{Lm:undecRR} immediately yield the following generic undecidability result (``meta-theorem'').

\begin{theorem}\label{Th:undec_generic}
Let $\Lc$ be $\pi_q$-sound in $\ELM$, admit the $\cdot L$, $\cdot R$, and $\SL L$ rules, and internalise
finite sets of Lambek formulae. Then the derivability problem in $\Lc$ is undecidable.
\end{theorem}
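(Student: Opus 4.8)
The plan is to obtain this theorem as a uniform many-one reduction from the undecidable membership problem of a single fixed grammar, with Lemma~\ref{Lm:undecRR} serving as the bridge between rewriting and derivability. First I would invoke Theorem~\ref{Th:Markov_Gc} to fix a type-0 grammar $\Gc$ whose generated language is algorithmically undecidable; that is, there is no algorithm deciding, for a word $w = a_1 \ldots a_n$ over the terminal alphabet $\Sigma$, whether $s \Rightarrow^*_{\Gc} w$. As in the setup preceding the lemma, all letters of the grammar alphabet are taken to be propositional variables ($\Af \subset \Var$), so that words over $\Sigma$ are genuine sequences of formulae and the sequents written below live in the language of $\Lc$.

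Next I would form the finite set $\Ac_{\Gc} = \{ A_1, \ldots, A_N \}$ of Lambek formulae associated with the rewriting rules of $\Gc$, exactly as defined before Lemma~\ref{Lm:undecRR}. Since by hypothesis $\Lc$ internalises finite sets of Lambek formulae, there exists a meta-formula $\Phi_{\Gc}$ internalising $\Ac_{\Gc}$ in $\Lc$ in the sense of Definition~\ref{Df:intern}. The key observation is that $\Gc$ is now fixed once and for all, hence so are $\Ac_{\Gc}$ and $\Phi_{\Gc}$: the meta-formula $\Phi_{\Gc}$ enters the argument only as a constant prefix, and therefore no effectivity question about \emph{producing} $\Phi_{\Gc}$ from the input arises.

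All three standing hypotheses of Lemma~\ref{Lm:undecRR} are now in force: $\Lc$ is $\pi_q$-sound in $\ELM$, it admits the $\cdot L$, $\cdot R$, and $\SL L$ rules, and $\Phi_{\Gc}$ internalises $\Ac_{\Gc}$. Applying the lemma, and in particular the equivalence of its items~1 and~2, I obtain that for every word $a_1 \ldots a_n$ over $\Sigma$,
$$
s \Rightarrow^*_{\Gc} a_1 \ldots a_n \quad\Longleftrightarrow\quad (\Phi_{\Gc}, a_1, \ldots, a_n \to s \text{ is derivable in } \Lc).
$$
Thus the map $w = a_1 \ldots a_n \mapsto (\Phi_{\Gc}, a_1, \ldots, a_n \to s)$ is a computable (and very simple) many-one reduction of the membership problem of $\Gc$ to the derivability problem of $\Lc$.

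Finally I would conclude by contraposition: were derivability in $\Lc$ decidable, composing its decision procedure with the reduction above would decide whether $s \Rightarrow^*_{\Gc} w$ for an arbitrary $w$ over $\Sigma$, contradicting Theorem~\ref{Th:Markov_Gc}. Hence the derivability problem in $\Lc$ is undecidable. I do not expect a genuine obstacle at this level: the real work has already been absorbed into Lemma~\ref{Lm:undecRR} (whose $4 \Rightarrow 1$ direction carries the semi-Thue analysis through $\ELM$) and into the internalisation machinery of Definition~\ref{Df:intern}. The only points demanding a line of care are verifying that the hypotheses of the lemma are met verbatim and stressing that $\Phi_{\Gc}$ is a fixed constant rather than something recomputed for each input, so that the reduction is both correct and effective.
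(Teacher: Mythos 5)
Your proposal is correct and follows essentially the same route as the paper, which derives the theorem directly by combining Theorem~\ref{Th:Markov_Gc} with the equivalence of items~1 and~2 of Lemma~\ref{Lm:undecRR}. Your additional remarks on the fixedness of $\Phi_{\Gc}$ and the computability of the reduction map are sound and merely make explicit what the paper leaves implicit.
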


Now, in order to prove undecidability, it is sufficient to show that the calculi considered in this paper internalise finite sets of Lambek formulae.
The easiest example is $\ELM$, the multiplicative-additive Lambek calculus with a full-power exponential modality. 
A set $\Ac = \{ A_1, \ldots, A_N \}$ is internalised in $\ELM$ by $\Phi = {!}A_1, \ldots, {!}A_N$ (cf. Proposition~\ref{Pr:dedEL}).
For $\LLsM$, the situation is a bit trickier, since in the absence of the weakening rule
${!}A_1, \ldots, {!}A_N, s \to s$ (item~\ref{It:internS} of Definition~\ref{Df:intern}) is not derivable. 
This issue is handled by extending $\Phi$ with extra formulae  which neutralise ${!}A_i$. Namely, $\Phi = \U \SL {!}A_1, {!}A_1, \ldots
\U \SL {!}A_N, {!} A_N$ internalises $\Ac$ in $\LLsM$~\citep{KanKuzNigSce2018Dale}. 
Actually, the $\U$ constant here can be replaced by $s \SL s$.

For the calculi with brackets, which interact with the contraction rule, we go further along this line. We still have to add formulae like
$\U \SL {!}A_i$, for item~\ref{It:internS} of Definition~\ref{Df:intern}; but now we also need to neutralise the changes which the contraction rule
makes on the bracketing structures. We carry this strategy out in Propositions~\ref{Pr:internA} and~\ref{Pr:internB} below.

We start with systems without stoups: $\LsysAfl$, $\LsysBfl$, and $\LsysBrfl$.

\begin{proposition}\label{Pr:internA}
The meta-formula
$$
\Phi = {!} ((s \SL s) \SL {!} \NMod A_1), {!}\NMod A_1, \ldots,
{!} ((s \SL s) \SL {!} \NMod A_N), {!} \NMod A_N
$$
internalises $\{A_1, \ldots, A_N\}$ in~$\LsysAfl$. 
\end{proposition}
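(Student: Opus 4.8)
The plan is to verify the three conditions of Definition~\ref{Df:intern} directly for the proposed $\Phi$, exploiting the fact that the two halves of each pair, ${!}((s\SL s)\SL{!}\NMod A_i)$ and ${!}\NMod A_i$, play complementary roles: the factor ${!}\NMod A_i$ will absorb a stray copy of $A_i$ via the bracket-aware contraction rule, while the factor ${!}((s\SL s)\SL{!}\NMod A_i)$ neutralises the corresponding ${!}\NMod A_i$ (turning the pair into $s\SL s$, which acts as a near-unit on $s$) when no stray copy is present. Throughout I may freely use that $\LsysAfl$ admits cut (Proposition~\ref{Prop:nostoupA} together with Theorem~\ref{Th:cutelimA}), although the derivations I build are in fact cut-free. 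I also record that $q$ is a fresh variable not occurring in the $A_i$, so that $\pi_q(A_i) = A_i$ and hence $\prod\pi_q(\Phi)$ is the product of the factors ${!}((s\SL s)\SL{!}A_i)$ and ${!}A_i$ for $i = 1, \dots, N$.

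For item~\ref{It:internS} I would prove $\Phi, s \to s$ by induction on $N$. The base case $N = 0$ is the axiom $s \to s$. For the inductive step I peel off the last pair from the right: applying ${!}L$ to strip the outer bang of ${!}((s\SL s)\SL{!}\NMod A_N)$, I then apply $\SL L$ to $(s\SL s)\SL{!}\NMod A_N$ with right argument the adjacent ${!}\NMod A_N$, whose left premise ${!}\NMod A_N \to {!}\NMod A_N$ is an identity axiom; this replaces the pair by $s\SL s$. A second $\SL L$ on $s\SL s$, with left premise $s \to s$, consumes the final $s$ and leaves $\Phi', s \to s$, where $\Phi'$ is $\Phi$ without its last pair; the induction hypothesis finishes the case.

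Item~\ref{It:internLand}, the landing rule, is the crux and the place where the specific shape of $\Phi$ is designed to cooperate with contraction. Given a derivation of $\Phi, \Delta_1, A_i, \Delta_2 \to C$, I extend it downward: first $\NMod L$ places the stray $A_i$ inside a bracket, yielding $\Phi, \Delta_1, [\NMod A_i], \Delta_2 \to C$; then ${!}L$ applied inside the bracket gives $\Phi, \Delta_1, [{!}\NMod A_i], \Delta_2 \to C$. Using the permutation rules ${!}P_1, {!}P_2$ I move the copy of ${!}\NMod A_i$ that already sits in $\Phi$ rightward, through the remaining tree terms, until it is immediately to the left of the bracket $[{!}\NMod A_i]$. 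Now the bracket-aware contraction ${!}C$ applies with $n = 1$, with empty $\Gamma_1$ and empty $\Gamma_2$ and with the single outer copy ${!}\NMod A_i$ matching the single copy inside the bracket; its conclusion deletes the bracket together with its contents, yielding $\Phi\setminus\{{!}\NMod A_i\}, \Delta_1, {!}\NMod A_i, \Delta_2 \to C$. A final permutation returns ${!}\NMod A_i$ to its position in $\Phi$, giving $\Phi, \Delta_1, \Delta_2 \to C$ as required. I expect this step to be the main obstacle, since it requires getting the bracket bookkeeping exactly right: the point is that the only way $\Phi$ can discard an unwanted $A_i$ in a system with \emph{no} weakening for ${!}$ is to promote it to a bracketed ${!}\NMod A_i$ and annihilate it against $\Phi$'s own copy through ${!}C$.

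For item~\ref{It:internBack} I would derive ${!}A_1, \dots, {!}A_N \to \prod\pi_q(\Phi)$ in $\ELM$, where the full exponential makes weakening and contraction available for all $!$-formulae. I duplicate the whole banged antecedent as many times as there are factors and split the product with $\cdot R$, handing each factor a full copy of ${!}A_1, \dots, {!}A_N$. Each factor ${!}A_i$ is then obtained from the matching hypothesis ${!}A_i$ after weakening the rest away. Each factor ${!}((s\SL s)\SL{!}A_i)$ is obtained by ${!}R$ from $(s\SL s)\SL{!}A_i$, which follows by two applications of $\SL R$ from $\dots, {!}A_i, s \to s$; here every banged formula in the antecedent, including the freshly introduced ${!}A_i$, is simply weakened away down to the axiom $s \to s$. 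Assembling these sub-derivations with $\cdot R$ completes item~\ref{It:internBack}, and hence the proof that $\Phi$ internalises $\{A_1, \dots, A_N\}$ in $\LsysAfl$.
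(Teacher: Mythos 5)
Your proposal is correct and follows essentially the same route as the paper's proof: item~1 by unwinding the $\SL L$ chain against the axioms ${!}\NMod A_i \to {!}\NMod A_i$ and $s \to s$, item~2 by the sequence $\NMod L$, ${!}L$, ${!}C$ annihilating the bracketed ${!}\NMod A_i$ against $\Phi$'s own copy, and item~3 in $\ELM$ via ${!}W$ and ${!}R$. The only (harmless) deviations are cosmetic: the paper applies ${!}C$ with the intervening material absorbed into $\Gamma_1$ instead of permuting the outer copy next to the bracket, and in item~3 it routes the empty antecedent to the $\Lambda \to {!}((s\SL s)\SL{!}A_i)$ factors rather than duplicating and weakening full copies.
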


\begin{proof}

1. The sequent $\Phi, s \to s$ is derived in $\LsysAfl$ as follows:

{\small
$$
\infer=[!L]{{!} ((s \SL s) \SL {!} \NMod A_1), {!}\NMod A_1, \ldots,
{!} ((s \SL s) \SL {!} \NMod A_N), {!} \NMod A_N, s \to s}
{\infer=[\SL L]{(s \SL s) \SL {!} \NMod A_1, {!} \NMod A_1, \ldots,
(s \SL s) \SL {!} \NMod A_N, {!} \NMod A_N, s \to s}
{{!} \NMod A_1 \to {!} \NMod A_1 & \ldots & {!}\NMod A_N \to {!}\NMod A_N & 
\infer[\SL L]{s \SL s, \ldots, s \SL s, s \SL s, s \to s}{s \to s & \infer[\SL L]{s \SL s, \ldots, s \SL s, s \to s}{s \to s & \infer{\vdots}{s \to s}}}}}
$$
}

2. The `landing' rule is derived in $\LsysAfl$ as follows:

{\small
$$
\infer[!C\mbox{ applied to ${!}\NMod A_i$}]
{\Phi, \Delta_1, \Delta_2 \to C}
{\infer[{!}L]{\Phi, \Delta_1, [{!}\NMod A_i], \Delta_2 \to C}
{\infer[\NMod L]{\Phi, \Delta_1, [\NMod A_i], \Delta_2 \to C}{\Phi, \Delta_1, A_i, \Delta_2 \to C}}}
$$
}

3. Notice that
$$
\pi_q(\Phi) = {!} ((s \SL s) \SL {!}A_1), {!}A_1, \ldots, {!}((s \SL s) \SL {!}A_N), {!}A_N.
$$
Next, by applying $\cdot R$ to $\Lambda \to {!} ((s \SL s) \SL {!}A_1)$; ${!}A_1 \to {!}A_1$; \ldots
$\Lambda \to {!} ((s \SL s) \SL {!}A_N)$; ${!}A_1 \to {!}A_N$, we get the necessary sequent
${!}B_1, \ldots, {!}B_N \to \prod \pi_q(\Phi)$.

The sequents $\Lambda \to {!} ((s \SL s) \SL {!}A_i)$ are derived in $\ELM$ as follows:

{\small
$$
\infer{\Lambda \to {!}((s \SL s) \SL {!}B_1)}
{\infer{\Lambda \to (s \SL s) \SL {!}A_1}
{\infer{{!}A_1 \to s \SL s}
{\infer{\Lambda \to s \SL s}{s \to s}}}}
$$}
\qed
\end{proof}

\begin{proposition}\label{Pr:internB}
The meta-formula 
$$
\Phi = {!} ((s \SL s) \SL {!} Z_1), {!}Z_1, \ldots, {!}((s \SL s) \SL {!} Z_n), {!}Z_n, 
{!}((s \SL s) \SL \PMod\PMod q), [[ q ]],
$$
where
$$
Z_i = (\NMod ({!}A_i \cdot \PMod\PMod q)) \SL q,
$$
internalises $\{A_1, \ldots, A_n\}$ in~$\LsysBfl$ and~$\LsysBrfl$.
\end{proposition}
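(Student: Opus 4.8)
The plan is to verify, for the displayed $\Phi$, the three conditions of Definition~\ref{Df:intern} in both $\LsysBfl$ and $\LsysBrfl$. Conditions~1 and~3 will be routine adaptations of the reasoning behind Proposition~\ref{Pr:internA}; the real content sits in condition~2 (the landing rule), where the bracket-sensitive contraction ${!}C$ of the ``B''-style calculus does the work.

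For condition~1 ($\Phi, s \to s$) I would strip the outer ${!}$ from each neutraliser ${!}((s \SL s)\SL{!}Z_i)$ and from ${!}((s \SL s)\SL\PMod\PMod q)$ by ${!}L$, then fire $\SL L$ on each resulting $(s \SL s)\SL{!}Z_i$ against the adjacent ${!}Z_i$ (closed by the axiom ${!}Z_i \to {!}Z_i$), and on $(s \SL s)\SL\PMod\PMod q$ against the trailing $[[q]]$, whose right premise $[[q]] \to \PMod\PMod q$ is obtained by two applications of $\PMod R$ over $q \to q$. This collapses the antecedent to $s \SL s, \dots, s \SL s, s$, which reduces to $s \to s$ by a cascade of $\SL L$. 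Only ${!}L$, $\SL L$, $\PMod R$ and identity occur; the right division rules and ${!}C$ are not used and no bracketed region is emptied, so Lambek's restriction is satisfied and the same derivation serves $\LsysBrfl$.

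For condition~3 I would first compute $\pi_q(\Phi)$. Since $q$ is fresh (so $\pi_q$ fixes every $A_i$) and $\pi_q$ collapses both bracket modalities and sends $q$ to $\U$, one gets $\pi_q(Z_i) = ({!}A_i \cdot \U)\SL \U$, $\pi_q(\PMod\PMod q) = \U$ and $\pi_q([[q]]) = \U$; hence $\prod\pi_q(\Phi)$ is a product whose factors are the $n$ neutralisers ${!}((s \SL s)\SL{!}(({!}A_i\cdot\U)\SL\U))$, the $n$ formulas ${!}(({!}A_i\cdot\U)\SL\U)$, the formula ${!}((s \SL s)\SL\U)$, and $\U$. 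I would then derive ${!}A_1, \dots, {!}A_N \to \prod\pi_q(\Phi)$ in $\ELM$ by splitting the product with $\cdot R$, routing the single hypothesis ${!}A_i$ into the factor ${!}(({!}A_i\cdot\U)\SL\U)$ (obtained by ${!}R$, $\SL R$, $\cdot R$ from ${!}A_i \to {!}A_i$ and $\U\to\U$), and deriving each neutraliser factor, ${!}((s \SL s)\SL\U)$ and $\U$ from the empty antecedent (using ${!}R$ with no premises, $\SL R$, and weakening the guard away, which is available since $\ELM$ has full weakening). This partitions the hypotheses exactly, so no contraction is needed.

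The heart of the argument is condition~2, and this is where ${!}C$ is essential. Reading the conclusion $\Phi, \Delta_1, \Delta_2 \to C$ bottom-up, I would permute the ${!}$-formula ${!}Z_i$ next to the trailing double bracket and contract it into this strong island $[[q]]$ by ${!}C$, producing the weak island $[{!}Z_i, q]$ with a fresh copy of ${!}Z_i$ inside; then I would unfold the copy, using ${!}L$ to expose $Z_i$, $\SL L$ to consume the guard $q$ already present inside the island (right premise $q\to q$) leaving $[\NMod({!}A_i\cdot\PMod\PMod q)]$, $\NMod L$ to dissolve the bracket, $\cdot L$ and ${!}L$ to expose $A_i$, and two $\PMod L$ steps to turn the residual $\PMod\PMod q$ back into a regenerated $[[q]]$. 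Thus one application yields a single new $A_i$ beside a reinstated strong island, keeping the rule iterable. The main obstacle I anticipate is precisely the positional bracket bookkeeping: because the contraction is anchored at the unique double bracket, I must check carefully that the extracted $A_i$ can be made to occupy the position between $\Delta_1$ and $\Delta_2$ demanded by the rule (rather than merely sitting adjacent to the island), and that the regenerated $[[q]]$ ends up where the next landing can use it. For $\LsysBrfl$ I must additionally verify the side condition $\Gamma_2 \ne \Lambda$ on ${!}C$, which holds exactly because the guard $q$ keeps the contracted island non-empty --- this being the reason $q$ is used in place of the forbidden unit constant.
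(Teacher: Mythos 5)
Your proposal matches the paper's proof in all essentials: condition~1 and condition~3 are handled exactly as you describe (your computation of $\pi_q(Z_i) = ({!}A_i \cdot \U)\SL\U$ is in fact cleaner than the paper's slightly garbled display of $\pi_q(\Phi)$), and the landing rule is established by the same ${!}C$-into-$[[q]]$, unfold, and regenerate sequence, with Lambek's restriction for $\LsysBrfl$ secured by the guard $q$ precisely as you say. The positional worry you flag is resolved just as you would expect: the paper applies $\cdot L$, then moves ${!}A_i$ into the slot between $\Delta_1$ and $\Delta_2$ by the permutation rule $!P_2$ \emph{before} stripping the bang with ${!}L$, and only then restores $[[q]]$ at the original island position via two $\PMod L$ steps.
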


Notice that in $\LsysBrfl$ we do not need to impose any additional non-emptiness restriction on the `landing' rule,
since Lambek's restriction is automatically satisfied by non-emptiness of $\Phi$. Thus, for our undecidability proof
$\LsysBrfl$ is capable of encoding arbitrary semi-Thue systems, even the ones which include $\varepsilon$-rules.

For categorial grammars, however, Lambek's restriction will make a difference in the expressive power of $\LsysBrfl$ as opposed
to $\LsysBfl$, as explained in the next section.

\begin{proof}

1. The sequent $\Phi, s \to s$ is derived in $\LsysBrfl$ (and therefore in $\LsysBfl$) as follows:
{\small $$
\infer=[!L]{{!} ((s \SL s) \SL {!}Z_1), {!}Z_1, \ldots, {!}((s \SL s) \SL {!}Z_n), {!}Z_n, 
{!}((s \SL s) \SL \PMod\PMod q), [[ q ]], s \to s}
{\infer=[\SL L]{(s \SL s) \SL {!}Z_1, {!}Z_1, \ldots, (s \SL s) \SL {!}Z_n, {!}Z_n, 
(s \SL s) \SL \PMod\PMod q, [[ q ]], s \to s}
{{!}Z_1 \to {!}Z_1 & \ldots & {!}Z_n \to {!}Z_n & \infer[\PMod R]{[[q]] \to \PMod\PMod q}{\infer[\PMod R]{[q] \to \PMod q}{q \to q}} & 
\infer[\SL L]{s \SL s, \ldots, s \SL s, s \SL s, s \to s}{s \to s & \infer[\SL L]{s \SL s, \ldots, s \SL s, s \to s}{s \to s & \infer{\vdots}{s \to s}}}}}
$$}
Notice that all antecedents in this derivation are non-empty, so Lambek's restriction is observed.

2.  Let $\Phi'$ be 
$(s \SL s) \SL {!}Z_1), {!}Z_1, \ldots, {!}((s \SL s) \SL {!}Z_n), {!}Z_n, {!}((s \SL s) \SL \PMod\PMod q)$, that is, $\Phi = \Phi', [[q]]$, 
and recall that ${!}Z_i = {!} ((\NMod ({!}A_i \cdot \PMod\PMod q)) \SL q)$. 
Now the `landing' rule is derived in $\LsysBrfl$ (and therefore also in $\LsysBfl$) as follows.

{\small $$
\infer[!C\mbox{ applied to ${!}Z_i$}]{\Phi', [[q]], \Delta_1, \Delta_2 \to C}
{\infer[!L]{\Phi', [{!} ((\NMod ({!}A_i \cdot \PMod\PMod q)) \SL q), q], \Delta_1, \Delta_2 \to C}
{\infer[\SL L]{\Phi', [ (\NMod ({!}A_i \cdot \PMod\PMod q)) \SL q, q ], \Delta_1, \Delta_2 \to C}
{q \to q & \infer[\NMod L]{\Phi', [ \NMod ({!}A_i \cdot \PMod \PMod q) ], \Delta_1, \Delta_2 \to C}
{\infer[\cdot L]{\Phi', {!}A_i \cdot \PMod\PMod q, \Delta_1, \Delta_2 \to C}
{\infer[!P_2]{\Phi', {!}A_i, \PMod\PMod q, \Delta_1, \Delta_2 \to C}
{\infer[!L]{\Phi', \PMod\PMod q, \Delta_1, {!}A_i, \Delta_2 \to C}
{\infer[\PMod L]{\Phi', \PMod\PMod q, \Delta_1, A_i, \Delta_2 \to C}
{\infer[\PMod L]{\Phi', [\PMod q], \Delta_1, A_i, \Delta_2 \to C}
{\Phi', [[q]], \Delta_1, A_i, \Delta_2 \to C}}}}}}}}}
$$}

Again, notice how Lambek's restriction is maintained by non-emptiness of $\Phi'$ and by the $q$ in the brackets
(when applying $!C$).

3. 
Notice that
\begin{multline*}
\pi_q(\Phi) = {!} ((s \SL s) \SL {!}({!}A_1 \cdot \U)) \SL \U, 
{!}({!}A_1 \cdot \U), \ldots,\\
{!} ((s \SL s) \SL {!}({!}A_n \cdot \U)) \SL \U, 
{!}({!}A_n \cdot \U), {!}((s \SL s) \SL \U), \U.
\end{multline*}
Next, we enjoy the following derivations in $\ELM$:
{\small $$
\infer[\SL R]{\Lambda \to {!} ((s \SL s) \SL {!}({!}A_i \cdot \U)) \SL \U}
{\infer[\U L]{\U \to {!} ((s \SL s) \SL {!}({!}A_i \cdot \U))}
{\infer[{!} R]{\Lambda \to {!} ((s \SL s) \SL {!}({!}A_i \cdot \U))}
{\infer[\SL R]{\Lambda \to (s \SL s) \SL {!}({!}A_i \cdot \U)}
{\infer[{!} W]{{!}({!}A_i \cdot \U) \to s \SL s}
{\infer[\SL R]{\Lambda \to s \SL s}{s \to s}}}}}}
\qquad
\infer[{!} R]{{!}A_i \to {!}({!} A_i \cdot \U)}
{\infer[\cdot R]{{!}A_i \to {!}A_i \cdot \U}
{{!}A_i \to {!}A_i & \Lambda \to \U}}
\qquad
\infer[{!} R]{\Lambda \to {!}((s \SL s) \SL \U)}
{\infer[\SL R]{\Lambda \to (s \SL s) \SL \U}
{\infer[\U L]{\U \to s \SL s}
{\infer[\SL R]{\Lambda \to s \SL s}{s \to s}}}}
$$}
Also recall that $\Lambda \vdash \U$ is an axiom of $\ELM$.
Now several applications of $\cdot L$ yield  ${!}A_1, \ldots, {!}A_n \to \prod \pi_q(\Phi)$.
\qed
\end{proof}

Now by Theorem~\ref{Th:undec_generic} we get undecidability for the systems without stoups.

\begin{theorem}
The derivability problems for $\LsysAfl$, $\LsysBfl$, and $\LsysBrfl$ are undecidable.
\end{theorem}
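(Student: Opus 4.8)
The plan is to treat this theorem as an immediate corollary of the generic undecidability meta-theorem, Theorem~\ref{Th:undec_generic}, combined with the internalisation results established just above. Recall that Theorem~\ref{Th:undec_generic} asserts that a calculus $\Lc$ has undecidable derivability provided it (i) is $\pi_q$-sound in $\ELM$, (ii) admits the rules $\cdot L$, $\cdot R$, and $\SL L$, and (iii) internalises finite sets of Lambek formulae. Hence it suffices to check that each of $\LsysAfl$, $\LsysBfl$, and $\LsysBrfl$ satisfies these three hypotheses; no fresh induction on derivations is needed, since all of the genuinely combinatorial work has already been absorbed into the meta-theorem and into the internalisation propositions.

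First I would dispatch hypotheses (i) and (ii), which are routine. Condition (ii) holds by inspection: $\cdot L$, $\cdot R$, and $\SL L$ are literally among the defining rules of each of the three stoup-free systems. For condition (i), $\pi_q$-soundness in $\ELM$, I would invoke the $\pi_q$-soundness result for these systems in its additive-free form, as noted at the end of Section~\ref{S:nobrackets}: applying $\pi_q$ to each rule of a Morrill-style system yields a rule of $\ELM$, while the axiom $q \to q$ maps to the derivable $\U \to \U$, so derivability is preserved under the projection.

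The substance lies in hypothesis (iii), and this is exactly what Propositions~\ref{Pr:internA} and~\ref{Pr:internB} provide. For $\LsysAfl$ I would cite Proposition~\ref{Pr:internA}, whose meta-formula neutralises each ${!}A_i$ by the companion factor ${!}((s\SL s)\SL {!}\NMod A_i)$ and realises the landing rule of Definition~\ref{Df:intern} through the bracket-producing contraction rule. For $\LsysBfl$ and $\LsysBrfl$ I would cite Proposition~\ref{Pr:internB}, where the more elaborate formula built from $Z_i = (\NMod({!}A_i \cdot \PMod\PMod q))\SL q$ is engineered so that the newer contraction rule, which consumes a double bracket, can still be driven. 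Having confirmed all three hypotheses for all three calculi, the conclusion follows by a single appeal to Theorem~\ref{Th:undec_generic}.

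The point I would flag as least mechanical is confirming that the Lambek-restricted system $\LsysBrfl$ really fits the meta-theorem, since the landing rule of Definition~\ref{Df:intern} carries no non-emptiness side condition, whereas $\LsysBrfl$ forbids empty antecedents and empty bracketed domains. The key observation, already secured inside Proposition~\ref{Pr:internB}, is that the internalising meta-formula $\Phi$ is itself non-empty and keeps a $q$ inside the relevant brackets, so that every antecedent and every island arising in the derivation of $\Phi, s \to s$ and of the landing rule stays non-empty and the restriction is met automatically. Consequently $\LsysBrfl$ can encode arbitrary semi-Thue systems, including those with $\varepsilon$-rules, so the reduction from the undecidable type-0 grammar of Theorem~\ref{Th:Markov_Gc} goes through for it exactly as for the unrestricted systems.
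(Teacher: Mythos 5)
Your proposal is correct and follows exactly the route the paper takes: the theorem is stated there as an immediate consequence of Theorem~\ref{Th:undec_generic}, with hypothesis (iii) supplied by Propositions~\ref{Pr:internA} and~\ref{Pr:internB}, hypotheses (i) and (ii) by the $\pi_q$-soundness proposition of Section~\ref{S:nobrackets} and inspection of the rules. Your closing remark about $\LsysBrfl$ — that Lambek's restriction is met automatically because $\Phi$ is non-empty and keeps a $q$ inside the brackets, so even $\varepsilon$-rules can be encoded — is precisely the observation the paper makes after Proposition~\ref{Pr:internB}.
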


For systems with stoups, the situation is as follows. 
First, concerning derivability of sequents with empty stoups, $\LsysAa$, $\LsysBa$, and $\LsysBar$  just equivalent to $\LsysAfl$,
$\LsysBfl$, and $\LsysBrfl$ respectively (Propositions~\ref{Prop:nostoupB} and~\ref{Prop:nostoupA}). 
This gives undecidability for these systems.

\begin{theorem}
The derivability problems for $\LsysAa$, $\LsysBa$, and $\LsysBar$ are undecidable.
\end{theorem}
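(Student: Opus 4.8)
The plan is to obtain undecidability of the stoup calculi as a corollary of the undecidability of their stoup-free counterparts $\LsysAfl$, $\LsysBfl$, and $\LsysBrfl$, which was established in the preceding theorem, rather than re-verifying the hypotheses of the generic meta-theorem (Theorem~\ref{Th:undec_generic}) directly for systems with stoups. The direct route would be awkward: the $\pi_q$-projection was defined only for the stoup-free syntax, so both $\pi_q$-soundness and the internalisation property would have to be re-established in the presence of stoups. Instead I would exploit the fact that the undecidable instances produced by Lemma~\ref{Lm:undecRR} together with Propositions~\ref{Pr:internA} and~\ref{Pr:internB} are sequents with \emph{empty} stoups: the internalising meta-formula $\Phi_{\Gc}$ carries no stoup, and neither does the goal $\Phi_{\Gc}, a_1, \ldots, a_n \to s$.

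First I would invoke Propositions~\ref{Prop:nostoupB} and~\ref{Prop:nostoupA}, read off for the additive-free fragments. These say that a stoup-free sequent $\Xi \to C$ is derivable in $\LsysBa$ (respectively $\LsysBar$, $\LsysAa$) if and only if it is derivable in $\LsysBfl$ (respectively $\LsysBrfl$, $\LsysAfl$). Consequently the set of stoup-free sequents derivable in each stoup system coincides, as a set of syntactic objects, with the set derivable in the corresponding stoup-free system. Since the latter set is algorithmically undecidable by the preceding theorem---indeed the map sending a type-0 grammar $\Gc$ and a word $a_1 \ldots a_n$ to the sequent $\Phi_{\Gc}, a_1, \ldots, a_n \to s$ is a computable many-one reduction from an undecidable language---the former is undecidable as well. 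Restricting attention to empty-stoup sequents is harmless: membership of such a sequent in the derivability problem of the stoup system is, by the equivalence, exactly its membership in the derivability problem of the stoup-free system, so undecidability transfers verbatim.

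The one point that genuinely needs checking is that Propositions~\ref{Prop:nostoupB} and~\ref{Prop:nostoupA} do hold for the additive-free systems $\LsysAa$, $\LsysBa$, $\LsysBar$, and not merely for their $\mathbf{MALC}$ versions as literally stated. I expect this to be immediate from inspection of their proofs: every simulation step between the stoup and stoup-free formulations manipulates only $\BS$, $\SL$, $\mconj$, the bracket modalities, and ${!}$, together with the stoup bookkeeping, and never introduces or removes an additive connective, so each translation restricts to the additive-free fragment unchanged; likewise cut elimination for these fragments is the evident restriction of Theorems~\ref{Th:cutelim} and~\ref{Th:cutelimA}. I do not anticipate a real obstacle here---the whole argument is a transfer of an already-proven result along derivability-preserving equivalences, and the only care required is to confirm that the encoding stays within the stoup-free syntax, which it does by construction.
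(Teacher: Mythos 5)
Your proposal is correct and follows essentially the same route as the paper: the paper likewise derives this theorem directly from Propositions~\ref{Prop:nostoupB} and~\ref{Prop:nostoupA}, observing that on empty-stoup sequents the stoup calculi coincide with their stoup-free counterparts, whose derivability problems were just shown undecidable. Your extra remark about checking that these equivalences restrict to the additive-free fragments is a reasonable (and easily discharged) point of care that the paper passes over silently.
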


Proving undecidability for original Morrill's systems, $\LsysA$ and $\LsysB$ (which we altered in order to gain cut elimination, see 
Section~\ref{S:issues}), requires some extra work. For $\LsysA$ it is easy. 
The right rule for ${!}$ is never used in the proof of item~\ref{It:internS} in Proposition~\ref{Pr:internA}.
Also, in the landing rule (item~\ref{It:internLand}) there are no other ${!}$-formulae moved into the newly created
bracketed island. Thus, from the point of view of Proposition~\ref{Pr:internA} $\LsysA$ is indistinguishable from $\LsysAa$, and
this yields the necessary undecidability result.

\begin{theorem}
The derivability problem for $\LsysA$ is undecidable.
\end{theorem}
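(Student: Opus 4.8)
The plan is to obtain the theorem as a direct instance of the generic undecidability meta-theorem, Theorem~\ref{Th:undec_generic}, applied to $\Lc = \LsysA$. That meta-theorem asks for three properties of $\Lc$: that it be $\pi_q$-sound in $\ELM$, that it admit the rules $\cdot L$, $\cdot R$ and $\SL L$, and that it internalise finite sets of Lambek formulae in the sense of Definition~\ref{Df:intern}. The first two hold for essentially all the systems of the paper, so the genuine content is verifying internalisation for the \emph{original}, cut-free Morrill system $\LsysA$, whose $!R$ and $!C$ rules differ from the $!R'$ and $!C'$ of the modified system $\LsysAa$ for which internalisation is already available.

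First I would dispose of the two easy hypotheses. The rules $\cdot L$, $\cdot R$ and $\SL L$ are primitive rules of $\LsysA$, untouched by the choice of $!$-rules, so they are admitted. For $\pi_q$-soundness I would run the same rule-by-rule check as for the stoup-free calculi, extending $\pi_q$ to stoups by the flattening used in Propositions~\ref{Prop:nostoupB} and~\ref{Prop:nostoupA} (project each stoup formula, prefix it with ${!}$ and move it into the linear part). The only rules to reconsider beyond the stoup-free case are the original $!R$ and $!C$: under the projection the former lands on the $!R$ rule of $\ELM$ (whose premise is permitted to have an empty antecedent), and the latter becomes a contraction, together with a permutation, on the flattened island stoup. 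Both are legitimate steps of $\ELM$, so $\LsysA$ is $\pi_q$-sound in $\ELM$.

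The key step is internalisation, and here I would simply reuse the meta-formula $\Phi$ of Proposition~\ref{Pr:internA} and recheck its three defining conditions against the original rules of $\LsysA$. Condition~\ref{It:internBack} is a statement about $\ELM$ alone and so is unchanged. For condition~\ref{It:internS}, the witnessing derivation of $\Phi, s \to s$ uses only $!L$, $!P$ and $\SL L$ and never a right rule for $!$; hence the divergence between $!R$ and $!R'$ is irrelevant, and the same (cut-free) derivation is an $\LsysA$-proof. For condition~\ref{It:internLand}, the landing rule is realised by a single contraction that copies ${!}\NMod A_i$ into a freshly opened island carrying \emph{no} auxiliary stoup ($\Gamma_2 = \Lambda$, $\zeta' = \varnothing$), followed by $!L$ and $\NMod L$; this is exactly an instance of the original $!C$ (equivalently, of $!C'$ with empty extra stoup), hence available in $\LsysA$ just as in $\LsysAa$. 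Therefore $\Phi$ internalises $\{A_1,\dots,A_N\}$ already in $\LsysA$.

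Having verified all three hypotheses, undecidability of $\LsysA$ follows immediately from Theorem~\ref{Th:undec_generic}. I expect the main obstacle to be conceptual rather than computational: because cut is \emph{not} admissible in $\LsysA$, one may not route the internalisation derivations through cut, nor through the equivalence with the stoup-free calculus that was available for $\LsysAa$ (Proposition~\ref{Prop:nostoupA}). The thing that must be checked with care is precisely that the derivations witnessing conditions~\ref{It:internS} and~\ref{It:internLand} survive in the original cut-free system, and, as noted, they do exactly because they avoid the right $!$-rule entirely and only ever contract into an island with no auxiliary stoup—so from the standpoint of Proposition~\ref{Pr:internA} the systems $\LsysA$ and $\LsysAa$ are indistinguishable.
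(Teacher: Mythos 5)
Your proposal is correct and takes essentially the same route as the paper: the authors likewise observe that the derivations witnessing items~\ref{It:internS} and~\ref{It:internLand} of Proposition~\ref{Pr:internA} never use the right $!$-rule and only contract into an island with no auxiliary stoup, so that from the standpoint of internalisation $\LsysA$ is indistinguishable from $\LsysAa$, and undecidability follows from the generic Theorem~\ref{Th:undec_generic}. Your explicit check of $\pi_q$-soundness for the stoup-based original system is a detail the paper leaves implicit, but it is the right thing to verify and does not change the argument.
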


The case of $\LsysB$ is trickier. The issue is that the ${!}L$ rule in this calculus is not invertible, {\em i.e.,} 
derivability of $\Xi(\zeta; \Delta_1, {!}A, \Delta_2) \to C$ is
not always equivalent to derivability of $\Xi(\zeta,A; \Delta_1, \Delta_2) \to C$. In particular,
${!}p \to {!}p$ is derivable, while $p; \Lambda \to {!}p$ is not. For our construction this issue
is crucial. Namely, when proving the `landing' rule (item~\ref{It:internLand} of Definition~\ref{Df:intern}), in
$\LsysB$ we would have to move the ${!}$-formulae from $\Phi$ to the stoup. Otherwise, we could not operate the contraction
rule. This, however, would cause problems with item~\ref{It:internS}: the left premises of the derivation
(see proof of Proposition~\ref{Pr:internB}) become $Z_i; \Lambda \to {!}Z_i$ (instead of ${!}Z_i \to {!}Z_i$), and in general
are not derivable. 

Fortunately, this issue is easily resolved by adding an extra ${!}$ on $Z_i$~\citep{KanKuzSceFG19}. Indeed,
${!}Z_i; \Lambda \to {!}Z_i$ {\em is} derivable from ${!}Z_i \to {!}Z_i$ by application of ${!}P$. This yields the
following internalisation property, where we essentially use the stoup for ${!}$-formulae in $\Phi$ (the rightmost
$[[q]]$ is kept outside the stoup).

\begin{proposition}\label{Pr:internBx}
The meta-formula 
$$
\Phi = (s \SL s) \SL {!} Z_1, {!}Z_1, \ldots, (s \SL s) \SL {!} Z_N, {!}Z_N, 
(s \SL s) \SL \PMod\PMod q; [[ q ]],
$$
where
$$
Z_i = (\NMod ({!}A_i \cdot \PMod\PMod q)) \SL q,
$$
satisfies items~\ref{It:internS} and~\ref{It:internLand} of Definition~\ref{Df:intern} for  internalisation of 
$\{A_1, \ldots, A_N\}$ in~$\LsysB$. 
\end{proposition}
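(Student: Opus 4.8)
The plan is to mirror the proof of Proposition~\ref{Pr:internB} step by step, but to keep the ${!}$-material of $\Phi$ inside its stoup rather than in the linear part, and to compensate for the non-invertibility of ${!}L$ in $\LsysB$ by the extra ${!}$ carried by each ${!}Z_i$. Throughout I write $\zeta_0$ for the stoup of $\Phi$, so that $\Phi = \zeta_0; [[q]]$ with ${!}Z_i, (s \SL s) \SL {!}Z_i \in \zeta_0$ and $(s \SL s) \SL \PMod\PMod q \in \zeta_0$; accordingly the meta-formula $\Phi, \Delta_1, \Delta_2$ denotes $\zeta_0; [[q]], \Delta_1, \Delta_2$.

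For item~\ref{It:internS} I would first apply ${!}P$ repeatedly to move every formula of $\zeta_0$ out of the stoup into the top-level linear zone, arranging them in the order $(s \SL s) \SL {!}Z_1, {!}Z_1, \ldots, (s \SL s) \SL {!}Z_N, {!}Z_N, (s \SL s) \SL \PMod\PMod q$ in front of $[[q]], s$. This produces exactly the sequent reached after the ${!}L$ steps in the proof of Proposition~\ref{Pr:internB}, so I can close the derivation by the same $\SL L$-chain: each $(s \SL s) \SL {!}Z_i$ is resolved against the axiom ${!}Z_i \to {!}Z_i$, the formula $(s \SL s) \SL \PMod\PMod q$ against $[[q]] \to \PMod\PMod q$ (two $\PMod R$), and the remaining $s \SL s, \ldots, s \SL s, s \to s$ by the standard cascade of $\SL L$ with left premises $s \to s$. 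The role of the extra ${!}$ is precisely here: because $\zeta_0$ stores ${!}Z_i$ rather than $Z_i$, permuting it out yields ${!}Z_i$ in the linear zone and the left premise is the axiom ${!}Z_i \to {!}Z_i$; had the stoup held the bare $Z_i$, the corresponding premise would be the underivable $Z_i \to {!}Z_i$.

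For item~\ref{It:internLand} I would derive the landing rule by contraction on ${!}Z_i$. Starting from the target $\zeta_0; [[q]], \Delta_1, \Delta_2 \to C$, I apply ${!}C$ (which in $\LsysB$ demands the contracted formula in the stoup, and ${!}Z_i \in \zeta_0$) to the island $[[q]]$, obtaining $\zeta_0; [{!}Z_i; q], \Delta_1, \Delta_2 \to C$. Inside the single bracket I then expose the division $Z_i$ in the linear zone by the shuttle ${!}P$, then ${!}L$, then ${!}P$ — necessary because ${!}C$ deposits ${!}Z_i$ into the inner stoup whereas ${!}L$ can strip a ${!}$ only off a linear occurrence. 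With $Z_i = (\NMod({!}A_i \mconj \PMod\PMod q)) \SL q$ now linear, an $\SL L$ against $q \to q$, an $\NMod L$, and a $\cdot L$ uncover ${!}A_i, \PMod\PMod q$ at the front; I relocate $A_i$ to the slot between $\Delta_1$ and $\Delta_2$ by an ${!}L$ that moves it into the top stoup (stripping the ${!}$) followed by an ${!}P$ (the stoup analogue of the ${!}P_2$ used in Proposition~\ref{Pr:internB}), and finally two $\PMod L$ steps turn $\PMod\PMod q$ back into $[[q]]$. The resulting sequent is the premise $\zeta_0; [[q]], \Delta_1, A_i, \Delta_2 \to C$, so attaching its given derivation on top yields an admissible derivation of the conclusion.

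The hard part will be reconciling the two items under a single $\Phi$: contraction in $\LsysB$ can only fire on stoup formulae, which forces the ${!}$-material into the stoup, yet the non-invertibility of ${!}L$ makes a bare $Z_i$ in the stoup fatal for item~\ref{It:internS}. Carrying the additional ${!}$ (so the stoup holds ${!}Z_i$) is exactly what lets the $\SL L$ branches of item~\ref{It:internS} close on axioms while still permitting the contraction-and-shuttle argument of item~\ref{It:internLand}; the delicate bookkeeping is checking that the ${!}P/{!}L/{!}P$ relocation of $Z_i$ out of the inner stoup is legal at each step and that the regeneration of $[[q]]$ through $\PMod\PMod q$ leaves the island available for later contractions. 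Note that only items~\ref{It:internS} and~\ref{It:internLand} are claimed here; item~\ref{It:internBack} is deliberately omitted, since the $\pi_q$-route for $\LsysB$ is treated separately.
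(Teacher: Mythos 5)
Your proposal is correct and follows essentially the same route as the paper's own proof: item~\ref{It:internS} via ${!}P$-unloading of the stoup followed by the $\SL L$ cascade closing on the axioms ${!}Z_i \to {!}Z_i$, and item~\ref{It:internLand} via ${!}C$ on ${!}Z_i$, the ${!}P/{!}L/{!}P$ shuttle to expose $Z_i$ linearly in the island, then $\SL L$, $\NMod L$, $\cdot L$, the ${!}L$/${!}P$ relocation of $A_i$, and two $\PMod L$ steps regenerating $[[q]]$. Your explanation of why the extra ${!}$ on $Z_i$ is needed also matches the paper's motivation for this modification.
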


\begin{proof}
1. The sequent $\Phi,s \to s$ is derived in $\LsysB$ as follows:

{\small
$$
\infer=[!P]{(s \SL s) \SL {!} Z_1, {!}Z_1, \ldots, (s \SL s) \SL {!} Z_N, {!}Z_N, 
(s \SL s) \SL \PMod\PMod q; [[ q ]], s \to s}
{\infer=[\SL L]{(s \SL s) \SL {!} Z_1, {!}Z_1, \ldots, (s \SL s) \SL {!} Z_N, {!}Z_N, 
(s \SL s) \SL \PMod\PMod q, [[ q ]], s \to s}
{{!}Z_1 \to {!}Z_1 & \ldots & {!}Z_N \to {!}Z_N & [[q]] \to \PMod\PMod q & 
s \SL s, \ldots, s \SL s, s \SL s, s \to s}}
$$
}

2. Let $\zeta$ be the stoup of $\Phi$, that is, $\Phi = \zeta; [[q]]$. Then the ``landing'' rule is established as follows:

{\small
$$
\infer[{!}C\mbox{ applied to ${!}Z_i$ in $\zeta$}]
{\zeta; [[q]], \Delta_1, \Delta_2 \to C}
{\infer[{!}P]{\zeta; [{!}Z_i; q], \Delta_1, \Delta_2 \to C}
{\infer[{!}L]{\zeta; [{!}Z_i, q], \Delta_1, \Delta_2 \to C}
{\infer[{!}P,\ Z_i = (\NMod ({!}A_i \cdot \PMod\PMod q)) \SL q]{\zeta; [Z_i; q], \Delta_1, \Delta_2 \to C}
{\infer[\SL L]{\zeta; [(\NMod ({!}A_i \cdot \PMod\PMod q)) \SL q, q], \Delta_1, \Delta_2 \to C}
{q \to q & \infer[\NMod L]{\zeta; [ \NMod ({!}A_i \cdot \PMod\PMod q) ], \Delta_1, \Delta_2 \to C}
{\infer[\cdot L]{\zeta; {!}A_i \cdot \PMod\PMod q, \Delta_1, \Delta_2 \to C}
{\infer[{!}L]{\zeta; {!}A_i, \PMod\PMod q, \Delta_1, \Delta_2 \to C}
{\infer[{!}P]{\zeta, A_i; \PMod\PMod q, \Delta_1, \Delta_2 \to C}
{\infer=[\PMod L]{\zeta; \PMod\PMod q, \Delta_1, A_i, \Delta_2 \to C}
{\zeta; [[ q ]], \Delta_1, A_i, \Delta_2 \to C}}}}}}}}}}
$$
}
\qed
\end{proof}

Now we can finalise our undecidability proof for $\LsysB$.

\begin{lemma}\label{Lm:keyBx}
Let $\Ac_{\Gc} = \{ A_1, \ldots, A_N \}$ and let $Z_i = (\NMod ({!}A_i \cdot \PMod\PMod q)) \SL q$ ($i = 1, \ldots, N$). Then the
sequent 
$${!}((s \SL s) \SL {!} Z_1), {!}{!}Z_1, \ldots, {!}((s \SL s) \SL {!} Z_n), {!}{!}Z_n, 
{!}((s \SL s) \SL \PMod\PMod q), [[ q ]], a_1, \ldots, a_n \to s
$$
is derivable in $\LsysB$ if and only if $s \Rightarrow^*_{\Gc} a_1 \ldots a_n$.
\end{lemma}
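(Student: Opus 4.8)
The plan is to prove the two implications of the equivalence separately: the forward direction (grammar to calculus) reuses the internalisation data of Proposition~\ref{Pr:internBx}, while the backward direction (calculus to grammar) is routed through the bracket-free calculus $\ELM$ by way of Proposition~\ref{Prop:nostoupBx} and $\pi_q$-soundness.

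For the forward direction I would first observe that the sequent under consideration is exactly the stoup-free presentation of $\Phi, a_1, \ldots, a_n \to s$, where $\Phi = \zeta; [[q]]$ is the meta-formula of Proposition~\ref{Pr:internBx}. Applying ${!}L$ once to each of the leading ${!}$-prefixed tree terms moves $(s \SL s) \SL {!}Z_i$, then ${!}Z_i$ (arising from ${!}{!}Z_i$), and $(s \SL s) \SL \PMod\PMod q$ into the stoup, leaving precisely $\zeta; [[q]], a_1, \ldots, a_n \to s$. The doubled ${!}$ on $Z_i$ is exactly what guarantees that ${!}Z_i$, rather than $Z_i$, lands in the stoup, which is what Proposition~\ref{Pr:internBx} requires. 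Since $\Phi$ satisfies items~\ref{It:internS} and~\ref{It:internLand} of Definition~\ref{Df:intern} in $\LsysB$ (Proposition~\ref{Pr:internBx}), and since the ``$1 \Rightarrow 2$'' implication in the proof of Lemma~\ref{Lm:undecRR} uses only these two items together with $\cdot L$, $\cdot R$, and $\SL L$, the same induction on the length of the rewriting $s \Rightarrow^*_{\Gc} a_1 \ldots a_n$ produces a derivation of $\Phi, a_1, \ldots, a_n \to s$ in $\LsysB$. Prefixing the ${!}L$ applications below it yields the desired sequent.

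For the backward direction I would argue as follows. The given sequent is stoup-free, so by Proposition~\ref{Prop:nostoupBx} its derivability in $\LsysB$ entails derivability in $\LsysBfl$, and hence, by $\pi_q$-soundness of $\LsysBfl$ in $\ELM$, its $\pi_q$-projection is derivable in $\ELM$. Writing $\Psi$ for the sequence of ${!}$-prefixed formulas (everything to the left of $[[q]]$), one computes $\pi_q([[q]]) = \U$, so after one $\U L$ and several $\cdot L$ steps we obtain $\prod \pi_q(\Psi), a_1, \ldots, a_n \to s$ in $\ELM$. It then remains to establish ${!}A_1, \ldots, {!}A_N \to \prod \pi_q(\Psi)$ in $\ELM$; this is the analogue of the third part of the proof of Proposition~\ref{Pr:internB} (establishing item~\ref{It:internBack}), the only new wrinkle being the extra ${!}$ in ${!}{!}Z_i$, which is absorbed by one additional ${!}R$ step, while the auxiliary formulas ${!}((s \SL s) \SL \cdots)$ are derived from $\Lambda$ using ${!}W$ and $\Lambda \to s \SL s$ exactly as there. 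A single cut then gives ${!}A_1, \ldots, {!}A_N, a_1, \ldots, a_n \to s$ in $\ELM$, and the ``$4 \Rightarrow 1$'' analysis of Lemma~\ref{Lm:undecRR} (inspection of a cut-free $\ELM$ derivation, reading $\SL L$ as rewriting steps and ${!}L$ as $\varepsilon$-rules) delivers $s \Rightarrow^*_{\Gc} a_1 \ldots a_n$.

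I expect the main obstacle to be the bookkeeping in the forward direction around the doubled subexponential and the stoup: one must check that ${!}L$ deposits ${!}Z_i$ (and not $Z_i$) into the stoup, so that the identity axiom ${!}Z_i \to {!}Z_i$ used in the proof of item~\ref{It:internS} is available, since $Z_i; \Lambda \to {!}Z_i$ is \emph{not} derivable in $\LsysB$. This is precisely the non-invertibility of ${!}L$ in $\LsysB$ that motivated the whole construction, and keeping the stoup content exactly aligned with Proposition~\ref{Pr:internBx} is the delicate point; by contrast, the backward direction becomes routine once the argument is pushed into the well-behaved calculus $\ELM$.
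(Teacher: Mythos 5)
Your proposal is correct and follows essentially the same route as the paper: the forward direction invokes Proposition~\ref{Pr:internBx} and the $1 \Rightarrow 2$ step of Lemma~\ref{Lm:undecRR} and then applies ${!}L$ to move the stoup contents back into the sequence, and the backward direction passes through $\LsysBfl$ (Proposition~\ref{Prop:nostoupBx}) and $\ELM$ to the $4 \Rightarrow 1$ analysis of Lemma~\ref{Lm:undecRR}. The only (immaterial) difference is that the paper removes the doubled ${!}$ on $Z_i$ by a cut with ${!}Z_i \to {!}{!}Z_i$ already in $\LsysBfl$, so that Proposition~\ref{Pr:internB} and Lemma~\ref{Lm:undecRR} apply verbatim, whereas you carry ${!}{!}Z_i$ into $\ELM$ and absorb it there with one extra ${!}R$.
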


\begin{proof}
For the ``if'' direction, we use Proposition~\ref{Pr:internBx} and proceed exactly as in the $1 \Rightarrow 2$ implication from the proof
of Lemma~\ref{Lm:undecRR}. This gives derivability of
$$
(s \SL s) \SL {!} Z_1, {!}Z_1, \ldots, (s \SL s) \SL {!} Z_n, {!}Z_n, 
(s \SL s) \SL \PMod\PMod q; [[ q ]], a_1, \ldots, a_n \to s,
$$
which yields the necessary sequent
$${!}((s \SL s) \SL {!} Z_1), {!}{!}Z_1, \ldots, {!}((s \SL s) \SL {!} Z_n), {!}{!}Z_n, 
{!}((s \SL s) \SL \PMod\PMod q), [[ q ]], a_1, \ldots, a_n \to s
$$
by several applications of ${!}L$.

For the ``only if'' direction, since the sequent in question is derivable in $\LsysB$, it is also derivable in 
$\LsysBfl$ (Proposition~\ref{Prop:nostoupBx}). By cut with ${!}Z_i \to {!}{!}Z_i$ (which is derivable in $\sysBfl$) we get derivability
of
$${!}((s \SL s) \SL {!} Z_1), {!}Z_1, \ldots, {!}((s \SL s) \SL {!} Z_n), {!}Z_n, 
{!}((s \SL s) \SL \PMod\PMod q), [[ q ]], a_1, \ldots, a_n \to s.
$$
Recall that cut is admissible in $\LsysBfl$ by Proposition~\ref{Prop:nostoupB}. Now we use Proposition~\ref{Pr:internB} and
the $4 \Rightarrow 1$ implication of Lemma~\ref{Lm:undecRR} and conclude that $s \Rightarrow^*_\Gc a_1 \ldots a_n$.
\qed
\end{proof}

\begin{theorem}
The derivability problem in $\LsysB$ is undecidable.
\end{theorem}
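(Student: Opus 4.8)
The plan is to obtain undecidability of $\LsysB$ as an immediate corollary of Lemma~\ref{Lm:keyBx} by a computable many-one reduction from the word problem of a single fixed undecidable type-0 grammar. First I would fix, once and for all, the grammar $\Gc$ supplied by Theorem~\ref{Th:Markov_Gc}, whose membership problem ``$s \Rightarrow^*_\Gc w$'' is algorithmically undecidable. From the rewriting rules of $\Gc$ I form the finite set $\Ac_\Gc = \{A_1, \ldots, A_N\}$ together with the auxiliary formulae $Z_i = (\NMod ({!}A_i \cdot \PMod\PMod q)) \SL q$ exactly as in Lemma~\ref{Lm:keyBx}. The crucial observation is that all of this data depends only on $\Gc$ and not on the input word, so the entire prefix
$${!}((s \SL s) \SL {!} Z_1), {!}{!}Z_1, \ldots, {!}((s \SL s) \SL {!} Z_N), {!}{!}Z_N, {!}((s \SL s) \SL \PMod\PMod q), [[ q ]]$$
is a fixed meta-formula that I compute a single time.

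The reduction is then the map sending a word $w = a_1 \ldots a_n$ over the terminal alphabet to the sequent obtained by appending $a_1, \ldots, a_n \to s$ to this fixed prefix. This map is plainly algorithmic, since it merely concatenates the precomputed prefix with the letters of $w$. Lemma~\ref{Lm:keyBx} asserts precisely that the resulting sequent is derivable in $\LsysB$ if and only if $s \Rightarrow^*_\Gc a_1 \ldots a_n$, i.e. if and only if $w$ lies in the language generated by $\Gc$. Consequently, any decision procedure for derivability in $\LsysB$ would decide membership in $L(\Gc)$, contradicting Theorem~\ref{Th:Markov_Gc}; this yields undecidability of $\LsysB$.

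The genuine difficulty does not reside in this theorem but has already been absorbed into Lemma~\ref{Lm:keyBx} and Proposition~\ref{Pr:internBx}, and I would flag exactly where it lies. The obstacle peculiar to $\LsysB$ (as opposed to $\LsysBa$ or $\LsysA$) is that its ${!}L$ rule is not invertible: $p; \Lambda \to {!}p$ fails even though ${!}p \to {!}p$ holds. Because the `landing' rule of Definition~\ref{Df:intern} must push the $!$-formulae of $\Phi$ into the stoup in order to fire the contraction rule ${!}C$, using the bare formulae $Z_i$ would break item~\ref{It:internS}, since the left premises $Z_i; \Lambda \to {!}Z_i$ are not derivable. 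The fix, guarding each $Z_i$ with an extra ${!}$ so that ${!}Z_i; \Lambda \to {!}Z_i$ follows from ${!}Z_i \to {!}Z_i$ by ${!}P$, is exactly what Proposition~\ref{Pr:internBx} exploits; the two directions of Lemma~\ref{Lm:keyBx} then route through $\LsysBfl$ via cut admissibility (Propositions~\ref{Prop:nostoupBx} and~\ref{Prop:nostoupB}) and the $4 \Rightarrow 1$ step of Lemma~\ref{Lm:undecRR}, adjusting the single versus double occurrences of ${!}$ by cut with the derivable ${!}Z_i \to {!}{!}Z_i$. Given all of this machinery, the present theorem is a one-line consequence.
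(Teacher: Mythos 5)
Your proposal is correct and coincides with the paper's own argument, which derives the theorem immediately from Lemma~\ref{Lm:keyBx} and Theorem~\ref{Th:Markov_Gc} via exactly the reduction you describe. Your additional remarks correctly locate the real work in Proposition~\ref{Pr:internBx} and Lemma~\ref{Lm:keyBx}, matching the paper's treatment of the non-invertibility of ${!}L$ in $\LsysB$.
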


\begin{proof}
Immediately by Lemma~\ref{Lm:keyBx} and Theorem~\ref{Th:Markov_Gc}.\qed
\end{proof}

\section{Generative Power of Categorial Grammars}\label{S:grammar}

Besides  undecidability results,  
Lemma~\ref{Lm:undecRR} has another corollary:
{\em categorial grammars} based on the calculi with subexponentials considered throughout this paper
generate exactly the class of all recursively enumerable languages.

The definitions of categorial grammars are given in Section~\ref{S:MALC} for calculi without brackets and in Section~\ref{S:calculi} for bracketed systems.
Recall that for the latter we distinguish two notions of recognition, namely s-recognition and t-recognition.

Let us briefly survey known results characterising classes of languages generated by categorial grammars over different extensions of the Lambek calculus.%

For the pure (multiplicative-only) Lambek calculus this class coincides with the class of context-free languages.
The hard part, Lambek to context-free, was done by~\citet{PentusCF}. 
For the easier direction, context-free to Lambek, we refer to Gaifman~\citep{BGS1960}, as the first one who obtained the result,
and~\citet{Buszko1985ZML}, for a modern way of proving it using Greibach normal form~\citep{Greibach}.\footnote{Methods of Gaifman and Buszkowski work only for context-free languages without the empty word. The empty word case
was handled by~\cite{Kuz2012IGPL}.}

Adding the unit constant, $\U$, does not extend the class of languages generated by Lambek grammars---all $\LU$-languages are
still context-free~\citep{Kuz2012FG}.

Grammars based on the Lambek calculus with brackets, $\Lbx$, also generate only context-free languages, both in the
sense of s- and t-recognition. This was initially claimed by~\citet{Jaeger2003}, but, as noticed by~\citet{FaddaMorrill2005},
his proof for t-recognition relied on an incorrect lemma by~\citet{Versmissen1996}. A correct proof was given by~\citet{Kanazawa2018}.

Additive connectives, $\vee$ and $\wedge$, increase the generative power
of Lambek grammars. Namely, as noticed by~\citet{KanazawaJoLLI}, 
$\MALC$-grammars an generate finite intersections of context-free languages and, moreover,
images of such intersections under symbol-to-symbol homomorphisms
(that is, homomorphisms $h \colon \Sigma^* \to \Sigma^*$ 
that map $\Sigma$ to $\Sigma$). Furthermore, as shown by~\cite{Kuz2013}
and~\cite{KuznetsovOkhotin}, the class of $\MALC$-languages
includes the class of languages generated by conjunctive grammars~\cite{OkhotinSurvey}. This latter class is strictly greater than the
class of intersections of context-free languages, but, unless $\mathsf{P} = 
\mathsf{NP}$, is still not closed under symbol-to-symbol homomorphisms~\citep{KuznetsovOkhotin}.

In this section we show that adding the (sub)exponential modality, even constrained by brackets, increases
the power of Lambek categorial grammars to the highest possible level---all
recursively enumerable (r.e.) languages.

From many definitions of the class of r.e. languages we choose the one based on type-0 grammars
(see Section~\ref{S:undec} above for definition). A language $M$ over alphabet $\Sigma$ is r.e. if and only if there
exists a type-0 grammar $\Gc$ such that $M = \{ w \in \Sigma^* \mid s \Rightarrow_\Gc^* w \}$, {\em i.e.,} $M$ is the
language generated by $\Gc$.

Obviously, all languages generated by categorial grammars based on calculi considered in this paper are r.e.
The converse statement is non-trivial, and extends undecidability results of the previous section. For bracketed
calculi, moreover, we have two notions of recognition (s-recognition and t-recognition).

As in the previous section, we provide a generic result. Notice how Lambek's restriction comes into play here.
\begin{theorem}\label{Th:gram_generic}
\begin{enumerate}
\item Let $\Lc$ satisfy the conditions of Theorem~\ref{Th:undec_generic} and additionally admit
$\PMod L$ and $\BS R$ without Lambek's restriction. Then any r.e. language can be generated by  a categorial
grammar based on $\Lc$.
\item Let $\Lc$ satisfy the conditions of Theorem~\ref{Th:undec_generic} and admit
$\PMod L$ and $\BS R$ with Lambek's restriction. Then any r.e. language without the empty word can be generated by  a categorial
grammar based on $\Lc$.
\end{enumerate}
\end{theorem}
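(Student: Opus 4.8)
The plan is to read Lemma~\ref{Lm:undecRR} as a recipe for a lexicon, collapsing the entire internalising preamble $\Phi_{\Gc}$ into the goal formula through one left division. Fix an r.e. language $M$ over $\Sigma$; by the type-0 characterisation there is a grammar $\Gc$ with $M = \{ w \in \Sigma^* \mid s \Rightarrow^*_{\Gc} w \}$, and we keep the conventions of Section~\ref{S:undec} ($\Af \subset \Var$, distinguished $s$, fresh $q$). Let $\Phi_{\Gc}$ internalise $\Ac_{\Gc}$ in $\Lc$ (such a meta-formula exists by hypothesis; concretely it is the one of Proposition~\ref{Pr:internA} or~\ref{Pr:internB}), and let $\widehat{\Phi}_{\Gc}$ be the single \emph{formula} obtained from $\Phi_{\Gc}$ by replacing each bracketed tree term with a $\PMod$ (one per bracket layer) and joining everything by $\cdot$. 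I would take the grammar with alphabet $\Sigma$, goal $H = \widehat{\Phi}_{\Gc} \BS s$, and the single assignment $a \rhd a$ for each $a \in \Sigma$. Then $w = a_1 \dots a_n$ is s-recognised iff $a_1, \dots, a_n \to \widehat{\Phi}_{\Gc}\BS s$ is derivable, and t-recognised iff $\Delta \to \widehat{\Phi}_{\Gc}\BS s$ is derivable for some bracketing $\Delta$ of the $a_i$.

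First I would prove completeness for non-empty $w \in M$. By Lemma~\ref{Lm:undecRR} ($1 \Rightarrow 2$), $\Phi_{\Gc}, a_1, \dots, a_n \to s$ is derivable in $\Lc$. Applying $\PMod L$ once per bracket layer turns each bracketed domain of $\Phi_{\Gc}$ back into a $\PMod$-formula, repeated $\cdot L$ glues the preamble into the single formula $\widehat{\Phi}_{\Gc}$, and a final $\BS R$---legal because $a_1, \dots, a_n \ne \Lambda$, so Lambek's restriction is met---yields $a_1, \dots, a_n \to \widehat{\Phi}_{\Gc}\BS s$. Hence every non-empty word of $M$ is s-recognised, and a fortiori t-recognised.

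For soundness I would use $\pi_q$-soundness rather than invert rules by hand. Suppose $\Delta \to \widehat{\Phi}_{\Gc}\BS s$ is derivable in $\Lc$ for some bracketing $\Delta$ of $a_1, \dots, a_n$. Projecting gives $a_1, \dots, a_n \to (\prod \pi_q(\Phi_{\Gc}))\BS s$ in $\ELM$ (brackets and $\PMod$ both vanish under $\pi_q$, and $\pi_q(a_i) = a_i$); inverting $\BS R$ in $\ELM$ and cutting against ${!}A_1, \dots, {!}A_N \to \prod \pi_q(\Phi_{\Gc})$ from item~\ref{It:internBack} of Definition~\ref{Df:intern} produces ${!}A_1, \dots, {!}A_N, a_1, \dots, a_n \to s$, whence $s \Rightarrow^*_{\Gc} a_1 \dots a_n$ by the $4 \Rightarrow 1$ implication of Lemma~\ref{Lm:undecRR}. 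Since this route passes through $\ELM$, where ${!}$-formulae freely permute, contract and weaken, it also absorbs stray assignments and multiple $\cdot$-copies of the preamble. Thus both recognition modes select exactly the non-empty words of $M$. (Equivalently, cut admissibility---Theorems~\ref{Th:cutelim} and~\ref{Th:cutelimA}---makes $\BS R$, $\cdot L$, $\PMod L$ invertible, e.g.\ cutting $\Delta \to \widehat{\Phi}_{\Gc}\BS s$ against the derivable $A, A \BS C \to C$ inverts $\BS R$, reducing the folded sequent directly to $\Phi_{\Gc}, \Delta \to s$ and then invoking Lemma~\ref{Lm:undecRR}.)

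The delicate point, which produces the split between the two clauses, is the empty word. In case~(1) the unrestricted $\BS R$ reduces $\Lambda \to \widehat{\Phi}_{\Gc}\BS s$ to $\widehat{\Phi}_{\Gc} \to s$, i.e.\ to $\Phi_{\Gc} \to s$, which by Lemma~\ref{Lm:undecRR} with $n = 0$ is derivable exactly when $s \Rightarrow^*_{\Gc} \varepsilon$; so $\varepsilon$ is recognised iff $\varepsilon \in M$, and the grammar generates all of $M$. In case~(2) Lambek's restriction blocks $\BS R$ on an empty $\Gamma$, and no other rule can conclude $\Lambda \to \widehat{\Phi}_{\Gc}\BS s$ (empty antecedent, a division in the succedent, and $\U$ absent), so the empty word is never recognised---exactly why clause~(2) must exclude $\varepsilon$. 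The main bookkeeping obstacle I anticipate is checking that the folding steps keep every antecedent and every bracketed domain non-empty under Lambek's restriction; this holds because $\Delta \ne \Lambda$ and $\Phi_{\Gc}$ is non-empty by construction.
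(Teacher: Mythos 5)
Your proposal is correct and follows essentially the same route as the paper's proof: the identity lexicon with goal $\prod\Phi_{\Gc}\BS s$ (your $\widehat{\Phi}_{\Gc}\BS s$), the forward direction via Lemma~\ref{Lm:undecRR} followed by $\cdot L$, $\PMod L$, and $\BS R$, and the backward direction via $\pi_q$-soundness, cut-inversion of $\BS R$ in $\ELM$, and item~\ref{It:internBack} of Definition~\ref{Df:intern} feeding the $4\Rightarrow1$ implication of Lemma~\ref{Lm:undecRR}. The treatment of the empty word and of Lambek's restriction matches the paper's case split as well.
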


\begin{proof}
Consider the type-0 grammar $\Gc$ which generates the given r.e. language.
Let $\Phi_\Gc$ internalise $\Gc$ in $\Lc$. 
We are going to prove that the following three statements are equivalent:
\begin{enumerate}
\item a word $a_1 \ldots a_n$ belongs to the language generated
by $\Gc$ (that is, $s \Rightarrow^*_\Gc a_1 \ldots a_n$);
\item the sequent $a_1, \ldots, a_n \to \prod \Phi_\Gc \BS s$ is derivable in $\Lc$;
\item the sequent $\Delta \to \prod \Phi_\Gc \BS s$ is derivable in $\Lc$ for some bracketing $\Delta$ of
$a_1, \ldots, a_n$.
\end{enumerate}
The meta-formula $\Phi_\Gc$ could contain brackets; for brackets, we define $\prod$ as follows:
$\prod [ \Gamma ] = \PMod (\prod \Gamma)$.

In order to establish \fbox{$1 \Rightarrow 2$}, apply Lemma~\ref{Lm:undecRR}. Let $s \Rightarrow^*_\Gc a_1 \ldots a_n$. We get
$\Phi_\Gc, a_1, \ldots, a_n \to s$ derivable in $\Lc$. Applying $\cdot L$ and $\PMod L$ several times,
we obtain $\prod \Phi_\Gc, a_1, \ldots, a_n \to s$. Now we apply $\SL R$. In Case~1, without Lambek's restriction,
we always get $a_1, \ldots, a_n \to \prod \Phi_\Gc \BS s$. In Case~2, this is possible only for $n > 0$, and in 
this case we consider only type-0 grammars which do not generate the empty word.

The \fbox{$2 \Rightarrow 3$} implication is established by taking the trivial bracketing $\Delta = a_1, \ldots, a_n$. Finally,
for the backwards \fbox{$3 \Rightarrow 1$} implication we use $\pi_q$-soundness of $\Lc$ in $\ELM$ and derive 
$a_1, \ldots, a_n \to \prod \pi_q(\Phi_\Gc) \BS s$ in $\ELM$. 

In $\ELM$, the $\BS R$ rule is invertible using cut:
$$
\infer[\CUT]{F, \Pi \to E}{\Pi \to F \BS E  & \infer[\BS L]{F, F  \BS E \to E}{F \to F & E \to E}}
$$
Thus, we get $\prod \pi_q(\Phi_\Gc), a_1, \ldots, a_n \to s$, and by cut with item~\ref{It:internBack} of Definition~\ref{Df:intern}
we get ${!}A_1, \ldots, {!}A_N, a_1, \ldots, a_n \to s$. The $4 \Rightarrow 1$ implication of Lemma~\ref{Lm:undecRR} finishes the job.

Now the necessary categorial grammar is constructed as follows. The lexicon $\rhd$ is trivial, just the identity relation:
$\rhd = \{ \langle a,a \rangle \mid a \in \Sigma \}$. All the information is kept in the goal formula $H = \prod \pi_q(\Phi_\Gc) \BS s$.
By definition, this grammar generates the same language as $\Gc$, both in the sense of s-recognition and t-recognition. \qed
\end{proof}

Notice that the grammars constructed in Theorem~\ref{Th:gram_generic} have the property of {\em unique type assignment:} for each
letter $a$ there exists exactly one formula $A$ such that $a \rhd A$. For the pure Lambek calculus, constructing such grammars is
much harder. However, for each context-free language without the empty word there exists a Lambek grammar with unique type assignment,
as shown by~\citet{Safiullin}, see also~\citet{Kuzn2017WoLLIC}.

Internalisation properties (Propositions~\ref{Pr:internA}, \ref{Pr:internB}, \ref{Pr:internBx}) now yield the following theorems.

\begin{theorem} 
Let $M$ be a recursively enumerable language. Then
\begin{enumerate}
\item there exists a $\ELM$-grammar which generates $M$;
\item there exists a $\LLsM$-grammar which generates $M$;
\item for each of the calculi $\LsysA$, $\LsysAa$, $\LsysAfl$, $\LsysB$, $\LsysBa$, $\LsysBfl$ there exist
a grammar which generates $M$, both in the sense of s-recognition and t-recognition.
\end{enumerate}
\end{theorem}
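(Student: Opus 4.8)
The plan is to read every item of the theorem as an instance of the generic construction of Theorem~\ref{Th:gram_generic}, so that for each calculus the only work is to verify the bundle of hypotheses that theorem needs: $\pi_q$-soundness in $\ELM$, admissibility of $\cdot L$, $\cdot R$, $\SL L$, $\PMod L$ and $\BS R$, and internalisation of finite sets of Lambek formulae. All six systems listed in item~3, together with $\ELM$ and $\LLsM$, are free of Lambek's restriction, so it is Case~1 of Theorem~\ref{Th:gram_generic} that applies and it yields \emph{every} recursively enumerable language, including those containing the empty word (this is exactly why the Lambek-restricted systems $\LsysBar$, $\LsysBrfl$ are omitted from the list). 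For the bracket-free systems of items~1 and~2 the two notions of recognition collapse, while for the bracketed systems the grammar produced by Theorem~\ref{Th:gram_generic} recognises the same language both in the s- and the t-sense by construction.

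First I would dispatch items~1 and~2. For $\ELM$ the set $\{A_1,\dots,A_N\}$ is internalised by $\Phi = {!}A_1,\dots,{!}A_N$ (this is Proposition~\ref{Pr:dedEL} read through Definition~\ref{Df:intern}), and $\ELM$ is $\pi_q$-sound in itself because substituting the formula $\U$ for the variable $q$ preserves $\ELM$-derivability. For $\LLsM$ the internalising meta-formula is $\U\SL{!}A_1, {!}A_1,\dots,\U\SL{!}A_N, {!}A_N$, the extra divisor formulae compensating for the missing weakening so that $\Phi, s\to s$ stays derivable; here $\pi_q$-soundness follows from $\LLsM\subseteq\ELM$ together with closure of $\ELM$ under the substitution $q\mapsto\U$. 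Neither meta-formula contains brackets, so the $\PMod L$ step in the proof of Theorem~\ref{Th:gram_generic} is vacuous and the argument specialises without change.

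Next I would handle the more routine bracketed systems. For $\LsysAfl$ and $\LsysBfl$ I would invoke Theorem~\ref{Th:gram_generic} directly, feeding it the internalising meta-formulae of Proposition~\ref{Pr:internA} and Proposition~\ref{Pr:internB}; the required $\pi_q$-soundness of these flat systems in $\ELM$ is the (additive-free version of the) soundness proposition in Section~\ref{S:nobrackets}. For the stoup versions $\LsysAa$ and $\LsysBa$ I would note that grammar recognition only ever concerns stoup-free sequents (the goal is a formula and the antecedent is a bracketing of $a_1,\dots,a_n$), so by Propositions~\ref{Prop:nostoupA} and~\ref{Prop:nostoupB} their generated languages coincide with those of $\LsysAfl$ and $\LsysBfl$. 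For $\LsysA$ I would observe that the derivations witnessing items~\ref{It:internS} and~\ref{It:internLand} of Proposition~\ref{Pr:internA} use neither ${!}R$ nor the transport of additional ${!}$-formulae into a freshly created island, hence are already valid in $\LsysA$; thus $\LsysA$ internalises finite sets exactly as $\LsysAa$ does, and $\pi_q$-soundness of $\LsysA$ in $\ELM$ is verified rule by rule as for the flat case, the empty-stoup instance of Morrill's ${!}R$ landing on the promotion-from-empty-context instance of ${!}R$ in $\ELM$.

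The hard part will be $\LsysB$, which does not fit Theorem~\ref{Th:gram_generic} on the nose: non-invertibility of ${!}L$ forces the ${!}$-formulae of $\Phi$ into the stoup, so $\LsysB$ satisfies only items~\ref{It:internS} and~\ref{It:internLand} of internalisation (Proposition~\ref{Pr:internBx}), not item~\ref{It:internBack} in the same shape. Here I would build the grammar directly from Lemma~\ref{Lm:keyBx}: take the identity lexicon and goal formula $H=\prod\Phi\BS s$, where $\Phi$ is the meta-formula of Lemma~\ref{Lm:keyBx} (carrying the extra ${!}$ on each $Z_i$, with $[[q]]$ left outside the stoup) and $\prod[\Gamma]=\PMod(\prod\Gamma)$. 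The forward direction then mimics the $1\Rightarrow2$ step of Lemma~\ref{Lm:undecRR}: from $s\Rightarrow^*_{\Gc}a_1\dots a_n$, Lemma~\ref{Lm:keyBx} gives the long antecedent sequent, which collapses under $\cdot L$ and $\PMod L$ and then $\BS R$ to $a_1,\dots,a_n\to H$, giving s-recognition and a fortiori t-recognition. For the backward direction I would push any bracketing $\Delta\to H$ through Proposition~\ref{Prop:nostoupBx} into $\LsysBfl$, then through $\pi_q$-soundness into $\ELM$, invert $\BS R$ by cut as in the proof of Theorem~\ref{Th:gram_generic}, and conclude with the $4\Rightarrow1$ implication of Lemma~\ref{Lm:undecRR}. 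The one subtlety to watch is reconciling the double-${!}$ decoration of Lemma~\ref{Lm:keyBx} with the single-${!}$ internalisation of Proposition~\ref{Pr:internB}; this is exactly the cut against ${!}Z_i\to{!}{!}Z_i$ already performed inside Lemma~\ref{Lm:keyBx}. The resulting grammar generates $M$ in both the s- and t-recognition senses, completing item~3.
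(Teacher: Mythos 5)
Your proposal is correct and follows essentially the same route as the paper, whose entire proof of this theorem is the remark that the internalisation properties (Propositions~\ref{Pr:internA}, \ref{Pr:internB}, \ref{Pr:internBx}) combined with Theorem~\ref{Th:gram_generic} yield the result; you have simply unfolded that remark, including the correct observation that $\LsysB$ needs the direct argument via Lemma~\ref{Lm:keyBx} (with the cut against ${!}Z_i\to{!}{!}Z_i$) because it only satisfies items~\ref{It:internS} and~\ref{It:internLand} of Definition~\ref{Df:intern}. Your treatment of the stoup systems via Propositions~\ref{Prop:nostoupA} and~\ref{Prop:nostoupB}, and of $\LsysA$ via the observation that Proposition~\ref{Pr:internA} never uses ${!}R$, matches the paper's intent exactly.
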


\begin{theorem}
Let $M$ be a recursively enumerable language without the empty word. Then
for each of the calculi $\LsysBar$ and $\LsysBrfl$ there exists a grammar 
which generates $M$, both in the sense of s-recognition and t-recognition.
\end{theorem}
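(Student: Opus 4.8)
The plan is to derive this theorem as an almost immediate consequence of Theorem~\ref{Th:gram_generic}(2), the internalisation result of Proposition~\ref{Pr:internB}, and the stoup-elimination equivalence of Proposition~\ref{Prop:nostoupB}. The two calculi are handled slightly differently, because $\pi_q$-soundness has only been set up for the stoup-free syntax, so only $\LsysBrfl$ meets the hypotheses of Theorem~\ref{Th:gram_generic} literally, while $\LsysBar$ must be reached through the elimination of stoups.

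First I would treat $\LsysBrfl$. I would check each hypothesis of Theorem~\ref{Th:gram_generic}(2): it admits $\cdot L$, $\cdot R$, $\SL L$, $\PMod L$, and $\BS R$ (the last with Lambek's restriction), these being among its rules; it is $\pi_q$-sound in $\ELM$ by the soundness results of Section~\ref{S:nobrackets} restricted to the multiplicative fragment; and by Proposition~\ref{Pr:internB} it internalises every finite set of Lambek formulae. The witnessing meta-formula $\Phi$ there is admissible under Lambek's restriction precisely because its persistent bracketed component $[[q]]$ supplies a non-empty island for contraction and its leading $(s \SL s)\SL(\cdots)$ factors keep every antecedent non-empty throughout the `landing' derivation. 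Hence Theorem~\ref{Th:gram_generic}(2) applies verbatim and produces, for every r.e.\ language $M$ without the empty word, an $\LsysBrfl$-grammar generating $M$ both in the sense of s-recognition and of t-recognition.

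Next I would transfer the result to $\LsysBar$. Here I would \emph{not} invoke Theorem~\ref{Th:gram_generic} directly, since $\pi_q$-soundness is only defined for stoup-free calculi; instead I would reuse the very grammar constructed above. Because its lexicon is the identity relation and its goal is a single formula $H$, every sequent tested during recognition---namely $a_1,\ldots,a_n \to H$ for s-recognition and $\Delta \to H$ for t-recognition---is stoup-free (it may carry brackets, but only empty stoups). By Proposition~\ref{Prop:nostoupB} in its Lambek's-restriction variant, restricted to the additive-free fragment (legitimate because the translations in its proof preserve the set of connectives), such a stoup-free sequent is derivable in $\LsysBar$ if and only if it is derivable in $\LsysBrfl$. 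Consequently the same grammar recognises exactly the same language over $\LsysBar$ as over $\LsysBrfl$, in both recognition senses, completing this case.

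The genuinely delicate point, which I expect to be the main conceptual obstacle, is the interaction of Lambek's restriction with the internalisation machinery, but it has already been discharged in the earlier lemmas. The non-emptiness side-condition on $\BS R$ forces $n > 0$, which is exactly why the statement is confined to languages without the empty word; and the non-emptiness condition on the contraction rule ${!}C$ could a priori block the `landing' step. What makes everything go through is the specific shape of $\Phi$ in Proposition~\ref{Pr:internB}, so that no rule application ever violates the restriction. Granting Propositions~\ref{Pr:internB} and~\ref{Prop:nostoupB} together with Theorem~\ref{Th:gram_generic}(2), the theorem is then immediate.
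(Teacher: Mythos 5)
Your proposal is correct and follows essentially the same route as the paper: the paper derives this theorem as an immediate corollary of Theorem~\ref{Th:gram_generic}(2) applied to $\LsysBrfl$ via the internalisation of Proposition~\ref{Pr:internB}, and handles the stoup version $\LsysBar$ exactly as you do, by the empty-stoup equivalence of Proposition~\ref{Prop:nostoupB} (the same device it uses for the undecidability of the stoup systems). Your explicit remarks on why Lambek's restriction survives the landing derivation and why $\pi_q$-soundness forces the detour through the stoup-free calculus are precisely the points the paper leaves implicit.
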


\section{Undecidability for One-Division Fragments}\label{S:Buszko}

In this section we strengthen our complexity lower bounds by restricting ourselves to the smallest non-trivial 
fragment of the Lambek calculus with only one division operation, 
extended by ${!}$ and, in the bracketed case, bracket modalities $\PMod$ and $\NMod$. 
For these one-division systems, we obtain the same complexity results. Thus, the situation is different from
the pure Lambek:  while checking derivability in the Lambek calculus (without additives, brackets, and exponentials)
is an NP-complete problem~\citep{PentusNP}, for the one-division fragment there exists a polynomial time
 algorithm~\citep{Savateev2010,KuznetsovTrMIAN2016}. 
In contrast, in the presence of ${!}$ the one-division fragment is as powerful, as the whole system, both with and without brackets.

Our construction is based on Buszkowski's method of encoding type-0 grammars in the one-division fragment of the Lambek calculus
extended with extra (non-logical) axioms~\citep{BuszkoZML}. Then these non-logical axioms are internalised using ${!}$, in the same way 
as in Section~\ref{S:undec}.
 We present a new, simplified version of Buszkowski's construction. This version does not require the type-0 grammar to be
 translated into binary normal form, and, in particular, works also for languages with the empty word.

Recall that the one-division fragment of the Lambek calculus uses formulae constructed from variables using only one division operation, $\SL$.
We denote this fragment by $\Ld$ (the asterisk means that we do not impose Lambek's restriction). Axioms of $\Ld$ are of the form $A \to A$,
and its rules of inference are as follows:
$$
\infer[\SL L]{\Delta_1, B \SL A, \Pi, \Delta_2 \to C}{\Pi \to A & \Delta_1, B, \Delta_2 \to C}
\qquad
\infer[\SL R]{\Pi \to B \SL A}{\Pi, A \to B}
$$
$$
\infer[\CUT]{\Delta_1, \Pi, \Delta_2 \to C}{\Pi \to A & \Delta_1, A, \Delta_2 \to C}
$$

Let us call a {\em B-rule} (after Buszkowski) an inference rule of the following form:
$$
\infer[B_{q_1,\ldots,q_m,r;p_1,\ldots,p_k,t}]{p_1,\ldots, p_k, \Delta \to t}{\Delta, q_1, \ldots, q_m \to r}
$$
In this rule, $q_1,\ldots,q_m,p_1,\ldots,p_k,r$, and $t$ are {\em concrete variables} from $\Var$, not meta-variables.
Each tuple of variables produces an independent B-rule.
On the other hand, $\Delta$ stands for an arbitrary sequence
of Lambek formulae (in the one-division language).

With each rule, we associate the following Gentzen-style rule, which we call a {\em $\mathrm{B}'$-rule:}
$$
\infer[B'_{q_1,\ldots,q_m,r;p_1,\ldots,p_k,t}]{\Pi_1, \ldots, \Pi_k, \Delta \to t}{\Pi_1 \to p_1 & \ldots & \Pi_k \to p_k & \Delta, q_1, \ldots, q_m \to r}
$$
and the following {\em B-formula:}\footnote{Here $E \SL F_1 \ldots F_n$ is used as a shortcut 
for $(E \SL F_n) \SL \ldots \SL F_1$, thus, $B$ is a one-division formula.}
$$
B = (t \SL (r \SL q_1 \ldots q_m)) \SL p_1 \ldots p_k.
$$
For a B-formula $B$, we introduce the corresponding {\em B-axiom} $\Lambda \to B$.

B-rules, B-axioms, and $\mathrm{B}'$-rules will be used as extensions of the Lambek calculus with one division, and each of the three variants has its benefits:
\begin{enumerate}
\item with B-rules, a derivation of a sequent without Lambek connectives ({\em i.e.,} of the form $z_1, \ldots, z_m \to s$) is
non-branching, which will allow us to encode it using a $!$ modality with a restricted version of contraction rule;
\item B-formulae are convenient for incorporating into the main sequent using a ``deduction theorem'' with $!$;
\item finally, the calculus with $\mathrm{B'}$-rules, unlike the first two variants, admits cut elimination (see Lemma~\ref{Lm:LdBcut} below),
which facilitates analysis of derivations.
\end{enumerate}

B-rules, B-axioms, and $\mathrm{B'}$-rules yield equivalent extensions of the Lambek calculus:
\begin{lemma}\label{Lm:eqBB}
In the presence of cut, the extensions of $\Ld$ with: (1) a set of B-rules; (2) the corresponding set of $\mathrm{B}'$-rules;
(3) the corresponding set of B-axioms are equivalent, i.e., derive the same set of sequents.
\end{lemma}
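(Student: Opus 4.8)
The plan is to establish the three equivalences by a round-robin of four mutual simulations, all resting on one elementary derivability fact in pure $\Ld$. That fact is: for any formula $X$ and any variables $y_1, \ldots, y_n$, the sequent $(X \SL y_1 \ldots y_n), y_1, \ldots, y_n \to X$ is derivable in $\Ld$ (even without cut). I would verify it by induction on $n$: the base case $n = 0$ is the axiom $X \to X$, and the inductive step peels off the outermost connective $\SL y_1$ by a single $\SL L$ whose left premise is $y_1 \to y_1$ and whose right premise is the inductive instance $(X \SL y_2 \ldots y_n), y_2, \ldots, y_n \to X$. This lemma carries essentially all the bookkeeping in the argument, and the main thing to get right is the grouping convention $E \SL F_1 \ldots F_n = (E \SL F_n) \SL \ldots \SL F_1$, which controls the order in which the right-division rules fire.

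Next I would record the two easy directions between B-rules and $\mathrm{B}'$-rules. Every B-rule is literally the special case of the corresponding $\mathrm{B}'$-rule obtained by taking $\Pi_i = p_i$, so that each side premise $\Pi_i \to p_i$ becomes an instance of the identity axiom; hence every B-rule application is already available in the calculus with $\mathrm{B}'$-rules. Conversely, a $\mathrm{B}'$-rule is derived from the corresponding B-rule together with cut: apply the B-rule to the last premise $\Delta, q_1, \ldots, q_m \to r$ to obtain $p_1, \ldots, p_k, \Delta \to t$, and then cut each side premise $\Pi_i \to p_i$ against the displayed occurrence of $p_i$. Iterating the $k$ cuts yields $\Pi_1, \ldots, \Pi_k, \Delta \to t$. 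This already gives the equivalence of the extensions by B-rules and by $\mathrm{B}'$-rules.

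For the equivalence with B-axioms I would simulate a B-rule from the axiom $\Lambda \to B$ and then derive the axiom back from the B-rule. In the forward direction, write $R = r \SL q_1 \ldots q_m$ and $T = t \SL R$, so that $B = T \SL p_1 \ldots p_k$. Starting from the premise $\Delta, q_1, \ldots, q_m \to r$, repeated $\SL R$ applications give $\Delta \to R$; one $\SL L$ against $t \to t$ then yields $T, \Delta \to t$. The helper lemma supplies $B, p_1, \ldots, p_k \to T$, and cutting this against $T, \Delta \to t$ on $T$ produces $B, p_1, \ldots, p_k, \Delta \to t$; a final cut with the axiom $\Lambda \to B$ removes $B$ and gives the B-rule conclusion $p_1, \ldots, p_k, \Delta \to t$. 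In the reverse direction, I reduce $\Lambda \to B$ by $k$ applications of $\SL R$ to $p_1, \ldots, p_k \to T$, and one further $\SL R$ to $p_1, \ldots, p_k, R \to t$; the B-rule taken with the single formula $R$ in the role of $\Delta$ reduces this to its premise $R, q_1, \ldots, q_m \to r$, which is exactly the helper lemma instance with $X = r$. This closes the cycle and shows all three extensions derive the same sequents. The only genuinely delicate point is matching the $\SL R$/$\SL L$ bookkeeping to the shorthand for iterated divisions; no deeper obstacle arises, and cut (already a rule of $\Ld$) is invoked only in the two directions touching the $\mathrm{B}'$-rule and the B-axiom, in accordance with the hypothesis that cut is present.
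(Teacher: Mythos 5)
Your proof is correct and follows essentially the same route as the paper: B-rules are the $\Pi_i = p_i$ instances of $\mathrm{B}'$-rules, $\mathrm{B}'$-rules are recovered from B-axioms (or B-rules) via cut, and the B-axiom is derived from the B-rule by peeling off the divisions with $\SL R$ and closing with the instance $\Delta = r \SL q_1 \ldots q_m$. The only cosmetic differences are that you organise the argument as two direct biconditionals rather than a round-robin cycle, and that you isolate the sequent $(X \SL y_1 \ldots y_n), y_1, \ldots, y_n \to X$ as an explicit helper lemma (spending one extra cut on $T$ in the axiom-to-rule direction) where the paper derives the same instances inline by iterated $\SL L$.
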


\begin{proof}
\fbox{$(1) \Rightarrow (2)$} (modelling B-rules using $\mathrm{B}'$-rules)
{\small
$$
\infer[B'_{q_1,\ldots,q_m,r;p_1,\ldots,p_k,t}]
{p_1, \ldots, p_k, \Delta \to t}
{p_1 \to p_1 & \ldots & p_k \to p_k & \Delta, q_1, \ldots, q_m \to r}
$$
}

\fbox{$(2) \Rightarrow (3)$} (modelling $\mathrm{B}'$-rules using B-axioms and cut)
{\small
$$
\infer=[\SL L]{(t \SL (r \SL q_1 \ldots q_m)) \SL p_1 \ldots p_k, \Pi_1, \ldots, \Pi_k, \Delta \to t}
{\Pi_1 \to p_1 & \ldots & \Pi_k \to p_k & \infer[\SL L]{t \SL (r \SL q_1 \ldots q_m), \Delta \to t}
{\infer=[\SL R]{\Delta \to r \SL q_1 \ldots q_m}{\Delta, q_1, \ldots, q_m \to r} & t \to t}}
$$
}
Then a cut with the B-axiom $\Lambda \to (t \SL (r \SL q_1 \ldots q_m)) \SL p_1 \ldots p_k$ yields the goal sequent
$\Pi_1, \ldots, \Pi_k, \Delta \to t$.

\fbox{$(3) \Rightarrow (1)$} (deriving B-axioms using B-rules)
{\small
$$
\infer=[\SL R]{\Lambda \to (t \SL (r \SL q_1 \ldots q_m)) \SL p_1 \ldots p_k}
{\infer[\SL R]{p_1, \ldots, p_k \to t \SL (r \SL q_1 \ldots q_m)}
{\infer[B_{q_1,\ldots,q_m,r;p_1,\ldots,p_k,t}]{p_1, \ldots, p_k, r \SL q_1 \ldots q_m \to t}
{\infer=[\SL L]{r \SL q_1 \ldots q_m, q_1, \ldots, q_m \to r}
{q_1 \to q_1 & \ldots & q_m \to q_m & r \to r}}}}
$$
}
\qed
\end{proof}

\begin{lemma}\label{Lm:LdBcut}
The one-division Lambek calculus $\Ld$, extended with an arbitrary finite set of $\mathrm{B}'$-rules, admits cut elimination.
\end{lemma}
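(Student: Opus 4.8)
The plan is to run the classical Gentzen double-induction for cut elimination, and to show that the $\mathrm{B}'$-rules contribute only \emph{non-principal} cases, all of which are removed by permuting the cut upward. Precisely, I would prove by nested induction the statement: if $\Pi \to A$ and $\Delta_1, A, \Delta_2 \to C$ are derivable in $\Ld$-plus-$\mathrm{B}'$-rules \emph{without cut}, then so is $\Delta_1, \Pi, \Delta_2 \to C$. The primary induction parameter $\kappa$ is the number of connectives in the cut formula $A$, and the secondary parameter $\sigma$ is the total number of rule applications in the two given cut-free derivations. Every reduction step either strictly decreases $\kappa$, or keeps $\kappa$ fixed and strictly decreases $\sigma$; hence the induction is well-founded, and applying it to a topmost cut (one whose premises are already cut-free) and iterating removes all cuts.

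The case analysis is on the last rules of the two premises of
$$\infer[\CUT]{\Delta_1, \Pi, \Delta_2 \to C}{\Pi \to A & \Delta_1, A, \Delta_2 \to C}.$$
The crucial structural observation is that in any $\mathrm{B}'$-rule the succedents of the conclusion and of all premises are \emph{variables}, and the conclusion antecedent $\Pi_1, \ldots, \Pi_k, \Delta$ carries no distinguished compound formula. Consequently a $\mathrm{B}'$-rule can never be principal on the cut formula. The only genuine principal case is therefore the classical Lambek one: $A = B \SL A'$ is compound, introduced on the left by $\SL R$ and consumed on the right by $\SL L$; this I reduce to two cuts on $B$ and $A'$, both of strictly smaller $\kappa$. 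If either premise is an axiom $A \to A$, the cut disappears. In all remaining situations the cut formula is a side formula of the last rule of one of the premises, and I permute the cut past that rule, strictly decreasing $\sigma$ with $\kappa$ unchanged. The permutations through $\SL L$ (left and right, in each of the possible positions of the cut formula) are exactly as in Lambek's original proof. When the last right-hand rule is a $\mathrm{B}'$-rule, the occurrence $A$ lies wholly inside one block $\Pi_j$ or inside $\Delta$, and I push the cut into the corresponding premise, e.g.\ for $\Pi_j = \Pi_j', A, \Pi_j''$:
$$\infer[B']{\Pi_1, \ldots, \Pi_j', \Pi, \Pi_j'', \ldots, \Pi_k, \Delta \to t}{\Pi_1 \to p_1 & \cdots & \infer[\CUT]{\Pi_j', \Pi, \Pi_j'' \to p_j}{\Pi \to A & \Pi_j', A, \Pi_j'' \to p_j} & \cdots & \Delta, q_1, \ldots, q_m \to r}$$
and symmetrically into $\Delta, q_1, \ldots, q_m \to r$ when $A \in \Delta$. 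Finally, when the \emph{left} premise ends with a $\mathrm{B}'$-rule, its succedent $A = t$ must be a variable, so $A$ is never principal on the right either; the cut is then resolved entirely on the right side (by the permutations just described, or by reaching an axiom $t \to t$, which erases it), keeping the left derivation fixed.

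The main obstacle, and the only point requiring real care, is establishing that the cut can \emph{always} be permuted, which is exactly where the choice of the $\mathrm{B}'$ formulation matters. If one used B-rules instead, the variables $p_1, \ldots, p_k$ would be consumed at the conclusion without occurring in any premise, so a cut whose cut formula feeds one of the $p_i$ would be stuck and genuinely irreducible; and B-axioms $\Lambda \to B$ would manufacture an irreducible principal cut on the compound formula $B$. In the $\mathrm{B}'$-rule each $p_i$ is instead exposed through its own premise $\Pi_i \to p_i$, so every cut occurrence can be absorbed into some premise, while the variable succedents forbid any new principal case. The remaining work is purely the bookkeeping of verifying the permutations across the three positions of the cut formula in $\SL L$ and across the blocks of the $\mathrm{B}'$-rule; all of it is routine once the observation above is in place.
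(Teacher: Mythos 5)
Your proof is correct and follows essentially the same route as the paper's: the classical Lambek/Gentzen double induction, resting on the key observation that a $\mathrm{B}'$-rule assembles its conclusion antecedent entirely from its premises (so a cut into that antecedent always permutes upward into the appropriate premise) while its succedent is a variable (so the left premise ending in a $\mathrm{B}'$-rule forces a variable cut formula that is resolved on the right or at an axiom), leaving $\SL R$ versus $\SL L$ as the only genuine principal case. The differences are only bookkeeping: you measure the secondary parameter by total derivation size rather than summed depth, and you classify the left-premise $\mathrm{B}'$-rule case as non-principal where the paper calls it principal-with-a-variable, but the cases treated and the reductions used coincide.
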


\begin{proof}
The proof goes via a standard argument, exactly as for the Lambek calculus itself~\citep{Lambek58}. The global induction is by the number
of cuts in a derivation. Each cut is eliminated by nested induction, where the outer parameter is the complexity of the formula being cut,
and the inner one is the summary derivation depth of the premises of the cut.

The base case is cut with axiom ($A \to A$), which just disappears. For the induction step, one distinguishes principal and non-principal cut premises.
A premise is called principal, if the last rule in its derivation introduces the formula being cut. Thus, if the left premise is introduced by a
$\mathrm{B'}$-rule, it is always principal (since this rule introduces the succedent $t$). The key trick, however, is that if the {\em right} premise of a cut is introduced by a $\mathrm{B}'$-rule, then it is 
{\em never} principal. This is due to the fact that a $\mathrm{B'}$-rule introduces nothing to the antecedent: the antecedent of its goal, $\Pi_1, \ldots, \Pi_k, \Delta$, is composed from antecedents of the premises.

Thus, there are three possible cases.

{\em Case 1:} the left premise is non-principal. Cut can be exchanged with the last rule in the left premise derivation, and propagates upward. The
inner induction parameter gets smaller, while the outer one is intact.
Propagation of cut through non-principal $(\to\SL)$ and $(\SL\to)$ is standard and is performed exactly as in the
cut elimination proof of~\citet{Lambek58}. As for $\mathrm{B}'$-rules, such a rule cannot yield a non-principal left premise of cut.

{\em Case 2:} the right premise in non-principal. Cut propagates to the right.

{This is how cut gets propagated through a $\mathrm{B}'$-rule:

{\small $$
\infer[\CUT]
{\Pi_1, \ldots, \Pi_k, \Delta', \Psi, \Delta'' \to t}{\Psi \to A & \infer[B']{\Pi_1, \ldots, \Pi_k, \Delta', A, \Delta'' \to t}
{\Pi_1 \to p_1 & \ldots & \Pi_k \to p_k & \Delta', A, \Delta'', q_1, \ldots, q_m \to t}}
$$}
transforms into
{\small
$$
\infer[B']
{\Pi_1, \ldots, \Pi_k, \Delta', \Psi, \Delta'' \to t}
{\Pi_1 \to p_1 & \ldots & \Pi_k \to p_k & \infer[\CUT]{\Delta', \Psi, \Delta'', q_1, \ldots, q_m \to t}
{\Psi \to A & \Delta', A, \Delta'', q_1, \ldots, q_m \to t}}
$$}
and
{\small
$$
\infer[\CUT]
{\Pi_1, \ldots, \Pi'_i, \Psi, \Pi''_i, \ldots, \Pi_k, \Delta \to t}{\Psi \to A & 
\infer[B']{\Pi_1, \ldots, \Pi'_i, A, \Pi''_i, \ldots, \Pi_k, \Delta \to t}
{\Pi_1 \to p_1 & \ldots & \Pi'_i, A, \Pi''_i \to p_i & \ldots & \Pi_k \to p_k & \Delta, q_1, \ldots, q_m \to t}}
$$}
transforms into
{\small
$$
\infer[B']
{\Pi_1, \ldots, \Pi'_i, \Psi, \Pi''_i, \ldots, \Pi_k, \Delta \to t}
{\Pi_1 \to p_1 & \ldots & 
\infer[\CUT]{\Pi'_i, \Psi, \Pi''_i \to p_i}{\Psi \to A & \Pi'_i, A, \Pi''_i \to p_i} & \ldots & \Pi_k \to p_k &
\Delta, q_1, \ldots, q_m \to t}
$$}

\normalsize

}

Propagation of cut to the right through non-principal $\SL R$ and $\SL L$ is again due to Lambek.

{\em Case 3:} both left and right premises are principal, being introduced by $\SL R$ and $\SL L$ respectively. In this case
cut transforms into two cuts of lower complexity:
$$
\infer[\CUT]{\Gamma, \Psi, \Pi, \Delta \to C}
{\infer[\SL R]{\Psi \to E \SL F}{\Psi, F \to E} & 
\infer[\SL L]{\Gamma, E \SL F,  \Pi, \Delta \to C}
{\Pi \to F & \Gamma, E, \Delta \to C}}
$$
becomes
$$
\infer[\CUT]{\Gamma, \Psi, \Pi, \Delta \to C}
{\Pi \to F & \infer[\CUT]{\Gamma, \Psi, F, \Delta \to C}
{\Psi, F \to E & \Gamma, E, \Delta \to C}}
$$
(This transformation, again, comes from the original Lambek's proof.)
\qed
\end{proof}

Now we are ready to present the encoding of type-0 grammars. Consider a grammar $\Gc = \langle N, \Sigma, P, s \rangle$.
For each production $\pf = (v_1 \ldots v_m \Rightarrow w_1 \ldots w_k) \in P$ add $\bap$, $\bbp$, $\bcp$, $\bdp$, $\bep$, $\bfp$, and
$\typ$, for each $y \in N \cup \Sigma$, as distinct variables to $\Var$, and consider the following seven B-rules:
\begin{align*}
& \infer[(1_{\pf})]{\bep, \Delta \to \bap}{\Delta \to s} && \infer[(4_{\pf})]{\typ, \Delta \to \bbp}{\Delta, y \to \bbp} \\
& \infer[(2_{\pf})]{\typ, \Delta \to \bap}{\Delta, y \to \bap} && \infer[(5_{\pf})]{\bfp, \Delta \to \bcp}{\Delta, \bep \to \bbp} \\
& \infer[(3_{\pf})]{\twp_1, \ldots, \twp_k, \Delta \to \bbp}{\Delta, v_1, \ldots, v_m \to \bap}
           && \infer[(6_{\pf})]{y, \Delta \to \bcp}{\Delta, \typ \to \bcp} \\
& \infer[(7_{\pf})]{\Delta \to s}{\Delta, \bfp \to \bcp}
\end{align*}

By $\mathbf{B}_\Gc$ we denote the set of all B-rules obtained from production rules of $\Gc$ as shown above;
let $\mathcal{B}_\Gc$ be the corresponding set of B-formulae and $\mathbf{B}'_\Gc$ be the corresponding set of
$\mathrm{B}'$-rules.

Before going further, let us comment a bit on these B-rules. In the language without product, we cannot directly implement the `landing' rule which replaces one subword with another (that is, applies a semi-Thue transition) at an arbitrary place of the antecedent. However, if we manage to move the subword to the right-hand side of the antecedent, it can be indeed replaced by another one (and moved to the left-hand side) by a B-rule, which is our main rule $(3_\pf)$. Other rules do the necessary preparations.
This idea is essentially due to Buszkowski; here we present it more straightforwardly. First, $(1_\pf)$ starts the replacement procedure. Second, several applications of $(2_\pf)$ rotate the antecedent so that the necessary subword is on the right-hand side of the antecedent. The usage of an alternative alphabet ($\typ$ instead of $y$) and special variables in the succedent ($\bap$, ...) here ensures that this process cannot be aborted, and other rules cannot be applied until we finish. Third, as said above, $(3_\pf)$ performs the actually semi-Thue transition. Finally, $(4_\pf)$--$(7_\pf)$ perform the backwards rotation and quit the procedure. This strategy is formalized in the proof of the key Lemma~\ref{Lm:undecRRBuszko} below, which is the 
version of Lemma~\ref{Lm:undecRR} for encoding Busz\-kow\-ski's rules.

\begin{lemma}\label{Lm:undecRRBuszko}
Let $\Lc$ be $\pi_q$-sound in $\ELM$ and admit the $\SL L$ and $\SL R$ rules (maybe with Lambek's restriction for the latter).
Let $\Psi_{\Gc}$ internalise $\Bc_{\Gc} = \{ B_1, \ldots, B_N \}$ in $\Lc$ (see Definition~\ref{Df:intern}).
Also let all formulae in $\Psi_{\Gc}$ be ${!}$-formulae, for which permutation rules are allowed in $\Lc$.
Then the following are equivalent:
\begin{enumerate}
\item $s \Rightarrow^*_{\Gc} z_1 \ldots z_n$;
\item $z_1, \ldots, z_n \to s$ is derivable from axiom $s \to s$, using only rules from $\mathbf{B}_{\Gc}$, without cut;
\item $\Psi_{\Gc}, z_1, \ldots, z_n \to s$ is derivable in~$\Lc$.
\item there exists such a bracketing $\Delta$ of $z_1, \ldots, z_n$ that the sequent $\Psi_{\Gc}$ is derivable in~$\Lc$;
\item the sequent ${!}B_1, \ldots, {!}B_N, z_1, \ldots, z_n \to s$ is derivable in $\EL$;
\item $z_1, \ldots, z_n \to s$ is derivable in $\Ld$ extended with rules from $\mathbf{B}'_\Gc$.
\end{enumerate}
\end{lemma}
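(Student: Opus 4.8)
The plan is to prove the six conditions equivalent by a single round-robin chain of implications $1 \Rightarrow 2 \Rightarrow 3 \Rightarrow 4 \Rightarrow 5 \Rightarrow 6 \Rightarrow 1$, mirroring the architecture of Lemma~\ref{Lm:undecRR} but routing the ``backwards'' direction through $\Ld$ extended with $\mathrm{B}'$-rules, where cut elimination is available (Lemma~\ref{Lm:LdBcut}), rather than directly through $\ELM$. The forward half ($1 \Rightarrow 2 \Rightarrow 3 \Rightarrow 4 \Rightarrow 5$) pushes a rewriting sequence into $\Lc$ and then into $\EL$; the backward half ($5 \Rightarrow 6 \Rightarrow 1$) reads a rewriting sequence back out of a cut-free derivation. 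Statement~2 (cut-free B-rule derivations from $s\to s$) carries the combinatorial core, while statement~6 ($\mathrm{B}'$-rules with cut) is the tool for the extraction.

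For $1 \Rightarrow 2$ I would induct on the length of $s \Rightarrow^*_\Gc z_1 \ldots z_n$. The base case $s \Rightarrow^* s$ is the axiom $s \to s$. For the step, suppose the last rewriting is $\eta\,v_1\ldots v_m\,\theta \Rightarrow_\Gc \eta\,w_1\ldots w_k\,\theta$ via a production $\pf$. Read top-down from the premise $\eta, v_1, \ldots, v_m, \theta \to s$, the seven B-rules realize this step as the rotation protocol sketched before the lemma: rule $(1_\pf)$ opens the procedure, attaching $\bep$ and switching the succedent to $\bap$; repeated $(2_\pf)$ rotate $\theta$ (in barred form) to the front, exposing $v_1, \ldots, v_m$ at the right end; the main rule $(3_\pf)$ performs the actual replacement, deleting $v_1, \ldots, v_m$ and inserting $\twp_1, \ldots, \twp_k$ at the front with succedent $\bbp$; applications of $(4_\pf)$ rotate $\eta$ away, $(5_\pf)$ converts $\bep$ into $\bfp$ (succedent $\bcp$); finally $(6_\pf)$ rotates the barred block back while unbarring every letter, and $(7_\pf)$ discards $\bfp$ and restores succedent $s$, yielding $\eta, w_1, \ldots, w_k, \theta \to s$. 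The resulting derivation is non-branching and cut-free.

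The middle chain is comparatively routine. For $2 \Rightarrow 3$ I would induct on the B-rule derivation, maintaining the invariant that $\Psi_\Gc, \Gamma \to C$ is derivable in $\Lc$ whenever $\Gamma \to C$ occurs in it; the base $s\to s$ becomes $\Psi_\Gc, s \to s$ (Definition~\ref{Df:intern}, item~\ref{It:internS}), and a single B-rule with associated B-formula $B_i = (t \SL (r \SL q_1 \ldots q_m)) \SL p_1 \ldots p_k$ is simulated by inserting $B_i$ through the `landing' rule (item~\ref{It:internLand}) and decomposing it by $\SL L$ and $\SL R$. Here the hypothesis that every formula of $\Psi_\Gc$ is a permutable $!$-formula is essential: it lets me permute $\Psi_\Gc$ past $B_i$ so that it lands in the left $\SL L$-premise, which then matches the inductive hypothesis $\Psi_\Gc, \Delta, q_1, \ldots, q_m \to r$, leaving only the axiom $t \to t$ on the other branch. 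The step $3 \Rightarrow 4$ is immediate via the trivial bracketing $\Delta = z_1, \ldots, z_n$. For $4 \Rightarrow 5$ I apply $\pi_q$-soundness to get $\pi_q(\Psi_\Gc), z_1, \ldots, z_n \to s$ in $\ELM$ (brackets are erased and $z_i, s \ne q$), then cut against item~\ref{It:internBack} of Definition~\ref{Df:intern}, namely $!B_1, \ldots, !B_N \to \prod \pi_q(\Psi_\Gc)$, after splitting the product by $\cdot L$; since $\ELM \subseteq \EL$ this is statement~5. Finally $5 \Rightarrow 6$ uses the deduction-theorem reading of $!$ (Proposition~\ref{Pr:dedEL}): $!B_1, \ldots, !B_N, z_1, \ldots, z_n \to s$ is $\EL$-derivable iff $z_1, \ldots, z_n \to s$ is derivable from the B-axioms $\Lambda \to B_i$; cut elimination and the subformula property confine this to the one-division fragment, so by Lemma~\ref{Lm:eqBB} it equals derivability in $\Ld$ extended with the $\mathrm{B}'$-rules.

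The crux, and the step I expect to be hardest, is $6 \Rightarrow 1$: extracting an honest rewriting sequence from a derivation in $\Ld + \mathbf{B}'_\Gc$. By Lemma~\ref{Lm:LdBcut} I may take the derivation of $z_1, \ldots, z_n \to s$ to be cut-free. Because the antecedent consists solely of variables and the succedent is the variable $s$, neither $\SL R$ nor $\SL L$ can be the lowermost rule, so it must be a $\mathrm{B}'$-rule; and the only one with succedent $s$ is an instance of $(7_\pf)$. Tracing upwards, the discipline imposed by the auxiliary succedents $\bap, \bbp, \bcp$ and the barred alphabet forces the derivation to follow the rotation protocol of $1 \Rightarrow 2$ in reverse, with no possibility of aborting midway or interleaving unrelated productions; in particular, the side premises $\Pi_i \to p_i$ of each $\mathrm{B}'$-rule collapse to axioms, since no B-rule produces a barred-letter succedent. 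The delicate part is precisely this rigidity argument: one must check that the special variables admit no alternative cut-free decomposition, so that the derivation is uniquely a concatenation of single-step simulations, from which $s \Rightarrow^*_\Gc z_1 \ldots z_n$ can be read off. This closes the cycle.
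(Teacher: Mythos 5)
Your chain $1 \Rightarrow 2 \Rightarrow \cdots \Rightarrow 6 \Rightarrow 1$ and the content of each link coincide with the paper's proof: the rotation protocol for $1 \Rightarrow 2$, the landing-rule simulation with permutation of the $!$-formulae of $\Psi_\Gc$ for $2 \Rightarrow 3$, the $\pi_q$-soundness-plus-cut argument for $4 \Rightarrow 5$, the elimination of the $!$-formulae followed by Lemma~\ref{Lm:eqBB} for $5 \Rightarrow 6$, and cut elimination via Lemma~\ref{Lm:LdBcut} for $6 \Rightarrow 1$ are all exactly what the paper does. There is, however, one concrete flaw in your $6 \Rightarrow 1$ argument.

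You assert that \emph{every} side premise $\Pi_i \to p_i$ of a $\mathrm{B}'$-rule in the cut-free derivation collapses to an axiom ``since no B-rule produces a barred-letter succedent.'' That justification covers $(1'_\pf)$--$(5'_\pf)$, whose left premises have succedents among $\bep$, $\bfp$, $\twp$, $\typ$; but the left premises of $(6'_\pf)$ have the form $\Pi \to y$ with $y \in N \cup \Sigma$ an \emph{unbarred} letter, and when $y = s$ such a premise can itself be the conclusion of an instance of $(7'_{\pf'})$, i.e.\ a nontrivial subderivation. Consequently the cut-free derivation is not, as you put it, ``uniquely a concatenation of single-step simulations'': the simulation blocks nest recursively inside the left premises of $(6'_\pf)$. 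The paper closes this by a case split (if $y_i \ne s$ then $\Delta_i = y_i$ is forced and the premise is an axiom; if $y_i = s$ the induction hypothesis gives $s \Rightarrow^*_\Gc \Delta_i$) and then appeals to the composition property of type-0 derivations: if $s \Rightarrow^*_\Gc y_1 \ldots y_{n'}$ and $y_i \Rightarrow^*_\Gc \Delta_i$ for each $i$, then $s \Rightarrow^*_\Gc \Delta_1 \ldots \Delta_{n'}$. Without this step the extraction of a rewriting sequence does not go through. (A smaller wrinkle: in $5 \Rightarrow 6$, passing from ``derivable from the B-axioms in $\EL$'' to ``derivable from them in $\Ld$'' needs the paper's explicit erasure argument on a cut-free $\EL$-derivation rather than a bare appeal to the subformula property, which is not available for derivations from hypotheses with cut.)
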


\begin{proof}
This proof shares much with the proof of Lemma~\ref{Lm:undecRR}. 

\fbox{$1 \Rightarrow 2$} 
Proceed by induction on the derivation of $z_1 \ldots z_n$ from $s$ in $\Gc$.
The base case, $s \Rightarrow^*_\Gc s$, corresponds to the $s \to s$ axiom. For the induction step, consider the last
production rule $\pf = (v_1 \ldots v_m  \Rightarrow w_1 \ldots w_k)$
 applied in the derivation: $$s \Rightarrow^*_\Gc z_1 \ldots z_i v_1 \ldots v_m z_j \ldots z_n \Rightarrow_\Gc
z_1 \ldots z_i w_1 \ldots w_k z_j \ldots z_n.$$ By induction hypothesis, the sequent
$z_1, \ldots, z_i, v_1, \ldots, v_m, z_j, \ldots, z_n \to s$ is derivable.  The necessary sequent
$z_1, \ldots, z_i, w_1, \ldots, w_k, z_j, \ldots, z_n \to s$ is now derived as follows:
$$
\infer[(7_\pf)]{z_1, \ldots, z_i, w_1, \ldots, w_k, z_j, \ldots, z_n \to s}
{\infer=[(6_\pf)]{z_1, \ldots, z_i, w_1, \ldots, w_k, z_j, \ldots, z_n, \bfp \to \bcp}
{\infer[(5_\pf)]{\bfp, \tzp_1, \ldots, \tzp_i, \twp_1, \ldots, \twp_k, \tzp_j, \ldots, \tzp_n \to \bcp}
{\infer=[(4_\pf)]{\tzp_1, \ldots, \tzp_i, \twp_1, \ldots, \twp_k, \tzp_j, \ldots, \tzp_n, \bep \to \bbp}
{\infer[(3_\pf)]{\twp_1, \ldots, \twp_k, \tzp_j, \ldots, \tzp_n, \bep, z_1, \ldots, z_i \to \bbp}
{\infer=[(2_\pf)]{\tzp_j, \ldots, \tzp_n, \bep, z_1, \ldots, z_i, v_1, \ldots, v_m \to \bap}
{\infer[(1_\pf)]{\bep, z_1, \ldots, z_i, v_1, \ldots, v_m, z_j, \ldots, z_n \to \bap}
{z_1, \ldots, z_i, v_1, \ldots, v_m, z_j, \ldots, z_n \to s}}}}}}}
$$

\fbox{$2 \Rightarrow 3$}
Proceed by induction on derivation. 
The base case, $\Phi, s \to s$, is derivable by item~1 of Definition~\ref{Df:intern}. The induction step,
{\em i.e.,} application of a B-rule of the form
$$
\infer[B]{p_1, \ldots, p_k, \Delta \to t}{\Delta, q_1, \ldots, q_m \to r}
$$
is handled using the `landing' rule (item~2 of Definition~\ref{Df:intern}) as follows:

{\small
$$
\infer[\mathrm{land}]{\Psi_{\Gc}, p_1, \ldots, p_k, \Delta \to t}
{\infer=[\SL L]{\Psi_{\Gc}, (t \SL (r \SL q_1 \ldots q_m)) \SL p_1 \ldots p_k, p_1, \ldots, p_k, \Delta \to t}
{p_1 \to p_1 & \ldots & p_k \to p_k & 
\infer=[!P_2\mbox{ applied to formulae of $\Psi_{\Gc}$}]{\Psi_{\Gc}, t \SL (r \SL q_1 \ldots q_m), \Delta \to t}
{\infer[\SL L]{t \SL (r \SL q_1 \ldots q_m), \Psi_{\Gc}, \Delta \to t}
{\infer=[\SL R]{\Psi_{\Gc}, \Delta \to r \SL q_1 \ldots q_m}{\Psi_{\Gc}, \Delta, q_1, \ldots, q_m \to r} & t \to t}}}}
$$}
Notice that here we essentially used the permutation rules for formulae of $\Psi_{\Gc}$.

\fbox{$3 \Rightarrow 4$} Obvious: take the trivial (empty) bracketing $\Delta = z_1, \ldots, z_n$.

\fbox{$4 \Rightarrow 5$} is handled exactly as in Lemma~\ref{Lm:undecRR}. 

\fbox{$5 \Rightarrow 6$} Consider a cut-free derivation of ${!}B_1, \ldots, {!}B_N, z_1, \ldots, z_n$ in $\EL$
and erase all $!$-formulae from it. Then applications of structural rules for ${!}$ become trivial, and $!L$ transforms into
$$
\infer{\Delta_1, \Delta_2 \to A}{\Delta_1, B, \Delta_2 \to A}
$$
where $B$ is a B-formula from $\Bc_\Gc$. This is equivalent to cut with the B-axiom $\Lambda \to B$. 

Thus, $z_1, \ldots, z_n$ is derivable in $\Ld$ extended by the set of B-axioms obtained from $\Gc$ and, by Lemma~\ref{Lm:eqBB}, in the corresponding extension by $\mathrm{B}'$-rules,
$\mathbf{B}'_\Gc$.

\fbox{$6 \Rightarrow 1$}
The  extension of the Lambek calculus with $\mathbf{B}'_\Gc$, admits
cut elimination (Lemma~\ref{Lm:LdBcut}), and in the cut-free derivation the only rules that can be applied
are $\mathrm{B}'$-rules.

Proceed by induction on this derivation. The base case is the $s \to s$ axiom, and we have $s \Rightarrow_\Gc^* s$.
For the induction step, let us go upwards along the derivation, turning right at each application of a $\mathrm{B}'$-rule, and trace the succedent:
$$
\xymatrix{
s \ar[r]_{(7'_\pf)} &
\ar@(ul,ur)^{(6'_\pf)} \bcp \ar[r]_{(5'_\pf)} &
\ar@(ul,ur)^{(4'_\pf)} \bbp \ar[r]_{(3'_\pf)} &
\ar@(ul,ur)^{(2'_\pf)} \bap \ar[r]_{(1'_\pf)} &
s 
}
$$
(Since variables $\bap$, $\bbp$, and $\bcp$ could never appear in antecedents,
the derivation cannot stop at an axiom of the form $\bap \to \bap$ or alike.)

Essentially, as we shall see below, once we started with $(7'_\pf)$, we fix the production rule $\pf$
and perform, as a whole, the block of $\mathrm{B}'$-rules which emulates application of $\pf$
(as the last production rule in the derivation). Then we return to a sequent of the form $\Delta \to s$,
ready to perform our backtracking further.

Variables $\bdp$, $\bep$, $\bfp$, and $\typ$ ($y \in N \cup \Sigma$) are never
 succedents of conclusions of rules from $\mathbf{B}'_\Gc$. Therefore, left premises of the rules
$(1'_\pf)$--$(5'_\pf)$, which are of the form $\Pi_i \to p_i$, where $p_i$ is one of the
aforementioned variables, could only be axioms $p_i \to p_i$. This means that
$(1'_\pf)$--$(5'_\pf)$ actually transform into the corresponding B-rules,
$(1_\pf)$--$(5_\pf)$. As for $(7'_\pf)$, it already coincides with $(7_\pf)$.

Thus, the bottom of our derivation looks as follows, where $\Delta_1, \ldots, \Delta_{n'} = z_1, \ldots, z_n$:

$$
\infer[(7_\pf)]
{\Delta_1, \ldots, \Delta_{n'} \to s}
{\infer=[(6'_\pf)]{\Delta_1, \ldots, \Delta_{n'}, \bfp \to \bcp}
{\Delta_1 \to y_1 & \ldots & \Delta_{n'} \to y_{n'} &
\infer[(5_\pf)]{\bfp, \typ_1, \ldots, \typ_i, \twp_1, \ldots, \twp_k, \typ_j, \ldots, \typ_{n'} \to \bcp}
{\infer=[(4_\pf)]{\typ_1, \ldots, \typ_i, \twp_1, \ldots, \twp_k, \typ_j, \ldots, \typ_{n'}, \bep \to \bbp}
{\infer[(3_\pf)]{\twp_1, \ldots, \twp_k, \typ_j, \ldots, \typ_{n'}, \bep, y_1, \ldots, y_i \to \bbp}
{\infer=[(2_\pf)]{\typ_j, \ldots, \typ_{n'}, \bep, y_1, \ldots, y_i, v_1, \ldots, v_m \to \bap}
{\infer[(1_\pf)]{\bep, y_1, \ldots, y_i, v_1, \ldots, v_m, y_j, \ldots, y_{n'} \to \bap}
{y_1, \ldots, y_i, v_1, \ldots, v_m, y_j, \ldots, y_{n'} \to s}}}}}}}
$$

Consider sequents of the form $\Delta_i \to y_i$ (left premises); $w_1, \ldots, w_k$ are also $y$'s. If $y_i \ne s$, then it could not be the succedent
of the conclusion of a rule from $\mathbf{B}'_\Gc$, therefore $\Delta_i = y_i$ and this is just an axiom. If $y_i = s$,
then by induction hypothesis we have $\Delta_i$ derivable from $s$ in $\Gc$. Thus, in both cases\footnote{This part can
be simplified a bit by modifying $\Gc$. Namely, we could introduce a new starting symbol $s'$ with a rule
$s' \Rightarrow s$. The language generated by $\Gc$ will not change. After this transformation, the starting symbol $s'$ will never
appear in the derivation, except for its start, and therefore there would be always $y_i \ne s'$, and $\Delta_i = y_i$.}
 $\Gc$ derives
$\Delta_i$ from $y_i$.

By induction hypothesis we have $s \Rightarrow_\Gc^* y_1 \ldots y_i v_1 \ldots v_m y_j \ldots y_{n'}$, and since
$\pf = (v_1 \ldots v_m \Rightarrow w_1 \ldots w_k)$ is a production rule of $\Gc$ (the form of $\pf$ is taken from
$(3_\pf)$), $s \Rightarrow_\Gc^* y_1 \ldots y_i w_1 \ldots w_k y_j \ldots y_{n'}$. Finally, we recall a well-known property of derivations in type-0 grammars: if $s \Rightarrow_\Gc^* y_1 \dots y_{n'}$ ($w_i$'s are also part of $y_j$'s) and $y_i \Rightarrow_\Gc^* \Delta_i$ for each $i$, then $s \Rightarrow^*_\Gc \Delta_1 \ldots \Delta_{n'} = z_1 \ldots z_n$.
\qed
\end{proof}

This lemma yields results on complexity and generative power of categorial grammars for one-division fragments, exactly as Lemma~\ref{Lm:undecRR} does in the
general case. 

\begin{theorem}
The derivability problems for one-division fragments (that is, fragments including $\SL$, ${!}$, brackets and bracket modalities) of
$\ELM$, $\LLsM$, $\LsysAfl$, $\LsysAa$, and $\LsysA$ are undecidable.
\end{theorem}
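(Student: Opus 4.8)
The plan is to derive the result from the Buszkowski-style key lemma (Lemma~\ref{Lm:undecRRBuszko}) together with the undecidability of type-0 membership (Theorem~\ref{Th:Markov_Gc}), in exact parallel to how the full-language undecidability results were obtained from Lemma~\ref{Lm:undecRR}. Concretely, for each of the five calculi $\Lc \in \{\ELM, \LLsM, \LsysAfl, \LsysAa, \LsysA\}$ I would exhibit, for an arbitrary type-0 grammar $\Gc$, a meta-formula $\Psi_\Gc$ that internalises the set $\Bc_\Gc$ of one-division B-formulae in the sense of Definition~\ref{Df:intern}, while meeting the extra hypotheses of Lemma~\ref{Lm:undecRRBuszko}: every member of $\Psi_\Gc$ must be a ${!}$-formula for which the permutation rules are available, and $\Lc$ must be $\pi_q$-sound in $\ELM$ and admit $\SL L$ and $\SL R$. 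These last properties are immediate. Each B-formula $B = (t \SL (r \SL q_1 \ldots q_m)) \SL p_1 \ldots p_k$ lies in the one-division language; all five calculi admit $\SL L$ and $\SL R$ without Lambek's restriction; and $\pi_q$-soundness in $\ELM$ holds trivially for the bracket-free $\ELM$ and $\LLsM$ (substituting $\U$ for $q$ preserves $\ELM$-derivability, and $\LLsM \subseteq \ELM$) and is already established for the Morrill systems and their additive-free fragments, with the stoup-free versions inheriting it through Proposition~\ref{Prop:nostoupA}.

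The internalising formulae are the one-division specialisations of the constructions already in the paper. For $\ELM$ I take $\Psi_\Gc = {!}B_1, \ldots, {!}B_N$ (this is Proposition~\ref{Pr:dedEL} restricted to one division); every member is a ${!}$-formula and permutation is available. For the bracketed $\LsysAfl$ I reuse Proposition~\ref{Pr:internA} verbatim with each $A_i$ replaced by the one-division B-formula $B_i$, obtaining $\Psi_\Gc = {!}((s \SL s) \SL {!}\NMod B_1), {!}\NMod B_1, \ldots, {!}((s\SL s)\SL{!}\NMod B_N), {!}\NMod B_N$, which uses only $\SL$, ${!}$, and $\NMod$ and whose members are all ${!}$-formulae. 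For $\LsysAa$ and $\LsysA$ the same $\Psi_\Gc$ serves: by Proposition~\ref{Prop:nostoupA} the stoup calculus $\LsysAa$ agrees with $\LsysAfl$ on stoup-free sequents, and, exactly as argued in the full-language case, $\LsysA$ is indistinguishable from $\LsysAa$ from the point of view of this internalisation (the ${!}R$ rule is never invoked in item~\ref{It:internS}, and no foreign ${!}$-formula is moved into the freshly created island in the landing rule of item~\ref{It:internLand}).

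The one genuine point of care, and the main obstacle, is $\LLsM$. The standard relevant internaliser $(s\SL s)\SL{!}B_i, {!}B_i, \ldots$ used earlier fails the hypotheses of Lemma~\ref{Lm:undecRRBuszko}, because its neutraliser $(s\SL s)\SL{!}B_i$ is \emph{not} a ${!}$-formula and hence cannot be permuted by ${!}P$, whereas the Buszkowski rotation rules $(2_\pf)$, $(4_\pf)$, $(6_\pf)$ force $\Psi_\Gc$ to commute freely past the cyclically moved material. The remedy is the bracket-free analogue of Proposition~\ref{Pr:internA}, namely $\Psi_\Gc = {!}((s\SL s)\SL{!}B_1), {!}B_1, \ldots, {!}((s\SL s)\SL{!}B_N), {!}B_N$, in which every member is now a ${!}$-formula. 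I would then check the three items of Definition~\ref{Df:intern} for this formula in $\LLsM$: item~\ref{It:internS} by ${!}L$ followed by $\SL L$ against the axioms ${!}B_i \to {!}B_i$ and collapsing the $s \SL s$ prefixes onto $s$ (no weakening needed, so relevant-compatible); item~\ref{It:internLand} by contracting the spare ${!}B_i$ via ${!}C$, permuting it into position, and applying ${!}L$; and item~\ref{It:internBack} in $\ELM$ by deriving $\Lambda \to {!}((s\SL s)\SL{!}B_i)$ with internal weakening and assembling the product by $\cdot R$, using that $\pi_q(B_i) = B_i$ since no B-formula mentions $q$. Once all five internalisers are in place, the equivalence of items (1) and (3) in Lemma~\ref{Lm:undecRRBuszko} reduces the undecidable membership problem of the grammar from Theorem~\ref{Th:Markov_Gc} to derivability of the one-division sequent $\Psi_\Gc, z_1, \ldots, z_n \to s$ over a fixed $\Psi_\Gc$, which proves undecidability of the derivability problem for the one-division fragment of each $\Lc$.
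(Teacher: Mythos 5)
Your proposal is correct and follows essentially the same route as the paper: Lemma~\ref{Lm:undecRRBuszko} applied to the internalisers already constructed (Proposition~\ref{Pr:dedEL} for $\ELM$, Proposition~\ref{Pr:internA} for $\LsysAfl$, transferred to $\LsysAa$ and $\LsysA$ via Proposition~\ref{Prop:nostoupA} and the ``indistinguishability'' argument), combined with Theorem~\ref{Th:Markov_Gc}. Your explicit repair of the $\LLsM$ internaliser --- replacing $(s \SL s) \SL {!}B_i$ by the ${!}$-formula ${!}((s \SL s) \SL {!}B_i)$ so that the permutation hypothesis of Lemma~\ref{Lm:undecRRBuszko} is met --- is a point the paper leaves implicit, and your verification of the three internalisation conditions for it is sound.
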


\begin{theorem}
For any r.e. language $M$ and for each of the calculi mentioned in the previous theorem there exists a categorial grammar for $M$ based on the
given calculus. For bracketed systems, such a grammar both s-recognises and t-recognises $M$.
\end{theorem}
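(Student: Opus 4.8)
The plan is to re-run the argument of Theorem~\ref{Th:gram_generic}, but with Lemma~\ref{Lm:undecRR} replaced by its Buszkowski-style counterpart Lemma~\ref{Lm:undecRRBuszko}, and with the internalisation-into-the-goal step, which there used $\BS R$ together with a product, replaced by a purely $\SL$-based one. Fix a type-$0$ grammar $\Gc$ generating $M$, and let $\Psi_\Gc = G_1, \ldots, G_m$ be the meta-formula internalising $\Bc_\Gc = \{B_1, \ldots, B_N\}$ in the calculus $\Lc$ under consideration. The crucial observation is that for every calculus named in the previous theorem the internaliser provided by the relevant internalisation result already lives in the one-division language with $\SL$, ${!}$ and the bracket modalities, and consists \emph{solely} of ${!}$-formulae to which the permutation rules apply: for $\ELM$ one takes $\Psi_\Gc = {!}B_1, \ldots, {!}B_N$; for $\LLsM$ the all-${!}$ variant $\Psi_\Gc = {!}((s\SL s)\SL{!}B_1), {!}B_1, \ldots, {!}((s\SL s)\SL{!}B_N), {!}B_N$; and for the $\LsysA$-family precisely the meta-formula of Proposition~\ref{Pr:internA}. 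This is exactly the extra hypothesis demanded by Lemma~\ref{Lm:undecRRBuszko}.

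I would take as goal formula $H = s \SL G_1 \ldots G_m$ in the iterated-division shorthand, and establish the three-way equivalence: (i) $s \Rightarrow^*_{\Gc} a_1 \ldots a_n$; (ii) $a_1, \ldots, a_n \to H$ is derivable in $\Lc$; (iii) $\Delta \to H$ is derivable in $\Lc$ for some bracketing $\Delta$ of $a_1, \ldots, a_n$. For (i)$\Rightarrow$(ii), Lemma~\ref{Lm:undecRRBuszko} yields $\Psi_\Gc, a_1, \ldots, a_n \to s$; since every $G_j$ is a permutable ${!}$-formula and everything sits at the top bracket level, I permute $G_1, \ldots, G_m$ past $a_1, \ldots, a_n$ to the right, obtaining $a_1, \ldots, a_n, G_1, \ldots, G_m \to s$, and then apply $\SL R$ exactly $m$ times to swallow them into the succedent, reaching $a_1, \ldots, a_n \to H$. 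Because none of the listed calculi carries Lambek's restriction, these $\SL R$ steps are legal even when $n = 0$, so the empty word is covered. The implication (ii)$\Rightarrow$(iii) is witnessed by the trivial (flat) bracketing.

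For the backward implication (iii)$\Rightarrow$(i) I would pass into $\ELM$ through $\pi_q$-soundness. Since $\pi_q$ erases all brackets, $\pi_q(\Delta) = a_1, \ldots, a_n$ and $\pi_q(H) = s \SL \pi_q(G_1) \ldots \pi_q(G_m)$, so $a_1, \ldots, a_n \to \pi_q(H)$ is derivable in $\ELM$. There $\SL R$ is invertible by cut, exactly as in the proof of Theorem~\ref{Th:gram_generic}, so I peel off the $m$ divisions to reach $a_1, \ldots, a_n, \pi_q(G_1), \ldots, \pi_q(G_m) \to s$, permute the ${!}$-formulae back to the front, fold them into $\prod \pi_q(\Psi_\Gc)$ by $\cdot L$, and cut against item~\ref{It:internBack} of Definition~\ref{Df:intern} to obtain ${!}B_1, \ldots, {!}B_N, a_1, \ldots, a_n \to s$ in $\ELM$. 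The $5 \Rightarrow 1$ direction of Lemma~\ref{Lm:undecRRBuszko} then delivers $s \Rightarrow^*_{\Gc} a_1 \ldots a_n$. The grammar is the one of Theorem~\ref{Th:gram_generic}: identity lexicon $\rhd = \{\langle a, a\rangle \mid a \in \Sigma\}$ and single goal type $H$; the equivalence (i)$\Leftrightarrow$(ii) says it s-recognises exactly $M$, and (i)$\Leftrightarrow$(iii) that it t-recognises exactly $M$, while for the bracket-free $\ELM$ and $\LLsM$ the two notions coincide. The stouped systems $\LsysAa$ and $\LsysA$ are reduced to the stoup-free $\LsysAfl$ by transferring derivability of the (stoup-empty) goal sequent through Proposition~\ref{Prop:nostoupA}, handling $\LsysA$ by the same observation as in the undecidability argument that it is interchangeable with $\LsysAa$ for this construction.

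The main obstacle, and the genuinely new point compared with Theorem~\ref{Th:gram_generic}, is the absence of $\BS$: the compact internalisation $\prod\Phi_\Gc \BS s$ is no longer a formula of the language, so the context must instead be driven to the right-hand end of the antecedent and absorbed by $\SL R$. This manoeuvre is sound precisely because $\Psi_\Gc$ was arranged to consist only of permutable ${!}$-formulae, which is the reason Lemma~\ref{Lm:undecRRBuszko} was formulated with that hypothesis. The one delicate bookkeeping point is the t-recognition case: the forward direction must use the flat bracketing, since otherwise the ${!}$-formulae of $\Psi_\Gc$ could not be permuted to the right without illegally crossing a bracket, whereas soundness of the backward direction is recovered only after $\pi_q$ has discarded the bracketing altogether.
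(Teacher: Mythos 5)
Your proposal is correct and follows the route the paper itself indicates (re-running Theorem~\ref{Th:gram_generic} with Lemma~\ref{Lm:undecRRBuszko} in place of Lemma~\ref{Lm:undecRR}); the paper leaves the details implicit, and your replacement of the goal type $\prod\Phi_\Gc \BS s$ by the product- and $\BS$-free type $s \SL G_1 \ldots G_m$, justified by permuting the all-${!}$ internaliser to the right end of the (bracket-free, top-level) antecedent before applying $\SL R$, is exactly the adaptation needed to keep the grammar inside the one-division language. The remaining steps --- invertibility of $\SL R$ in $\ELM$ via cut, passage through $\pi_q$-soundness, the cut against item~\ref{It:internBack} of Definition~\ref{Df:intern}, and the reduction of the stouped systems $\LsysAa$ and $\LsysA$ to $\LsysAfl$ --- match the paper's argument for Theorem~\ref{Th:gram_generic} and its treatment of $\LsysA$ in Section~\ref{S:undec}.
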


For $\sysBb$ and $\sysBr$, however, we cannot directly use the internalisation given by Proposition~\ref{Pr:internB},
since the meta-formula $\Phi$ used there includes the product connective. Also, $\Phi$ includes $[[q]]$, which is not
a $!$-formula and does not allow permutation, as required in Lemma~\ref{Lm:undecRRBuszko}.

We overcome this issue by slightly modifying the notion of internalisation and proving a new version of Proposition~\ref{Pr:internB}.

\begin{definition}\label{Df:Vintern}
Let $\Bc = \{ B_1, \ldots, B_N \}$ be a finite set of formulae and $\mathcal{V} \subseteq \Var$ be a finite set of
variables. 
A meta-formula $\Psi$ {\em $\mathcal{V}$-internalises} $\Bc$ in the calculus $\Lc$, if the following holds:
\begin{enumerate}
\item the sequent $\Psi, s \to s$ is derivable in $\Lc$;
\item the following `$t$-landing' rule is admissible in $\Lc$ for any $t \in \mathcal{V}$:
$$
\infer[\mathrm{land}_t,\ B_i \in \Bc]{\Psi, \Delta_1, B_i, \Delta_2 \to t}{\Psi, \Delta_1, \Delta_2 \to t}
$$
\item the sequent ${!}B_1, \ldots, {!}B_N \to \prod \pi_q(\Psi)$ is derivable in $\ELM$.
\end{enumerate}
\end{definition}

The new notion of $\mathcal{V}$-internalisation differs from the original notion of internalisation (Definition~\ref{Df:intern}) in item~2.
This item is formulated in a weaker form: we restrict the antecedents of sequents in the `landing' rule by a finite set
$\mathcal{V}$ of variables. The key observation is that $\mathcal{V}$-internalisation, where $\mathcal{V}$ is the set of all variables
used in $\Bc$, is already sufficient for Lemma~\ref{Lm:undecRRBuszko}. Thus, now we only have to prove the $\mathcal{V}$-internalisation property for
$\LsysBa$.

\begin{proposition}
Let $\mathcal{V} = \{ t_1, \ldots, t_m \}$ be a finite set of variables and $\Bc = \{ B_1, \ldots, B_N \}$ be
a finite set of Lambek formulae. Then the following meta-formula
$$
\Psi_{\Bc,\mathcal{V}} = {!}((s \SL s) \SL {!}Z_{1,1}), {!}Z_{1,1}, \ldots, {!}((s \SL s) \SL {!}Z_{m,N}), {!}Z_{m,N},
{!}((s \SL s) \SL \PMod\PMod q),  {!}\PMod\PMod q,
$$
where 
$$
Z_{i,j} = (\NMod (t_j \SL ((t_j \SL {!}B_i) \SL {!}\PMod\PMod q))) \SL q,
$$
$\mathcal{V}$-internalises $\Bc$ in $\LsysBfl$ and in $\LsysBrfl$.
\end{proposition}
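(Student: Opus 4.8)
The plan is to verify the three clauses of Definition~\ref{Df:Vintern} for $\Psi_{\Bc,\mathcal{V}}$, following the pattern of Proposition~\ref{Pr:internB} but accommodating two changes forced by Lemma~\ref{Lm:undecRRBuszko}. First, the raw double bracket $[[q]]$ used in Proposition~\ref{Pr:internB} is replaced by the ${!}$-formula ${!}\PMod\PMod q$, so that \emph{every} formula of $\Psi_{\Bc,\mathcal{V}}$ is a ${!}$-formula and hence permutable (as Lemma~\ref{Lm:undecRRBuszko} demands); the bracket is then produced on demand by peeling ${!}\PMod\PMod q$. Second, the product ${!}A_i \cdot \PMod\PMod q$ is replaced by its one-division surrogate $t_j \SL ((t_j \SL {!}B_i) \SL {!}\PMod\PMod q)$, whose decoding will be driven by the succedent $t_j$.

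Clause~1, derivability of $\Psi_{\Bc,\mathcal{V}}, s \to s$, goes exactly as in Proposition~\ref{Pr:internB}: I would apply ${!}L$ to expose each guard $(s \SL s) \SL {!}Z_{i,j}$ and $(s\SL s)\SL\PMod\PMod q$ and to turn the trailing ${!}\PMod\PMod q$ into $\PMod\PMod q$; then discharge each guard by ${\SL}L$ against its right neighbour (the axioms ${!}Z_{i,j} \to {!}Z_{i,j}$ and $\PMod\PMod q \to \PMod\PMod q$, the latter now a plain identity with no need for $\PMod R$), leaving a chain $s \SL s, \dots, s \SL s, s \to s$ that collapses by repeated ${\SL}L$. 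Every antecedent is non-empty, so this works verbatim in $\LsysBrfl$. Clause~2, the landing rule, is the heart of the argument; here I would establish its admissibility by exhibiting, for fixed $t_j \in \mathcal{V}$ and $B_i$, a derivation of $\Psi_{\Bc,\mathcal{V}}, \Delta_1, \Delta_2 \to t_j$ from $\Psi_{\Bc,\mathcal{V}}, \Delta_1, B_i, \Delta_2 \to t_j$ (the direction used in Lemma~\ref{Lm:undecRRBuszko}). First recover a double bracket by peeling the trailing ${!}\PMod\PMod q$ via ${!}L$ and two $\PMod L$'s into $[[q]]$; contract the matching ${!}Z_{i,j}$ into it by ${!}C$; unfold the new copy by ${!}L$, by ${\SL}L$ against the inner $q$, and by $\NMod L$, leaving $D' = t_j \SL ((t_j \SL {!}B_i) \SL {!}\PMod\PMod q)$ in the main antecedent.

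Next I would permute the purely-${!}$ prefix $\Psi''$ (everything of $\Psi_{\Bc,\mathcal{V}}$ except the trailing formula) past $D'$ to the right, then decode $D'$ by one ${\SL}L$ whose right premise closes as the axiom $t_j \to t_j$ and whose left premise $\to (t_j \SL {!}B_i)\SL{!}\PMod\PMod q$ is produced by two ${\SL}R$'s; this releases ${!}\PMod\PMod q$ and ${!}B_i$ at the right end. Finally, because both are ${!}$-formulae, permute ${!}\PMod\PMod q$ back to the tail of $\Psi''$ (restoring $\Psi_{\Bc,\mathcal{V}}$ intact) and ${!}B_i$ into the gap between $\Delta_1$ and $\Delta_2$, and apply ${!}L$ to obtain $B_i$. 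The two ${\SL}R$'s act on antecedents containing the non-empty $\Psi''$, and the ${!}C$ contracts into a bracket still holding $q$, so Lambek's restriction is respected and the construction is uniform for $\LsysBfl$ and $\LsysBrfl$. The \textbf{main obstacle} lives precisely in the decoding of $D'$: unlike the product rule $\cdot L$ of Proposition~\ref{Pr:internB}, the division surrogate can be unpacked only when the succedent is exactly $t_j$, since the right premise of the outer ${\SL}L$ must close as $t_j \to t_j$. This is the reason we index $Z_{i,j}$ over all of $\mathcal{V}$ and settle for $\mathcal{V}$-internalisation—the landing rule is available for succedent $t$ only when $t \in \mathcal{V}$, which, as the remark before the statement notes, suffices for Lemma~\ref{Lm:undecRRBuszko}, where every succedent is one of finitely many variables. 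One must also verify that the regenerated ${!}\PMod\PMod q$ leaves $\Psi_{\Bc,\mathcal{V}}$ invariant across the rule and that it is the permutability of ${!}B_i$ that carries it to the correct landing site before ${!}L$ strips the modality.

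Clause~3 is routine and lives entirely in $\ELM$. Computing $\pi_q$ sends every $\PMod\PMod q$ to $\U$, drops $\NMod$, and replaces $q$ by $\U$, so $\prod \pi_q(\Psi_{\Bc,\mathcal{V}})$ is a product of ${!}$-formulae: the guards ${!}((s\SL s)\SL{!}W_{i,j})$ and ${!}((s\SL s)\SL\U)$, the bodies ${!}W_{i,j}$ with $W_{i,j} = (t_j \SL ((t_j \SL {!}B_i)\SL{!}\U))\SL\U$, and ${!}\U$. Each guard and ${!}\U$ is derived from $\Lambda$ using ${!}W$, $\U L$, ${\SL}R$, and ${!}R$, just as in Proposition~\ref{Pr:internB}; each body ${!}W_{i,j}$ is derived from ${!}B_i$ by ${!}R$, ${\SL}R$, $\U L$, and a pair of ${\SL}L$'s, one of which consumes a permuted copy of ${!}B_i$ against $t_j \SL {!}B_i$ while $\Lambda \to {!}\U$ fills the ${!}\U$ slot. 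Combining the factors by $\cdot R$, after duplicating each ${!}B_i$ by contraction to feed its $m$ bodies, yields ${!}B_1, \dots, {!}B_N \to \prod \pi_q(\Psi_{\Bc,\mathcal{V}})$ in $\ELM$ and completes the verification.
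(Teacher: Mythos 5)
Your proposal is correct and follows essentially the same route as the paper's own proof: peel ${!}\PMod\PMod q$ into $[[q]]$, contract ${!}Z_{i,j}$ into the bracket, unfold it via ${!}L$, ${\SL}L$ against $q$ and $\NMod L$, permute the ${!}$-prefix aside, decode $t_j \SL ((t_j \SL {!}B_i) \SL {!}\PMod\PMod q)$ by ${\SL}L$ with $t_j \to t_j$ and two ${\SL}R$'s, then permute ${!}\PMod\PMod q$ and ${!}B_i$ back into place; clauses 1 and 3 are likewise handled as in Proposition~\ref{Pr:internB} with the same adjustments. Your explanation of why the decoding forces the succedent to lie in $\mathcal{V}$, and your checks of Lambek's restriction (non-empty antecedents for ${\SL}R$, the $q$ kept inside the bracket for ${!}C$), match the paper's treatment.
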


\begin{proof}

For short, denote $\Psi_{\Bc,\mathcal{V}}$ by just $\Psi$. Item~1 of Definition~\ref{Df:Vintern} is checked exactly as in Proposition~\ref{Pr:internB}.

For item~2, let us check the $t$-landing rule for $t = t_j \in \mathcal{V}$ and $B_i \in \Bc$.
Let $\Psi = \Psi', {!}\PMod\PMod q$.

{\small
$$
\infer[!L]{\Psi', {!}\PMod\PMod q, \Delta_1, \Delta_2 \to t_j}
{\infer[\PMod L]{\Psi', \PMod\PMod q, \Delta_1, \Delta_2 \to t_j}
{\infer[\PMod L]{\Psi', [\PMod q], \Delta_1, \Delta_2 \to t_j}
{\infer[!C\mbox{ applied to ${!}Z_{i,j}$}]
{\Psi', [[q]], \Delta_1, \Delta_2 \to t_j}
{\infer[!L]{\Psi', [!((\NMod (t_j \SL ((t_j \SL {!}B_i) \SL ({!}\PMod\PMod q)))) \SL q), q], \Delta_1, \Delta_2 \to t_j}
{\infer[\SL L]{\Psi', [(\NMod (t_j \SL ((t_j \SL {!}B_i) \SL ({!}\PMod\PMod q)))) \SL q, q], \Delta_1, \Delta_2 \to t_j}
{q \to q & \infer[\NMod L]{\Psi', [\NMod (t_j \SL ((t_j \SL {!}B_i) \SL ({!}\PMod\PMod q))))], \Delta_1, \Delta_2 \to t_j}
{\infer=[!P_1\mbox{ applied to formulae of $\Psi'$}]{\Psi', t_j \SL ((t_j \SL {!}B_i) \SL ({!}\PMod\PMod q))), \Delta_1, \Delta_2 \to t_j}
{\infer[\SL L]{t_j \SL ((t_j \SL {!}B_i) \SL ({!}\PMod\PMod q))), \Psi', \Delta_1, \Delta_2 \to t_j}
{\infer[\BS R]{\Psi', \Delta_1, \Delta_2 \to (t_j \SL {!}B_i) \SL ({!}\PMod\PMod q)}
{\infer[\BS R]{\Psi', \Delta_1, \Delta_2, {!} \PMod\PMod q \to t_j \SL {!}B_i}
{\infer[!P_2]{\Psi', \Delta_1, \Delta_2, {!}\PMod\PMod q, {!}B_i \to t_j}
{\infer[!L]{\Psi', \Delta_1, {!}B_i, \Delta_2, {!}\PMod\PMod q \to t_j}
{\infer[!P_2]{\Psi', \Delta_1, B_i, \Delta_2, {!}\PMod\PMod q \to t_j}
{\Psi', {!}\PMod\PMod q, \Delta_1, B_i, \Delta_2 \to t_j
}}} }}
& t_j \to t_j}}}}}}
}}}
$$
}

Finally, let us check item~3. The $\pi_q$-projection of $\Psi$ includes the following formulae (the order does not matter due to permutation rules):
\begin{enumerate}
\item ${!}((s \SL s) \SL {!} \pi_q(Z_{i,j}))$;
\item ${!} Z_{i,j} = {!}((t_j \SL ((t_j \SL {!}B_i) \SL {!}\U))) \SL\U)$;
\item ${!} ((s \SL s) \SL \U)$ and ${!}\U$.
\end{enumerate}

We enjoy the following derivations in $\ELM$:
$$
\infer[{!} R]{\Lambda \to {!}((s \SL s) \SL {!} \pi_q(Z_{i,j}))}
{\infer[\SL R]{\Lambda \to (s \SL s) \SL {!} \pi_q(Z_{i,j})}
{\infer[{!} W]{{!} \pi_q(Z_{i,j}) \to s \SL s}{\infer[\SL R]{\Lambda \to s \SL s}{s \to s}}}}
\qquad
\infer [{!} R]{\Lambda \to {!}((s \SL s) \SL \U)}
{\infer[\SL R]{\Lambda \to (s \SL s) \SL \U}{\infer[\U L]{\U \to s \SL s}
{\infer[\SL R]{\Lambda \to s \SL s}{s \to s}}}}
$$
$$
\infer[{!} R]{{!}B_i \to {!}((t_j \SL ((t_j \SL {!}B_i) \SL {!}\U)) \SL\U)}
{\infer[\SL R]{{!}B_i \to (t_j \SL ((t_j \SL {!}B_i) \SL {!}\U)) \SL\U}
{\infer[\U L]{{!}B_i, \U \to t_j \SL ((t_j \SL {!}B_i) \SL {!}\U)}
{\infer[\SL R]{{!}B_i \to t_j \SL ((t_j \SL {!}B_i) \SL {!}\U)}
{\infer[\SL L]{{!}B_i, (t_j \SL {!}B_i) \SL {!}\U \to t_j}
{\infer[{!} R]{\Lambda \to {!}\U}{\Lambda\to\U} & \infer[{!}P]{{!}B_i, t_j \SL {!}B_i \to t_j}{\infer[\SL L]{t_j \SL {!}B_i, {!}B_i \to t_j}
{{!}B_i \to {!}B_i & t_j \to t_j}}}}}}}
$$
By $\cdot R$, we derive ${!}B_1, \ldots, {!}B_N, \ldots, {!}B_1, \ldots, {!}B_N \to \prod \pi_q(\Psi)$
(here ${!}B_1, \ldots, {!}B_N$) is repeated $m$ times. Permutations and contractions yield  the needed sequent
 ${!}B_1, \ldots, {!}B_N \to \prod \pi_q(\Psi)$. \qed
\end{proof}

By Proposition~\ref{Prop:nostoupB}, we propagate this construction to $\LsysBa$ and $\LsysBar$.
Finally, for the original  Morrill's system $\LsysB$ we use the same trick as in Section~\ref{S:undec} (Proposition~\ref{Pr:internBx}),
adding an extra ${!}$ over $Z_{i,j}$. In a whole, this yields the necessary results.

\begin{theorem}
The derivability problems for one-division fragments (that is, fragments including $\SL$, ${!}$, brackets and bracket modalities) of
$\LsysBfl$, $\LsysBrfl$, $\LsysBa$, $\LsysBar$, and $\LsysB$ are undecidable.
\end{theorem}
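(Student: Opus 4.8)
The plan is to obtain all five undecidability statements from the generic Buszkowski-style encoding of Lemma~\ref{Lm:undecRRBuszko}, treating the calculi in three waves: first the two stoup-free base systems $\LsysBfl$ and $\LsysBrfl$, then the stoup versions $\LsysBa$ and $\LsysBar$ by transfer, and finally the original Morrill system $\LsysB$ via the extra-${!}$ device of Proposition~\ref{Pr:internBx}. For the base systems I would instantiate Lemma~\ref{Lm:undecRRBuszko} with the meta-formula $\Psi_{\Bc,\mathcal{V}}$ supplied by the $\mathcal{V}$-internalisation proposition proved immediately above, taking $\mathcal{V}$ to be the finite set of head variables $\bap,\bbp,\bcp$ (ranging over the productions $\pf$ of the type-0 grammar $\Gc$) together with $s$. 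The hypotheses of the lemma to be checked are $\pi_q$-soundness in $\ELM$ (already available for these systems), admissibility of $\SL L$ and $\SL R$ (the latter possibly under Lambek's restriction for $\LsysBrfl$), and the requirement that every formula of $\Psi_{\Bc,\mathcal{V}}$ be a ${!}$-formula licensing permutation. This last condition is exactly what the new construction secures: in $\Psi_{\Bc,\mathcal{V}}$ the bracketed constant $[[q]]$ of Proposition~\ref{Pr:internB} has been replaced by $!\PMod\PMod q$ and the product-bearing $Z_i$ by a purely division-built $Z_{i,j}=(\NMod(t_j\SL((t_j\SL{!}B_i)\SL{!}\PMod\PMod q)))\SL q$, so that $\Psi_{\Bc,\mathcal{V}}$ lies entirely inside the ${!}$-fragment over $\SL,\PMod,\NMod$.

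The one subtle point is that $\mathcal{V}$-internalisation, which weakens the landing rule of Definition~\ref{Df:intern} to succedents drawn from the finite set $\mathcal{V}$ (Definition~\ref{Df:Vintern}), still suffices to drive Lemma~\ref{Lm:undecRRBuszko}. I would verify this by inspecting the $1\Rightarrow 2\Rightarrow 3$ chain of that lemma: the landing rule is invoked only in the $2\Rightarrow 3$ step, and there the succedent is precisely the head variable $t$ of the B-rule being simulated, which by construction of $\mathbf{B}_\Gc$ is always one of $\bap,\bbp,\bcp,s$. Hence the chosen $\mathcal{V}$ makes the $t$-landing rule applicable wherever the proof needs it, and no landing with an arbitrary succedent is ever required. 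Combining the resulting equivalence with the undecidability of type-0 membership (Theorem~\ref{Th:Markov_Gc}) yields undecidability of the one-division fragments of $\LsysBfl$ and $\LsysBrfl$.

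For $\LsysBa$ and $\LsysBar$ I would appeal to Proposition~\ref{Prop:nostoupB}: since the encoded sequents carry empty stoups, their derivability in the stoup calculus is equivalent to derivability in the corresponding stoup-free calculus, so undecidability transfers verbatim. The case of $\LsysB$ is the genuinely delicate one, because its ${!}L$ rule is non-invertible and contraction applies only to formulae already sitting in a stoup; a direct use of $\Psi_{\Bc,\mathcal{V}}$ would force the left premises of the item-1 derivation into the unprovable shape $Z_{i,j};\Lambda\to{!}Z_{i,j}$. I would resolve this exactly as in Proposition~\ref{Pr:internBx} and Lemma~\ref{Lm:keyBx}, guarding each $Z_{i,j}$ with an additional ${!}$ (so that $!Z_{i,j};\Lambda\to{!}Z_{i,j}$ becomes derivable by $!P$), establishing items~1 and~2 of Definition~\ref{Df:Vintern} directly in $\LsysB$, and then routing the backward direction through $\LsysBfl$ via Proposition~\ref{Prop:nostoupBx}, cutting away the guarding ${!}$ using cut admissibility in $\LsysBfl$ (Proposition~\ref{Prop:nostoupB}) before invoking the $\ELM$ analysis of Lemma~\ref{Lm:undecRRBuszko}.

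The main obstacle I anticipate is not this assembly but the $t$-landing verification for the intricate one-division formula $Z_{i,j}$: one must check that contracting $!Z_{i,j}$, unfolding its nested divisions, its $\NMod$ and its double $\PMod$, and permuting the ${!}$-formulae of the $\Psi'$-part indeed deposits $B_i$ at the intended antecedent position while respecting the bracketing and, for $\LsysBrfl$, Lambek's restriction. At the level of the theorem this is cited from the $\mathcal{V}$-internalisation proposition, but it is the conceptual heart of the argument, since it is precisely where replacing the product by iterated division and the bare bracket $[[q]]$ by $!\PMod\PMod q$ must be shown to preserve the encoding within the one-division language.
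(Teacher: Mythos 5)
Your proposal is correct and follows essentially the same route as the paper: Lemma~\ref{Lm:undecRRBuszko} instantiated with the $\mathcal{V}$-internalising meta-formula $\Psi_{\Bc,\mathcal{V}}$ for the stoup-free systems, transfer to $\LsysBa$ and $\LsysBar$ via Proposition~\ref{Prop:nostoupB}, and the extra-${!}$ guard of Proposition~\ref{Pr:internBx} combined with Proposition~\ref{Prop:nostoupBx} for $\LsysB$. Your observation that the $t$-landing rule is only ever invoked with succedents drawn from the heads $\bap,\bbp,\bcp,s$ of the B-rules is precisely the paper's justification for weakening internalisation to $\mathcal{V}$-internalisation.
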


\begin{theorem}
For any r.e. language $M$ and for each of the calculi mentioned in the previous theorem there exists a categorial grammar based on the
given calculus which both s-recognises and t-recognises $M$, for calculi without Lambek's restriction ($\LsysBfl$, $\LsysBa$, and $\LsysB$), and
$M - \{\varepsilon\}$, in the case with Lambek's restriction ($\LsysBrfl$ and $\LsysBar$).
\end{theorem}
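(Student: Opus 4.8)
The plan is to mirror the generative-power argument behind Theorem~\ref{Th:gram_generic}, but carried out inside the one-division ($\SL$-only) fragment and driven by the $\mathcal{V}$-internalisation proposition established above rather than by ordinary internalisation. First I would fix, for the given r.e.\ language $M$, a type-0 grammar $\Gc$ generating it (Theorem~\ref{Th:Markov_Gc}), and let $\mathcal{V}$ be the finite set of all variables occurring as succedents of the Buszkowski rules of $\Gc$, namely $s$ together with the auxiliary variables $\bap,\bbp,\bcp$ ranging over the productions $\pf$. The proposition then supplies a meta-formula $\Psi_\Gc$ that $\mathcal{V}$-internalises $\Bc_\Gc$ in $\LsysBfl$ and in $\LsysBrfl$ (Definition~\ref{Df:Vintern}); crucially, every formula of $\Psi_\Gc$ is a $!$-formula for which the permutation rules are available, which is exactly the hypothesis Lemma~\ref{Lm:undecRRBuszko} needs. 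From that lemma I obtain the chain tying $s \Rightarrow^*_\Gc z_1 \ldots z_n$ to derivability of $\Psi_\Gc, z_1, \ldots, z_n \to s$ (for the trivial and for arbitrary bracketings) and, on the $\ELM$ side, to ${!}B_1, \ldots, {!}B_N, z_1, \ldots, z_n \to s$.

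The new ingredient is how to compress $\Psi_\Gc$ into a single goal formula without leaving the fragment: in the multiplicative setting of Theorem~\ref{Th:gram_generic} one takes $\prod \Psi_\Gc \BS s$, but here both product and left division are unavailable. Since every formula of $\Psi_\Gc = C_1, \ldots, C_\ell$ is a permutable $!$-formula, I would first slide all of $\Psi_\Gc$ to the right end of the antecedent and then strip the $C_i$ off one by one with $\SL R$, arriving at the purely right-divisional formula $H = (\cdots((s \SL C_\ell)\SL C_{\ell-1})\cdots)\SL C_1$. The categorial grammar then has the identity lexicon $a \rhd a$ on terminals and goal formula $H$; by construction a word $a_1 \ldots a_n$ is s-recognised (resp.\ t-recognised) exactly when $\Psi_\Gc, a_1, \ldots, a_n \to s$ is derivable for the trivial (resp.\ for some) bracketing, which by Lemma~\ref{Lm:undecRRBuszko} is equivalent to membership $s \Rightarrow^*_\Gc a_1 \ldots a_n$. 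The forward direction packs via $\SL R$; the backward direction applies $\pi_q$-soundness to descend into $\ELM$, inverts the $\SL R$ steps there by cut (the symmetric counterpart of the $\BS R$ inversion used in Theorem~\ref{Th:gram_generic}), cuts against item~3 of Definition~\ref{Df:Vintern}, and closes with the $\ELM$-to-grammar implication of the lemma. Lambek's restriction enters only at the final $\SL R$, whose non-emptiness side condition forces the residual antecedent $a_1, \ldots, a_n$ to be non-empty; since $H$ then contains no unit, this is exactly what confines $\LsysBrfl$ and $\LsysBar$ to $M - \{\varepsilon\}$.

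Finally I would propagate the result across the remaining systems. For the stoup calculi $\LsysBa$ and $\LsysBar$ the goal sequents of a grammar carry empty stoups, so derivability there coincides with derivability in $\LsysBfl$, $\LsysBrfl$ by Proposition~\ref{Prop:nostoupB}, and the same grammar works verbatim. For the original Morrill system $\LsysB$ the obstruction is the non-invertibility of $!L$, which would block the landing rule once the internalising formulae must be moved into the stoup; following the device of Proposition~\ref{Pr:internBx} and Lemma~\ref{Lm:keyBx} I would place an extra $!$ over each $Z_{i,j}$, so that $!Z_{i,j} \to {!}{!}Z_{i,j}$ becomes available and the stoup versions are derivable, verify items~1 and~2 of $\mathcal{V}$-internalisation directly in $\LsysB$, and recover the backward direction through Proposition~\ref{Prop:nostoupBx}, which transports $\LsysB$-derivability into $\LsysBfl$.

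The main obstacle I anticipate is precisely the $\SL$-only compression of $\Psi_\Gc$ and its interaction with the two soundness bookkeeping demands: one must confirm that the rightward permutation and the iterated $\SL R$ remain legal under Lambek's restriction at every intermediate stage, not merely the last, and that after applying $\pi_q$ the inverted goal formula still cuts cleanly against the $\ELM$-derivable sequent of item~3 --- delicate because $\pi_q$ turns $q$ into the unit $\U$, which is present in $\ELM$ but absent from the restricted target calculi, so the unit may appear in the soundness argument yet must never appear in the grammar's goal formula. Getting the $\LsysB$ case right, where the extra $!$ must simultaneously rescue item~1 and not spoil the landing rule of item~2, is the other point requiring care.
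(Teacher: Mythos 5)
Your proposal is correct and follows essentially the same route the paper (implicitly) takes: $\mathcal{V}$-internalisation combined with Lemma~\ref{Lm:undecRRBuszko}, an identity lexicon with all information packed into the goal formula as in Theorem~\ref{Th:gram_generic}, propagation to the stoup systems via Proposition~\ref{Prop:nostoupB}, and the extra-${!}$ device of Proposition~\ref{Pr:internBx} for $\LsysB$. Your iterated-$\SL R$ compression of $\Psi_{\Gc}$ into the goal formula $(\cdots(s \SL C_\ell)\cdots)\SL C_1$ --- legitimate precisely because every formula of $\Psi_{\Gc}$ is a permutable ${!}$-formula, unlike the $\Phi$ of Proposition~\ref{Pr:internB} with its bracketed $[[q]]$ --- correctly supplies the one detail the paper glosses over (the goal $\prod\Phi_{\Gc} \BS s$ of Theorem~\ref{Th:gram_generic} uses product and $\BS$, which are unavailable in the one-division fragment), and your observation that Lambek's restriction bites only at the last $\SL R$ application, excising exactly $\varepsilon$, matches the paper's intent.
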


\section{Conclusions and Future Work}\label{S:conclusion}

In this article, we have performed the logical analysis of two systems introduced by Morrill as a base for the CatLog categorial grammar
parser, $\sysA$ and $\sysB$. We have pointed out issues with cut elimination in these systems, and provided necessary modification for which cut elimination is proved.
We also discussed how Lambek's non-empti\-ness restriction can be imposed on $\sysB$. From the algorithmic point of view, we have proved undecidability for each of
Morrill's systems, even in the smallest possible language with only one division, brackets and bracket modalities, and the subexponential. Moreover, we have
shown that categorial grammars based on Morrill's calculi can generate arbitrary recursively enumerable languages (in the case with Lambek's restriction---arbitrary
r.e. languages without the empty word).

One of the most interesting questions for future research is as follows. The undecidability results presented in this article look unfortunate, since
the calculi it is applied to are intended to be used in natural language parsing software. 

Thus, it is an important task to explore fragments of the calculi,
guarded by certain syntactic conditions on applying $!$, for which the derivability problem is decidable. For systems without brackets, in particular,
$\LLs$, such an algorithm exists under the condition that $!$ is applied only to variables~\citep{KanKuzSceFG}. The complexity of this algorithm
is the same as for the calculus without ${!}$: NP for $\LLsM$ and PSPACE for $\LLs$. Moreover, for $\ELM$  decidability is known for a broader class
of formulae allowed under ${!}$, namely, formulae of implication depth 1, that is, of the form $p_1 \ldots p_k \BS q \SL r_1 \ldots r_m$~\citep{Fofanova2018}.
Extending this result to $\EL$ and $\LLs$ is still an open question.

For systems with brackets, the class of formulae for which the derivability problem becomes decidable appears to be much broader. For Morrill's first 
system, $\sysA$, this class is guarded by so-called bracket non-negative condition (BNNC) imposed on ${!}$-formulae. Under this condition, ${!}$ can be applied
to any formula which does not include negative occurrences of $\PMod$ and does not include positive occurrences of $\NMod$. In particular, ${!}$ is allowed
to be applied to any formula which does not include bracket modalities at all, no matter how complex this formula is. \citet{MorrillValentin} show that
the derivability problem in $\sysA$ for sequents obeying BNNC is decidable; \citet{KanKuzSceFCT} establish an NP upper complexity bound for its fragment without additives, also with BNNC imposed.
We conjecture that for the full system, including additives, the complexity bound is PSPACE. These complexity boundaries are tight, since the multiplicative-only Lambek calculus is already NP-complete~\citep{PentusNP} and $\MALC$ is PSPACE-complete~\citep{KanovichKazimierz}. 
For Morrill's  second system, $\sysB$, formulating
the corresponding version of BNNC and proving decidability for the fragment guarded by this new condition is a problem for further investigation.

Another, potentially simpler but more technical question left for further research is the question of extending our cut elimination proof to calculi with discontinuous operations~\citep{MorValDispl}. We conjecture that the proof could be obtained as a combination of our proof (using ``deep cut elimination'' for ${!}$-formulae) presented here and the proof by~\citet{MorValDispl} for displacement calculus. The notations, however, would become extremely complicated---thus, a digestable presentation of such a proof becomes a separate challenge.

\paragraph*{Acknowledgements.}\
We are grateful to Glyn Morrill for a number of very helpful interactions we benefited from at various stages of this work. 
The work of Max Kanovich was partially supported by
EPSRC Programme Grant EP/R006865/1: ``Interface Reasoning for Interacting
Systems (IRIS).'' The work of Andre Scedrov and Stepan Kuznetsov was
prepared within the framework of the HSE University Basic Research Program
and partially funded by the Russian Academic Excellence Project `5--100.' The
work of Stepan Kuznetsov was also partially supported by the Council of the President of Russia for Support of Young Russian Researchers and Leading Research Schools of the Russian Federation, by the Young Russian Mathematics Award, and by
the Russian Foundation for Basic Research grant 20-01-00435.


\bibliographystyle{bbs}
\bibliography{subexponential.bib}


\end{document}